\journalname{}
\begin{document}

\title{Hyper Temporal Networks}
\subtitle{A Tractable Generalization of Simple Temporal Networks and its relation to Mean Payoff Games}

\author{Carlo~Comin\and Roberto~Posenato\and Romeo~Rizzi}
\institute{
	Carlo Comin \at Department of Mathematics, 
	University of Trento, Trento, Italy \emph{and}
	LIGM, Universit{\'e} Paris-Est in Marne-la-Vall{\'e}e, Paris, France.
	Supported by Department of Computer Science, University of Verona 
	under PhD grant “Computational Mathematics and Biology”
	on a co-tutelle agreement with Universit\'e Paris-Est 
	in Marne-la-Vall\'ee. \\ \email{carlo.comin@unitn.it}
	\and
	Roberto Posenato, Romeo Rizzi \at Department of Computer Science, University of Verona, Verona, Italy \\ \email{\{roberto.posenato, romeo.rizzi\}@univr.it}
}

\date{}
\maketitle

\begin{abstract}
Simple Temporal Networks (\STN{s}) provide a powerful and general tool for representing conjunctions of maximum delay constraints over ordered pairs of temporal variables.
In this paper we introduce Hyper Temporal Networks (\TN{s}), a strict generalization of \STN{s}, to overcome the limitation of considering only conjunctions of constraints but maintaining a practical efficiency in the consistency check of the instances.
In a Hyper Temporal Network a single temporal hyperarc constraint may be defined as a set of two or more maximum delay constraints which is satisfied when at least one of these delay constraints is satisfied.
\TN{s} are meant as a light generalization of \STN{s} offering an interesting compromise.
On one side, there exist practical pseudo-polynomial time algorithms for checking consistency and computing feasible schedules for \TN{s}.
On the other side, \TN{s} offer a more powerful model accommodating natural constraints that cannot be expressed by \STN{s} like \textit{``Trigger off exactly $\delta$ min before (after) the occurrence of the first (last) event in a set.''}, which are used to represent synchronization events in some process aware information systems\slash workflow models proposed in the literature.


\keywords{Simple Temporal Networks \and Temporal Consistency \and Hypergraphs \and Pseudo-polynomial Time Algorithms \and Mean Payoff Games \and Disjunctive Temporal Problems \and Workflows}
\end{abstract}

\section{Introduction}%
\label{sect:introduction}
In many areas of Artificial Intelligence (AI), including planning, scheduling and workflow management systems, the representation and management of quantitative temporal aspects is of crucial importance~\cite{Pani:2001tb,SmithFJ00,EderPR99,BettiniWJ02,CombiGPP12,CombiGMP14}. 
Examples of possible quantitative temporal aspects are: constraints on the earliest start time and latest end time of activities, constraints over the minimum and maximum temporal distance between activities, etc.

In many cases these constraints can be represented as an instance of a \textit{Simple Temporal Network (\STN)}~\cite{DechterMP91}, a directed weighted graph where each node represents a time-point variable (timepoint), usually corresponding to the beginning or the end of an activity,
and each arc specifies a binary constraint on the scheduling times to be assigned to its endpoints.
In~\cite{DechterMP91}, each arc is labeled with a closed interval of real values:
for example, the labeled arc $u\stackrel{[x,y]}{\longrightarrow}v$ encodes the binary constraint $x \leq v - u \leq y$ over its endpoints $u$ and $v$. 
A more uniform and elementary representation of an \STN is provided by its \textit{distance graph}\footnote{Distance graph is also called \textit{constraint graph} by other authors~\cite{Cormen01}. Moreover, Bellman~\cite{Bellman58} was the first to describe the relation between shortest paths and difference constraints in a constraint graph.}~\cite{DechterMP91},
a graph having the same set of nodes as the original one,
but where each arc $u\stackrel{[x,y]}{\longrightarrow}v$ is replaced by two arcs, each labeled with a single real value:
arc $u\stackrel{y}{\longrightarrow}v$ to express the constraint $v - u\leq y$, 
and arc $v\stackrel{-x}{\longrightarrow}u$ to express the constraint $u -v\leq -x$, \ie $x \leq v - u$.
An \STN is said to be \textit{consistent} if it is possible to assign a real value to each timepoint so that all temporal constraints are satisfied.
The consistency property can be verified by searching for negative cycles in the distance graph and it is well known that the consistency check and the determination of the earliest\slash latest value for each timepoint can be done in polynomial time~\cite{DechterMP91}.

However, \STN{s} do not allow the expression of constraints like \textit{``trigger off an event exactly $\delta$ min after the occurrence of the last of its predecessors''}, which are a quite natural constraints to represent synchronization events in a process aware information system plan\slash workflow schema~\cite{wfmc-modello}.
This is because in \STN{s}, and in some of their natural extensions, (1) it is not possible to represent a single constraint involving more than two timepoints and (2) all constraints have to be satisfied in order to have the network consistent. 
On the contrary, the above constraint about a synchronization event can be represented as a set of distance constraints, each involving a different pair of timepoints, that is considered satisfied when at least one of set components is satisfied.
In order to represent and analyze disjunctive constraints like the above one, it is then necessary to consider models like \textit{Disjunctive Temporal Problem} (DTP)~\cite{StergiouK00} where a constraint is a set of disjunctive difference constraints over the timepoints. 
The drawback of such model is that the consistency check problem is \NP-complete~\cite{StergiouK00}.

\subsection{Contribution}
In this article we propose to generalize \STN to \emph{Hyper Temporal Network} (\TN), which allows also the expression of constraints like the above one regarding synchronization events, but where the consistency check is amenable of effective solution algorithms.

Moreover, we show an interesting link between the consistency check of \TN{s} and 
resolution in Mean Payoff Games (\MPG),
a family of perfect information games played on graphs by two opponents~\cite{EhrenfeuchtMycielski:1979}, for which some pseudo-polynomial time algorithms for determining winning strategies are known~\cite{ZwickPaterson:1996,brim2011faster}.

A preliminary version of this article appeared in the proceedings of TIME symposium~\cite{CominPR14}. Here we extend the presentation as follows: 
(1) the definition of \TN has been extended in order to allow the presence of two kinds of hyperarcs; 
(2) the motivating example section has been revised to show how the new kind of hyperarc can be used;
(3) some further issues and pertinent properties about \TN have been introduced and proved;
(4) several proofs have been expanded and clarified;
(5) the experimental analysis of the consistency check algorithm has been improved considering more recent algorithms~\cite{brim2011faster} for finding winning strategies for \MPG{s}. 
This has improved the performances dramatically.

\paragraph{Organization}
The remainder of the article is organized as follows.
In Section~\ref{sect:motivating} we present a motivating example from the domain of the workflow-based process management to bring out \TN{s}.
Section \ref{sect:background} introduces some definitions and well-known results for \STN{s} and introduces some definitions about hypergraphs.
The generalization of \STN{s} into \TN{s} and the definition of consistency problem for \TN{s} are presented in Section~\ref{sect:problemFormulation}.
In Section~\ref{sect:meanpayoffgames} we recall the main facts and results about Mean Payoff Games which are useful for the following sections.
Section~\ref{sect:reductions} presents the investigation into the link between the \TN consistency problem and Mean Payoff Games deriving pseudo-polynomial time algorithms for checking the consistency of \TN{s} and computing feasible schedules whenever they exist.
Some empirical evaluations of the proposed algorithms are reported in Section~\ref{sect:experiments}.
In Section~\ref{sect:relatedwork}, some related works are presented and discussed with respect to our approach.
Section~\ref{sect:conclusions} summarizes the main facts brought to light in this article and presents a possible future development of the work we are currently carrying on.

\section{Motivating Examples}%
\label{sect:motivating}
In the introduction we have briefly recalled a kind of constraint that cannot be expressed within \STN{s}.
In this section, we describe in more detail two examples of temporal constraints that cannot be fully described in an \STN in order to introduce and motivate
the new expressive capability of our model.
As a further motivation, at the end of the section we also spotlight how this new capability has
been recently exploited to check the consistency of Conditional Simple Temporal Networks (CSTNs)~\cite{TVP03} in a more efficient way.

Let us consider an example in the domain of the workflow-based process management, a domain concerned with the coordination and control of business processes
using information technology.
A \textit{workflow} is a representation of a business process as the coordinated execution of activities by human or automatic executors (agents).
A \textit{Workflow management system} (WfMS) is a software system that supports the automatic execution of workflows~\cite{wfmc-modello}.
In a WfMS, the management of temporal aspects is a critical component and in the literature there are many proposals on how to extend a workflow in order to
represent and manage temporal constraints of a business
process~\cite{EderPR99,ChinnMadey00,EderGP00,BettiniWJ02,CombiP04,GonzalezR08,CombiP09,CombiGPP12,CombiGMP14}.
In particular, in \cite{EderPR99,ChinnMadey00,EderGP00,CombiP04,BettiniWJ02} authors show how to represent and manage some kinds of temporal constraints using
specific algorithms, while in \cite{GonzalezR08,CombiP09,CombiGPP12,CombiGMP14} authors show how it is possible to represent and manage a wider class of
temporal constraints exploiting models like Time Petri Nets~\cite{MerlinF76} or \STN{s}\slash STNUs~\cite{MorrisMV01}.

In this paper we consider an excerpt of the conceptual temporal model proposed by Combi~\etal~\cite{CombiP09}, where the specification of a temporal workflow is
given by a \textit{workflow schema}, a directed graph (also called workflow graph) where nodes correspond to activities and arcs represent control flows that
define activity dependencies on the order of execution. Both nodes and arcs may be associated to temporal ranges to specify temporal constraints.
There are two different types of activity: tasks and connectors.
Tasks represent elementary work units that will be executed by external agents.
Each task is graphically represented by a box containing a name and a temporal range that specifies the allowed temporal span for its execution.
Connectors represent internal activities executed by the WfMS to achieve a correct and coordinated execution of tasks.
They are graphically represented by diamonds and, as with tasks, each of them has a temporal range that gives the temporal span allowed to the WfMS for
executing it.
Every arc has a temporal property that gives the allowed times that can be spent by the WfMS for possibly delaying the consideration of the next activity after
the end of the previous one.
There are different kinds of connector that allow one to modify a control flow.
\textit{Split} connectors are nodes with one incoming arc and two or more outgoing arcs: after the execution of the predecessor, (possibly) several successors
have to be considered for the execution. The set of nodes that can start their execution is given by the kind of split connector.
A split connector can be: Parallel, Alternative or Conditional. \textit{Join} connectors are nodes with two or more incoming arcs and one outgoing arc only.
The {types} of {activities} considered in \cite{CombiP09} are a subset of the possible activities specified by the Workflow Management
Coalition~\cite{wfmc-modello,vanDerAalst03}.

\figref{FIG:wf} shows a simple workflow schema where the Parallel connector $\sAND_1$ splits the flow into three parallel flows of execution (one for the
sequence of tasks \textsf{$T_1$} and \textsf{$T_2$}, one for task \textsf{$T_3$}, and one for task \textsf{$T_4$} and \textsf{$T_5$}) that have to be joined
(synchronized) by the AND join connector $\sAND_2$ before continuing the execution; all temporal ranges are in minutes.
\begin{figure}[tb] \centering \begin{tikzpicture}[arrows=->,scale=1,node distance=.65 and 1]
    \node[connector] (Tot) {\TOTText{1}{$[1,5]$}}; \node[task,above right=of Tot] (T1) {\TaskText{$T_1$}{$[1,50]$}}; \node[task,right=of T1] (T2)
    {\TaskText{$T_2$}{$[14,20]$}}; \node[task,right=of Tot] (T3) {\TaskText{$T_3$}{$[5,10]$}}; \node[task,below right=of Tot] (T4) {\TaskText{$T_4$}{$[8,40]$}};
    \node[task,right=of T4] (T5) {\TaskText{$T_5$}{$[10,25]$}}; \node[connector,below right=of T2] (And) {\TOTText{2}{$[1,2]$}};
    \draw[dotted] (Tot)++(-1.2,0) -- (Tot); \draw (Tot.north) |- node[timeLabel,above,pos=.5] {$[1,2]$} (T1); \draw (Tot) -- node[timeLabel,above,pos=.5]
    {$[1,2]$} (T3) ; \draw (Tot.south) |- node[timeLabel,below,pos=.5] {$[1,2]$} (T4); \draw (T1) -- node[timeLabel,above,pos=.5] {$[1,2]$}  (T2); \draw (T3) --
    (And) node[timeLabel,above,pos=.5] {$[1,5]$}; \draw (T4) -- node[timeLabel,above,pos=.5] {$[1,2]$}  (T5); \draw (T2) -| (And.north)
    node[timeLabel,above,pos=.5] {$[2,6]$};; \draw (T5) -| (And.south) node[timeLabel,below,pos=.5] {$[5,10]$};
    \draw[dotted] (And) -- ++(1.2,0);
\end{tikzpicture}
\caption{A simple workflow schema excerpt with three parallel flows of execution.}\label{FIG:wf}
\end{figure}
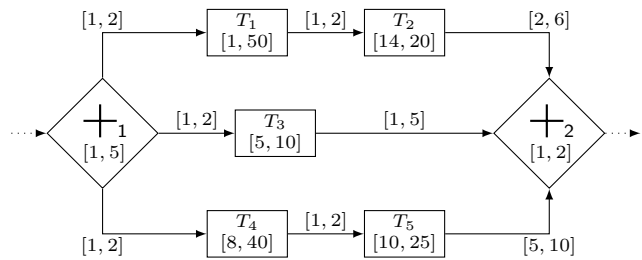

Let us consider the connector $\sAND_2$; according to the recommendations from the Workflow Management Coalition (WfMC)~\cite{wfmc-modello} and the temporal
specification from~\cite{CombiP09}, the execution of this connector requires to wait all incoming flows and, after the last incoming flow, to wait a time
according to the connector temporal range before following the outgoing arc.
In other words, the incoming flows can arrive at different instants but only when the last one arrives, the connector has to be activated in order to continue
with the execution.

Combi~\etal~\cite{CombiP09} proposed a method to translate workflow schemata to \STN{s}\slash STNUs~\cite{MorrisMV01} in order to analyze and validate all
temporal aspects in a rigorous way.
As already noted in~\cite{CombiGMP12} and~\cite{CombiGMP14}, such translation cannot specifically represent the behavior of an AND join connector, because the
kind of constraints in an \STN \slash STNU is limited.
Therefore, in~\cite{CombiGMP14}, the authors proposed an adjustment of the translation of an AND join connector introducing for each incoming arc of the
connector a \textit{buffer} node connected with some determined new arcs and assuming a reasonable but fixed execution algorithm for the \STN.
In more detail, let us consider \figref{FIG:wf-stn} that depicts the representation of workflow of \figref{FIG:wf} by means of an \STN following partially the
method described in~\cite{CombiGMP14} (without loss of generality, here we convert task constraints as \STN arcs instead of STNU contingent ones because we are
interested only in the AND join conversion).
Each activity of the workflow is represented by two \STN nodes, one to represent the begin timepoint, $B_i$, one for the end one, $E_i$, and temporal ranges in
the workflow are represented by \STN arc labels.
Regarding the translation of the AND join node $\sAND_2$, nodes representing the task endings on parallel flows, $E_{T_2}$, $E_{T_3}$, and $E_{T_5}$, are
connected to \textit{buffer} nodes  $b_{1}$, $b_{2}$, and $b_{3}$ that allow the parallel flows to complete their execution following only their temporal
constraints.
Then,  $b_{1}$, $b_{2}$, and $b_{3}$ are connected to node $B_{\sAND_2}$ (which represents the begin instant of the AND join connector) by temporal constraints
$[0,t_1], [0,t_2]$ and $[0,t_3]$, where
the values $t_1, t_2,$ and $t_3$ are determined during the workflow-to-\STN conversion as explained in~\cite{CombiGMP14}.
\begin{figure}[tb]
  \centering
  \begin{tikzpicture}[arrows=->,scale=1,node distance=1 and 1]
	\node[node,smallLabel] (a1b) {$B_{\sAND_1}$};
   	\node[node,below=of a1b,smallLabel] (a1e) {$E_{\sAND_1}$};
    \node[node,above right=of a1e,smallLabel] (t1b) {$B_{T_1}$};
	\node[node,right=of t1b,smallLabel] (t1e) {$E_{T_1}$};
	\node[node,right=of t1e,smallLabel] (t2b) {$B_{T_2}$};
    \node[node,right=of t2b,smallLabel] (t2e) {$E_{T_2}$};
    \node[node,right=4.3 of a1e,smallLabel] (t3b) {$B_{T_3}$};
	\node[node,right=of t3b,smallLabel] (t3e) {$E_{T_3}$};
	\node[node,below right=of a1e,smallLabel] (t4b) {$B_{T_4}$};
	\node[node,right=of t4b,smallLabel] (t4e) {$E_{T_4}$};
    \node[node,right=of t4e,smallLabel] (t5b) {$B_{T_5}$};
    \node[node,right=of t5b,smallLabel] (t5e) {$E_{T_5}$};
	\node[node,right=of t2e,smallLabel] (w2) {$b_{1}$};
	\node[node,right=of t3e,smallLabel] (w3) {$b_{2}$};
	\node[node,right=of t5e,smallLabel] (w4) {$b_{3}$};
    \node[node,right=of w3,smallLabel] (a2b) {$B_{\sAND_2}$};
	\node[node,below=of a2b,smallLabel] (a2e) {$E_{\sAND_2}$};
	\draw[dotted] (a1b)++(-1,0) -- (a1b);
   	\draw[dotted] (a2e) -- ++(.9,0);
	\draw[] (a1b) to [bend left=15] node[timeLabel,right] {$5$} (a1e);%
   	\draw[] (a1e) to [bend left=15] node[timeLabel,left] {$-1$} (a1b);%
   	\draw[] (a2b) to [bend left=15] node[timeLabel,right] {$2$} (a2e);%
   	\draw[] (a2e) to [bend left=15] node[timeLabel,left] {$-1$} (a2b);%
   	\draw[] (a1e) to [bend left=15] node[timeLabel,above,sloped] {$2$} (t1b);%
   	\draw[] (t1b) to [bend left=15] node[timeLabel,below,sloped,pos=.2] {$-1$} (a1e);%
   	\draw[] (a1e) to [bend left=15] node[timeLabel,above,sloped] {$2$} (t3b);%
   	\draw[] (t3b) to [bend left=15] node[timeLabel,below,sloped,pos=.2] {$-1$} (a1e);%
   	\draw[] (a1e) to [bend left=15] node[timeLabel,above,sloped,pos=.7] {$2$} (t4b);%
   	\draw[] (t4b) to [bend left=15] node[timeLabel,below,sloped] {$-1$} (a1e);%
   	\draw[] (t1b) to [bend left=15] node[timeLabel,above,sloped] {$50$} (t1e);%
   	\draw[] (t1e) to [bend left=15] node[timeLabel,below,sloped] {$-1$} (t1b);%
   	\draw[] (t2b) to [bend left=15] node[timeLabel,above,sloped] {$20$} (t2e);%
   	\draw[] (t2e) to [bend left=15] node[timeLabel,below,sloped] {$-14$} (t2b);%
   	\draw[] (t3b) to [bend left=15] node[timeLabel,above,sloped] {$10$} (t3e);%
   	\draw[] (t3e) to [bend left=15] node[timeLabel,below,sloped] {$-5$} (t3b);%
   	\draw[] (t4b) to [bend left=15] node[timeLabel,above,sloped] {$40$} (t4e);%
   	\draw[] (t4e) to [bend left=15] node[timeLabel,below,sloped] {$-8$} (t4b);%
   	\draw[] (t5b) to [bend left=15] node[timeLabel,above,sloped] {$25$} (t5e);%
   	\draw[] (t5e) to [bend left=15] node[timeLabel,below,sloped] {$-10$} (t5b);%
	\draw[] (t1e) to [bend left=15] node[timeLabel,above,sloped] {$2$} (t2b);%
   	\draw[] (t2b) to [bend left=15] node[timeLabel,below,sloped] {$-1$} (t1e);%
	\draw[] (t4e) to [bend left=15] node[timeLabel,above,sloped] {$2$} (t5b);%
   	\draw[] (t5b) to [bend left=15] node[timeLabel,below,sloped] {$-1$} (t4e);%
   	\draw[] (t2e) to [bend left=15] node[timeLabel,above,sloped] {$6$} (w2);%
   	\draw[] (w2) to [bend left=15] node[timeLabel,below,sloped] {$-2$} (t2e);%
   	\draw[] (t3e) to [bend left=15] node[timeLabel,above,sloped] {$5$} (w3);%
   	\draw[] (w3) to [bend left=15] node[timeLabel,below,sloped] {$-1$} (t3e);%
   	\draw[] (t5e) to [bend left=15] node[timeLabel,above,sloped,pos=.4] {$10$} (w4);%
   	\draw[] (w4) to [bend left=15] node[timeLabel,below,sloped] {$-5$} (t5e);%
   	\draw[] (w2) to [bend left=15] node[timeLabel,above] {$t_1$} (a2b);%
   	\draw[] (a2b) to [bend left=15] node[timeLabel,above] {$0$} (w2);%
   	\draw[] (w3) to [bend left=10] node[timeLabel,above,sloped] {$t_2$} (a2b);%
   	\draw[] (a2b) to [bend left=10] node[timeLabel,below,sloped] {$0$} (w3);%
   	\draw[] (w4) to [bend left=15] node[timeLabel,above] {$t_3$} (a2b);%
   	\draw[] (a2b) to [bend left=15] node[timeLabel,below] {$0$} (w4);%
   	
   	\draw[dotted,-,rounded corners=10] (t2e.north west)++(-2ex,2ex) -- ($(w2.north east)+(1ex,2ex)$) -- ($(a2b.east)+(2ex,0)$) 
   		-- ($(w4.south east)+(0ex,-2ex)$) -- ($(t5e.south west)+(-2ex,-2ex)$) -- cycle; 
\end{tikzpicture}
\caption{An \STN representing temporal aspects of the workflow depicted in \figref{FIG:wf}. The dotted region emphasizes, within the workflow excerpt, the connections to an AND join connector.}
\label{FIG:wf-stn}
\end{figure}

Now, let us consider a possible execution scenario.
If $b_1$, $b_2$, and $b_3$ occur all together at instant~20, then, following the proposed temporal semantics~\cite{CombiP09}, the only possible instant value
for $B_{\sAND_2}$ must be~20 while the updated \STN allows any value in the range $[20,20+\min\{t_1,t_2,t_3\}]$.
In~\cite{CombiGMP14}, the authors showed that the right value is always the lower bound of such extended range and, therefore, it is sufficient to adopt an
early execution strategy in order to choose the right value for timepoint $B_{\sAND_2}$.

In other words, the proposed translation has two drawbacks: (1) it requires some preliminary computations for determining $t_1, t_2,$ and $t_3$ values, and (2)
the resulting {\STN} admits some solutions that are not admissible by the semantics of the AND join connector.


To specifically represent the behavior of an AND join connector with respect to its predecessor time points without auxiliary conditions or analysis, it is
necessary to introduce a new kind of constraint based on hyperarcs, as shown in \figref{FIG:wf-hstn}.
In the figure the \textit{multi-tail hyperarc} $A$ consists of three dashed arcs---called \textit{components}---and replaces the arcs from $b_i\; (i=1,2,3)$ to
$B_{\sAND_2}$ of \figref{FIG:wf-stn}.
\begin{figure}[tb] 
\centering 
\begin{tikzpicture}[arrows=->,scale=1,node distance=.95 and 1] 
	\node[node] (t2e) {$E_{T_2}$};
    \node[node,below=of t2e] (t3e) {$E_{T_3}$}; \node[node,below=of t3e] (t4e) {$E_{T_5}$};
	\node[node,left=of t3e] (a1) {$E_{\sAND_1}$};
    \node[node,right=of t2e] (w2) {$b_1$}; \node[node,right=of t3e] (w3) {$b_2$}; \node[node,right=of t4e] (w4) {$b_3$};
	\node[node,right=of w3] (a2b) {$B_{\sAND_2}$};
	
    \draw[] (t2e) to [bend left] node[timeLabel,above,sloped] {$6$} (w2);%
    \draw[] (t3e) to [bend left] node[timeLabel,above,sloped] {$5$} (w3);%
    \draw[] (t4e) to [bend left] node[timeLabel,above,sloped] {$10$} (w4);%
    \draw[] (w2) to [bend left] node[timeLabel,below,sloped] {$-2$} (t2e);%
    \draw[] (w3) to [bend left] node[timeLabel,below,sloped] {$-1$} (t3e);%
    \draw[] (w4) to [bend left] node[timeLabel,below,sloped] {$-5$} (t4e);%
      	
   	\draw[dashed] (w2) to [bend left=30] node[timeLabel,above,sloped] {$A, 0$} (a2b);%
   	\draw[dashed] (w3) to [bend left=15] node[timeLabel,above,sloped] {$A, 0$} (a2b);%
   	\draw[dashed] (w4) to [bend left=-15] node[timeLabel,above,sloped] {$A, 0$} (a2b);%
   	\draw[] (a2b) to [bend left=-15] node[timeLabel,below,sloped] {$0$} (w2);%
    \draw[] (a2b) to [bend left=15] node[timeLabel,below,sloped] {$0$} (w3);%
   	\draw[] (a2b) to [bend left=30] node[timeLabel,below,sloped] {$0$} (w4);%
   	
   	\draw[] (a1) to [bend left=15] node[timeLabel,above,sloped] {$74$} (t2e);%
	\draw[] (t2e) to [bend left=15] node[timeLabel,above,sloped] {$-17$} (a1);%
	\draw[] (a1) to [bend left=10] node[timeLabel,above,sloped] {$12$} (t3e);%
	\draw[] (t3e) to [bend left=10] node[timeLabel,below,sloped] {$-6$} (a1);%
	\draw[] (a1) to [bend left=15] node[timeLabel,above,sloped] {$69$} (t4e);%
	\draw[] (t4e) to [bend left=15] node[timeLabel,above,sloped] {$-20$} (a1);%
	
   	\draw[dotted] (a1)++(-1,.1) to [bend left=15] (a1);%
	\draw[dotted] (a1) to [bend left=15] ($(a1)+(-1,-.1)$);%
	\draw[dotted] (a2b)++(1,-.1) to [bend left=15] (a2b);%
	\draw[dotted] (a2b) to [bend left=15] ($(a2b)+(1,.1)$);%
	
	\draw[dotted,-,rounded corners=10] (t2e.north west)++(-2ex,2ex) -- ($(w2.north east)+(1ex,2ex)$) 
		-- ($(a2b.east)+(3ex,0ex)$) 
   		-- ($(w4.south east)+(1ex,-2ex)$) -- ($(t4e.south west)+(-2ex,-2ex)$) -- cycle; 
\end{tikzpicture}
\caption{An augmented \STN, that we call \TN, where dashed arcs represent components of hyperarcs, a new kind of constraint.
	This \TN improves the representation of the \STN of \figref{FIG:wf-stn}. 
	To emphasize the changes, here we have summed all arcs outside the dotted region.}%
\label{FIG:wf-hstn}
\end{figure}
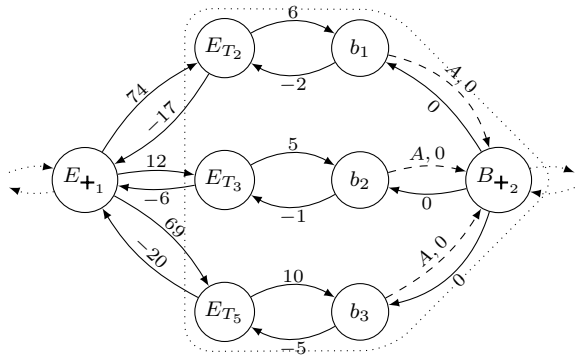
We say that a multi-tail hyperarc is satisfied if at least one of its components is satisfied.
In \figref{FIG:wf-hstn} dashed arcs define the hyperarc $A$ that is satisfied if $B_{\sAND_2}$ is $0$ distant from at least one time point among $b_1, b_2,$ and $b_3$.
Since $B_{\sAND_2}$ is constrained to occur at the same instant or after each time point $b_1, b_2,$ and $b_3$ by the arcs between $B_{\sAND_2}$ and $b_i,\, i=1,2,3$, the result is that to satisfy $A$ it is necessary that $B_{\sAND_2}$ occurs at the same instant of the last time point among $b_1, b_2, $ and $b_3$, as required originally.
In more general, a multi-tail hyperarc is defined as a set of distance constraints (components) between some time points and a common end point. 

The use of hyperarcs allows also the representation of temporal aspect of other advanced connectors as, for example, the Structured
Discriminator~\cite{vanDerAalst03}.
The Structured Discriminator connector provides a means of merging two or more distinct flows in a workflow instance into a single subsequent. In particular, it
triggers the subsequent flow as soon as the the first incoming flow arrives. The arrival of other incoming flows thereafter have no effect on the subsequent flow.
As such, the Structured Discriminator provides a mechanism for progressing the execution of a process once the first of a series of concurrent tasks has
completed and according to the connector temporal range. \figref{fig.wp9a} depicts an excerpt of a workflow schema containing a structured discriminator
connector, \protect\sDIS, that joins three parallel flows.

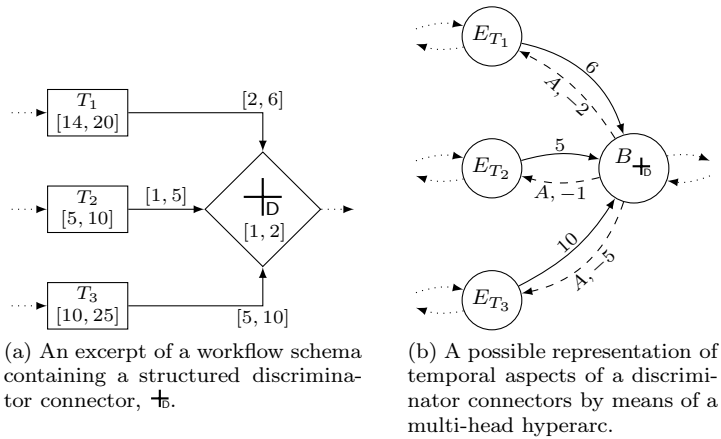
\begin{figure}[tb]
\centering
\subfloat[An excerpt of a workflow schema containing a structured discriminator connector, \protect\sDIS.]{
\begin{tikzpicture}[arrows=->,scale=1,node distance=.65 and 1]
    \node[task] (T1) {\TaskText{$T_1$}{$[14,20]$}};
    \node[task,below=of T1] (T2) {\TaskText{$T_2$}{$[5,10]$}};
    \node[task,below=of T2] (T3) {\TaskText{$T_3$}{$[10,25]$}};
   \node[connector,right=of T2] (DIS) {\DISText{}{$[1,2]$}};

    \draw[dotted] (T1)++(-1,0)--(T1);
    \draw[dotted] (T2)++(-1,0)--(T2);
    \draw[dotted] (T3)++(-1,0)--(T3);
    \draw (T2) -- (DIS) node[timeLabel,above,pos=.5] {$[1,5]$};
    \draw (T1) -| (DIS.north) node[timeLabel,above,pos=.5] {$[2,6]$};;
    \draw (T3) -| (DIS.south) node[timeLabel,below,pos=.5] {$[5,10]$};
    \draw[dotted] (DIS) -- ++(1.2,0);
\end{tikzpicture}\label{fig.wp9a}
}
\quad\quad
\subfloat[A possible representation of temporal aspects of a discriminator connectors by means of a multi-head hyperarc.]{
\begin{tikzpicture}[arrows=->,scale=1,node distance=.95 and 1]
    \node[node] (T1) {$E_{T_1}$};
    \node[node,below=of T1] (T2) {$E_{T_2}$};
    \node[node,below=of T2] (T3) {$E_{T_3}$};
	\node[node,right=of T2] (a2b) {$B_{\sDIS}$};
	
   	\draw[] (T1) to [bend left=30] node[timeLabel,above,sloped] {$6$} (a2b);%
   	\draw[] (T2) to [bend left=15] node[timeLabel,above,sloped] {$5$} (a2b);%
   	\draw[] (T3) to [bend left=-15] node[timeLabel,above,sloped] {$10$} (a2b);%
   	\draw[dashed] (a2b) to [bend left=-15] node[timeLabel,below,sloped] {$A,-2$} (T1);%
    \draw[dashed] (a2b) to [bend left=15] node[timeLabel,below,sloped] {$A,-1$} (T2);%
   	\draw[dashed] (a2b) to [bend left=30] node[timeLabel,below,sloped] {$A,-5$} (T3);%
   	
   	\draw[dotted] (T1)++(-1,.1) to [bend left=15] (T1);%
	\draw[dotted] (T1) to [bend left=15] ($(T1)+(-1,-.1)$);%
   	\draw[dotted] (T2)++(-1,.1) to [bend left=15] (T2);%
	\draw[dotted] (T2) to [bend left=15] ($(T2)+(-1,-.1)$);%
   	\draw[dotted] (T3)++(-1,.1) to [bend left=15] (T3);%
	\draw[dotted] (T3) to [bend left=15] ($(T3)+(-1,-.1)$);%
	\draw[dotted] (a2b)++(1,-.1) to [bend left=15] (a2b);%
	\draw[dotted] (a2b) to [bend left=15] ($(a2b)+(1,.1)$);%
	
\end{tikzpicture}\label{fig.wp9b}
}
\caption{A structured discriminator connector and a possible representation of its temporal aspects.}\label{FIG:wp9}
\end{figure}
 
At the best of our knowledge, currently there are no proposals for the representation of temporal constraints of a discriminator connector in any temporal
workflow model or process-aware information system~\cite{LanzWR12}.
Even exploiting the methodology proposed in \cite{CombiGMP14}, it is
easy to verify that it is not possible to represent such connector as an \STN because in a consistent \STN all constraints have to be satisfied while here it
is necessary to allow the possibility that only one constraint of a set has to be satisfied in order to specifically represent a discriminator connector.
A possible way for specifically managing a discriminator connector consists in following the approach suggested by \cite{CombiGMP14}
for representing activities and considering a variant of the multi-tail hyperarc, called \textit{multi-head hyperarc}, for representing its temporal constraint,
as depicted in \figref{fig.wp9b}.
In the figure there is a multi-head hyperarc $A$ that connects the node representing the beginning instant of the discriminator activity to all nodes
representing the end instant of the activities that precede the considered discriminator and are directly connected to it.
In general a multi-head hyperarc is defined as a set of distance constraints (components) between one time point and some end points.
We say that a multi-head hyperarc is satisfied if at least one of its components is satisfied.
In \figref{fig.wp9b}, dashed arcs define the hyperarc $A$ that is satisfied if $B_{\sDIS}$ is at least 2 distant from $E_{T_1}$ or 1 distant from $E_{T_2}$ or 5
distant from $E_{T_3}$. It is sufficient that one of such previous nodes is executed and that the delay represented in the corresponding connecting arc is
passed to execute $B_{\sDIS}$, as required by the structured discriminator connector semantics.
\smallskip

\TN{s} are not only suitable for better representing temporal constraints originating from temporal workflow, but also for better representing more
general temporal constraint networks like Conditional Simple Temporal Network~\cite{TVP03}.

A Conditional Simple Temporal Network (CSTN) is an enriched graph for representing and reasoning about temporal constraints in domains where some constraints
may apply only in certain condition settings (scenarios).
Each \emph{condition} in a CSTN is represented by a propositional letter whose truth value is \emph{observed} in real time as the outcome of the
execution of an \emph{observation time-point}.
An execution strategy for a CSTN has to determine an execution time for each time-point guaranteeing that all temporal constrains that are significant in the
resulting scenario are satisfied.
An execution strategy can be dynamic in that its execution decisions can react to the information obtained from such observations.
The Conditional Simple Temporal Problem (CSTP) consists in determining whether a given CSTN admits a dynamic execution strategy for any possible combination of
propositional outcomes happens to be observed over time.
If such a strategy exists, the CSTN is said to be dynamically consistent (DC).  

Tsamardinos \etal~\cite{TVP03} solved the CSTP by first encoding it as a meta-level Disjunctive Temporal Problem (DTP), 
then feeding it to an off-the-shelf DTP solver.
Although of theoretical interest, this approach is not practical because the CSTP-to-DTP encoding has exponential size, 
and the DTP solver itself runs in exponential time.
To our knowledge, this approach has never been empirically evaluated~\cite{HunsbergerPC15}.

In~\cite{TIME2015CTNsWithTNs}, 
Comin and Rizzi proposed a novel representation of CSTNs in terms of \TN{s} allowing the determination of the first singly
exponential-time algorithm for checking the dynamic consistency of Conditional Simple Temporal Networks.
More precisely, a CSTN instance is represented as a suitable \TN{} where 
each possible scenario is represented and connected to other scenarios in an
appropriate way and, then, such \TN{} instance is solved in pseudo-polynomial time 
by the algorithms analyzed in the present paper.
\smallskip

In summary, \TN{s} allow the representation of temporal constraints 
that are more general of those represented in STNs~\cite{DechterMP91}, 
because they allow disjunctions involving more than two time points, 
but less general than those represented in 
DTPs~\cite{StergiouK00} because all disjunctions related to a multi-head(tail) hyperarcs have to contain a common variable.
Such kind of STN generalization not only allows the compact representation of some 
common temporal constraints in the domains like the workflow-based process
management but also allows the determination of new interesting algorithm 
for checking dynamic-consistency in richer models like CSTN.

\section{Background and Notation}%
\label{sect:background}
In this section, we introduce some definitions, notations and well-know results about graphs and conservative graphs; moreover, we recall the relation between
the consistency property of \STN{s} and the conservative property of weighted graphs.

We consider graphs that are directed and weighted on the arcs.
Thus, if $G=(V,A)$ is a graph, then every arc $a\in A$ is a triple $(t_a,h_a,w_a)$ where $t_a \in V$ is the \textit{tail} of $a$, $h_a \in V$ is the
\textit{head} of $a$, and $w_a\in\R$ is the \textit{weight} of $a$. 
Moreover, since we use graphs to represent distance constraints, 
they do not need to have either loops (unary constraints are meaningless)
or parallel arcs (two parallel constraints represent two different distance 
constraints between the same pair of node: only the most restrictive is meaningful).
We also use the notations $h(a)$ for $h_a$, $t(a)$ for $t_a$,
and $w(a)$ or $w(t_a,h_a)$ for $w_a$, when it helps.
The \textit{order} and \textit{size} of a graph $G = (V,A)$
are denoted by $n \triangleq |V|$ and $m \triangleq  |A|$, respectively. 
The size is a good measure for the encoding length of $G$.

A \emph{cycle} of $G$ is a set of arcs $C\subseteq A$ cyclically sequenced 
as $a_0, a_1, \ldots, a_{\ell-1}$ so that $h(a_i) = t(a_j)$ if and only if
$j=(i+1)\mod \ell$; it is called a \emph{negative cycle} if $w(C) \leq 0$, where $w(C)$ stands for $\sum_{e\in C} w_e$.
A graph is called \textit{conservative} when it contains no negative cycle.

A \textit{potential} is a function $p: V \mapsto \R$.
The \textit{reduced weight} of an arc $a = (u,v,w_a)$ with respect to a potential 
$p$ is defined as $w^{p}_a \triangleq w_a - p_v + p_u$.
A potential $p$ of $G = (V,A)$ is called \textit{feasible} if $w^{p}_a\geq 0$ for every $a\in A$.
Notice that, for any cycle $C$, $w^p(C) = w(C)$.
Therefore, the existence of a feasible potential implies that 
the graph is conservative as $w(C) = w^p(C) \geq 0$ for every cycle $C$.

The Bellman-Ford algorithm~\cite{Cormen01} can be used to produce in $O(n m)$ time:
\begin{itemize}
	\item either a proof that $G$ is conservative in the form of a feasible potential function;
	\item or a proof that $G$ is not conservative in the form of a negative cycle $C$ in $G$.
\end{itemize}
When the graph is conservative, the smallest weight of a walk between two nodes is well defined, and, fixed a root node $r$ in $G$,
the potentials returned by the Bellman-Ford algorithm are, for each node $v$, the smallest weight of a walk from $r$ to $v$.
Moreover, if all the arc weights are integral, then these potentials are integral as well. 
Therefore, the Bellman-Ford algorithm provides a proof to the following theorem.
\begin{theorem}[\hspace{-1sp}\cite{Bellman58,Ford,Cormen01}]\label{teo:charConservativeGraphs}
   A graph admits a feasible potential if and only if it is conservative.
   Moreover, when all arc weights are integral, the feasible potential is an integral function.
\end{theorem}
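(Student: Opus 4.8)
The plan is to prove the two implications separately and then deal with integrality as a by-product of the construction used for the harder direction.

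First I would dispatch the easy direction: if $G$ admits a feasible potential $p$, then $G$ is conservative. This is essentially already recorded in the text. For any cycle $C$, the reduced weight satisfies $w^p(C) = w(C)$ because the potential contributions telescope around $C$ (each node's potential is added once as a tail and subtracted once as a head); since $w^p_a \geq 0$ for every arc $a \in C$, we get $w(C) = w^p(C) \geq 0$, so $G$ has no negative cycle.

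For the converse, suppose $G = (V,A)$ is conservative. The key idea is to realize a feasible potential as a vector of shortest-walk weights from a single source, which is exactly what the Bellman--Ford algorithm computes, as recalled above. To guarantee reachability I would first augment $G$ by adding a fresh root $r$ together with an arc $(r,v,0)$ for every $v \in V$; call the result $G'$. Since $r$ has no incoming arc, no cycle of $G'$ passes through $r$, so the cycles of $G'$ are exactly those of $G$ and $G'$ is still conservative. Because $G'$ is conservative, for every $v$ the infimum of the weights of walks from $r$ to $v$ is attained by a simple path (any cycle contained in a walk can be excised without increasing the walk's weight), so $p_v \triangleq \min\{\, w(W) : W \text{ a walk from } r \text{ to } v \,\}$ is a well-defined real number. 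For every arc $a = (u,v,w_a) \in A$, concatenating an optimal $r$-to-$u$ walk with $a$ yields an $r$-to-$v$ walk of weight $p_u + w_a$, whence $p_v \leq p_u + w_a$, i.e. $w^p_a = w_a - p_v + p_u \geq 0$. Thus the restriction of $p$ to $V$ is a feasible potential of $G$. For integrality: when all arc weights are integral, each $p_v$ is the weight of a simple $r$-to-$v$ path, hence a finite sum of integers, so $p_v \in \Z$ and the produced potential is integral.

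The only genuinely delicate point is the well-definedness of $p_v$: one must ensure both that every node is reachable from the source (handled by the virtual root $r$ and its $0$-weight arcs, which provably introduce no new cycles) and that the shortest-walk weights do not diverge to $-\infty$ (handled by conservativeness, via the cycle-excision argument that reduces walks to simple paths). Everything else is the routine telescoping identity and the triangle inequality for shortest-path distances.
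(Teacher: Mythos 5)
Your proof is correct and follows essentially the same route as the paper: the easy direction via the telescoping identity $w^p(C)=w(C)$, and the converse by realizing the potential as shortest-walk weights from a root, which is exactly the Bellman--Ford argument the paper invokes. Your only addition is to spell out the details the paper leaves implicit (the fresh source with $0$-weight arcs to guarantee reachability, and the cycle-excision argument for well-definedness), which is a sound and slightly more careful rendering of the same idea.
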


An \STN can be viewed as a weighted graph whose nodes are timepoints that must be placed on the real line and whose arcs express
mutual constraints on the allocations of their end points.
An \STN $G = (V,A)$ is called \textit{consistent} if it admits a \emph{feasible scheduling}, \ie a scheduling $s: V\mapsto \R$ such that
\[
            s(v) \leq s(u) + w(u,v) \quad \forall \text{ arc $(u,v)$ of $G$.} 
\]

\begin{corollary}[\hspace{-1sp}\cite{Bellman58,DechterMP91,Cormen01}]\label{cor.STNcharCons}
   An \STN $G$ is consistent if and only if $G$ is conservative.
\end{corollary}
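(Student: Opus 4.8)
The plan is to observe that, for an \STN, the notion of \emph{feasible scheduling} coincides verbatim with the notion of \emph{feasible potential} of the underlying weighted graph, and then to invoke Theorem~\ref{teo:charConservativeGraphs}. Concretely, take any arc $a=(u,v,w_a)$ of $G$ and any function $s:V\mapsto\R$. By definition of reduced weight, $w^{s}_a = w_a - s(v) + s(u)$, so $w^{s}_a \geq 0$ is the same inequality as $s(v) \leq s(u) + w(u,v)$. Hence $s$ is a feasible scheduling of the \STN $G$ if and only if $s$ is a feasible potential of the weighted graph $G$. In particular, $G$ is consistent (as an \STN) if and only if $G$ admits a feasible potential (as a weighted graph).

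Next I would close the chain of equivalences using the already established characterization: by Theorem~\ref{teo:charConservativeGraphs}, $G$ admits a feasible potential if and only if $G$ is conservative, i.e.\ contains no negative cycle. Combining this with the previous paragraph yields
\[
  G \text{ is consistent} \iff G \text{ admits a feasible potential} \iff G \text{ is conservative},
\]
which is exactly the statement of the corollary.

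There is essentially no hard step here: the only point that requires care is the sign/orientation bookkeeping, namely checking that the direction of the scheduling inequality $s(v)\le s(u)+w(u,v)$ matches the definition $w^{p}_a = w_a - p_v + p_u$ of the reduced weight for $a=(u,v,w_a)$, so that ``feasible scheduling'' and ``feasible potential'' really are the same constraint and not off by a sign or a swap of endpoints. Once this identification is in place, the corollary is an immediate specialization of Theorem~\ref{teo:charConservativeGraphs}; one may additionally note, though it is not needed for the statement, that when all arc weights are integral the same theorem yields an integral feasible scheduling.
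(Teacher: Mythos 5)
Your proposal is correct and is exactly the paper's argument: identify ``feasible scheduling'' with ``feasible potential'' (which your sign check $w^{s}_a = w_a - s(v) + s(u) \geq 0 \iff s(v)\le s(u)+w(u,v)$ makes explicit) and then invoke Theorem~\ref{teo:charConservativeGraphs}. The paper states this more tersely, but the route is the same.
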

\begin{proof}
    A feasible scheduling is just a feasible potential. 
    Therefore, this corollary is just a restatement of Theorem~\ref{teo:charConservativeGraphs}.
\end{proof}

In this paper, we also deal with directed weighted hypergraphs.
\begin{definition}[Hypergraph]
A \emph{hypergraph} $\H$ is a pair $(V,\A)$, where $V$ is the set of nodes, and $\A$ is the set of \emph{hyperarcs}.
Each hyperarc $A\in\A$ is either a \emph{multi-head} or a \emph{multi-tail} hyperarc.

A \emph{multi-head} hyperarc $A=(t_A, H_A, w_A)$ has a distinguished node $t_A$, called the \emph{tail} of $A$, and a nonempty set
$H_A\subseteq V\setminus\{t_A\}$ containing the \emph{heads} of $A$; 
to each head $v\in H_A$ is associated a \emph{weight} $w_A(v)\in\R$. 
\figref{fig.multi-head} depicts a possible representation of a multi-head hyperarc: the tail is connected to each head by a dashed arc labeled by the name of the hyperarc and the weight associated to the considered head.

A \textit{multi-tail} hyperarc $A=(T_A, h_A, w_A)$ has a distinguished node $h_A$, called the \emph{head} of $A$, and a nonempty set 
$T_A\subseteq V\setminus\{h_A\}$ containing the \emph{tails} of $A$;
to each tail $v\in T_A$ is associated a \emph{weight} $w_A(v)\in\R$.   
\figref{fig.multi-tail} depicts a possible representation of a multi-tail hyperarc:  the head is connected to each tail by a dotted arc labeled by the name of the hyperarc and the weight associated to the considered tail.
\end{definition}

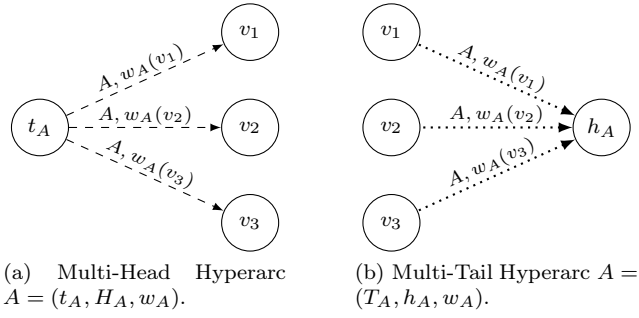
\begin{figure}[tb]
\centering
\subfloat[Multi-Head Hyperarc $A=(t_A, H_A, w_A)$.]{
\begin{tikzpicture}[arrows=->,scale=1,node distance=.5 and 2]
    \node[node] (v1) {$v_1$};
    \node[node,below=of v1] (v2) {$v_2$};
    \node[node,below=of v2] (v3) {$v_3$};
	\node[node,left=of v2] (u) {$t_A$};
%
   	\draw[multiHead] (u) to node[timeLabel,above,sloped] {$A, w_A(v_1)$} (v1);%
   	\draw[multiHead] (u) to node[timeLabel,above,sloped] {$A, w_A(v_2)$} (v2);%
   	\draw[multiHead] (u) to node[timeLabel,above,sloped] {$A, w_A(v_3)$} (v3);%
\end{tikzpicture}
\label{fig.multi-head}
}\quad\quad\quad
\subfloat[Multi-Tail Hyperarc $A=( T_A, h_A, w_A)$.]{
\begin{tikzpicture}[arrows=<-,scale=1,node distance=.5 and 2]
    \node[node] (v1) {$v_1$};
    \node[node,below=of v1] (v2) {$v_2$};
    \node[node,below=of v2] (v3) {$v_3$};
	\node[node,right=of v2] (u) {$h_A$};
   	\draw[multiTail] (u) to node[timeLabel,above,sloped] {$A, w_A(v_1)$} (v1);%
   	\draw[multiTail] (u) to node[timeLabel,above,sloped] {$A, w_A(v_2)$} (v2);%
   	\draw[multiTail] (u) to node[timeLabel,above,sloped] {$A, w_A(v_3)$} (v3);%
\end{tikzpicture}
\label{fig.multi-tail}
}
\caption{A graphical representation of the two kinds of hyperarcs.}
\end{figure}

The \emph{cardinality} of a hyperarc $A\in \A$ is given by  $|A| \triangleq  |H_A\cup \{t_A\}|$ if $A$ is multi-head, and $|A| \triangleq |T_A\cup\{h_A\}|$ if
$A$ is multi-tail; if $|A|=2$, then $A=(u, v, w)$ is a standard arc.
The \textit{order} and \textit{size} of a hypergraph $(V,\A)$ 
are denoted by $n \triangleq |V|$ and $m \triangleq \sum_{A\in \A} |A|$, respectively.

\section{Hyper Temporal Networks and Consistency Property}%
\label{sect:problemFormulation}
We introduce now \textit{Hyper Temporal Networks} (\TN{s}), a strict generalization of \STN{s} to partially overcome the limitation of allowing only
conjunctions of constraints.
Compared to \STN distance graphs, which they naturally extend, \TN{s} allow a greater flexibility in the definition of temporal constraints.

A \TN{} is a directed weighted hypergraph $\H=(V,\A)$ where a node represents a time point variable (timepoint), 
and a multi-head\slash multi-tail hyperarc represents a set of temporal distance constraints  between the tail\slash head and the heads\slash tails, respectively. 

For example, the multi-tail hyperarc $A=(T_A, B_{\sAND_2}, w_A)$ in \figref{FIG:wf-hstn}, where $T_A=\{b_1, b_2, b_3\}$ and $w_A(b_i)=0$ for $i=1,2,3$, stands
for the set of distance constraints $\{B_{\sAND_2} - b_i \leq 0 \mid i=1,2,3\}$. 

In general, we say that a hyperarc is \textit{satisfied} when at least one of its distance constraints is satisfied.
Then, we say that a \TN{} is \textit{consistent} when it is possible to assign a value to each time-point variable so that all of its hyperarcs are satisfied.

More formally, in the \TN framework the consistency problem is defined as the following decision problem.
\newcommand{\savefootnote}[2]{\footnote{\label{#1}#2}}
\newcommand{\repeatfootnote}[1]{\textsuperscript{\ref{#1}}}
\begin{definition}[\GTNC] 
Given a \TN \mbox{$\H=(V,\A)$}, decide whether there exists a scheduling \mbox{$s:V \rightarrow \R$} 
such that, for every hyperarc $A\in\A$, the following holds:
\begin{itemize}
\item if $A=(t,h,w)$ is a standard arc, 
then
\[ 
   s(h)-s(t)\leq w; 
\] 
\item if $A=(t_A, H_A, w_A)$ is a multi-head hyperarc, then 
\[
   s(t_A) \geq \min_{v\in H_A} \{s(v) - w_A(v) \};
\]
\item if $A=(T_A, h_A, w_A)$ is a multi-tail hyperarc, then 
\[
   s(h_A) \leq \max_{v\in T_A} \{s(v) + w_A(v) \}.
\]
\end{itemize}
\end{definition}

Any such scheduling \mbox{$s:V \rightarrow \R$} is called \textit{feasible}. 
A \TN that admits at least one feasible scheduling is called \textit{consistent}.

Comparing the consistency of \TN{s} with the consistency of \STN{s}, the most important aspect of novelty is that, while in a distance graph of a \STN each arc
represents a distance constraint and all such constraints have to be satisfied by a feasible schedule, in a \TN each hyperarc represents a set of one or more
distance constraints and a feasible scheduling has to satisfy at least one such distance constraints for each hyperarc.

Let us show some interesting properties about the consistency problem for \TN{s}.

The first interesting property is that any integral-weighted \TN admits an integral feasible schedule when it is consistent, as proved in the following lemma.
\begin{lemma}\label{Lem:int_sched}
Let $\H=(V,\A)$ be an integral-weighted and consistent $\TN$. 
Then $\H$ admits an integral feasible scheduling $s:V \rightarrow \{-T,-T+1, \ldots, T-1, T\}$,
where $T = \sum_{A\in\A, v\in V} |w_A(v)|$. 
\end{lemma}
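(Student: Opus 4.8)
The plan is to reduce to the \STN{} case already settled by Corollary~\ref{cor.STNcharCons} and Theorem~\ref{teo:charConservativeGraphs}. Since $\H$ is consistent, fix any feasible scheduling $s:V\to\R$. For every hyperarc $A\in\A$ the scheduling $s$ satisfies at least one of the distance constraints composing $A$; single out one such constraint. This constraint is always an ordinary difference constraint, i.e.\ a standard arc $(u,v,c)$ with $c\in\Z$ imposing $s(v)-s(u)\le c$: for a standard arc $(u,v,w)$ we take it as is; for a multi-head hyperarc $A=(t_A,H_A,w_A)$ we pick $v^\ast\in H_A$ with $s(t_A)\ge s(v^\ast)-w_A(v^\ast)$ and take the arc $(t_A,v^\ast,w_A(v^\ast))$; for a multi-tail hyperarc $A=(T_A,h_A,w_A)$ we pick $v^\ast\in T_A$ with $s(h_A)\le s(v^\ast)+w_A(v^\ast)$ and take the arc $(v^\ast,h_A,w_A(v^\ast))$. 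Collecting one such arc per hyperarc yields an integral-weighted \STN{} $G_s=(V,A_s)$ that is satisfied by $s$, hence consistent, hence conservative by Corollary~\ref{cor.STNcharCons}. By Theorem~\ref{teo:charConservativeGraphs}, $G_s$ then admits an \emph{integral} feasible potential, that is, an integral feasible scheduling $s'$ of $G_s$; since satisfying the selected component of $A$ implies satisfying $A$ itself, $s'$ is feasible for $\H$ as well.

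It remains to bound the range of $s'$, and for this the plan is to produce the integral scheduling through the Bellman-Ford shortest-path characterization recalled before Theorem~\ref{teo:charConservativeGraphs}, rather than take an arbitrary integral potential. Add to $G_s$ a fresh root $r$ together with the arcs $(r,v,0)$ for every $v\in V$, obtaining a graph $G_s^r$; extending $s$ by $s(r):=\max_{v\in V}s(v)$ makes $s$ feasible for $G_s^r$, so $G_s^r$ is still conservative. Running Bellman-Ford from $r$ returns an integral feasible potential $p$ with $p(v)$ equal to the smallest weight of a walk from $r$ to $v$; set $s':=p|_V$, which, as in the previous paragraph, is an integral feasible scheduling of $\H$. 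On one side $p(v)\le w(r,v)=0$, since the single arc $(r,v,0)$ is already a walk from $r$ to $v$. On the other side, because $G_s^r$ is conservative (no cycle of nonpositive weight), a minimum-weight walk from $r$ to $v$ cannot repeat a vertex — any repeated vertex would enclose a cycle of strictly positive weight whose removal shortens the walk — hence it is a simple path that starts with one root arc of weight $0$ and continues with pairwise distinct arcs of $G_s$, at most one of which comes from any given hyperarc $A$ and whose weight has absolute value at most $\max_{v}|w_A(v)|$. Therefore $p(v)\ge -\sum_{A\in\A}\max_{v}|w_A(v)| \ge -\sum_{A\in\A,\,v\in V}|w_A(v)| = -T$, and we conclude $s'(v)=p(v)\in\{-T,-T+1,\ldots,T-1,T\}$ for all $v\in V$.

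The reduction step is routine once one observes that ``satisfying a hyperarc'' means exactly ``satisfying one out of a finite list of ordinary difference constraints'', so that consistency of $\H$ is witnessed by consistency of one among finitely many \STN{s}, to which Corollary~\ref{cor.STNcharCons} and Theorem~\ref{teo:charConservativeGraphs} apply verbatim. The only point requiring a bit of care — and the main obstacle to the explicit bound — is the range estimate: one must build the integral scheduling as the vector of shortest-path distances from an added universal root in a still-conservative auxiliary graph, and then exploit that in a conservative graph an optimal walk is a simple path, so that each hyperarc contributes its weight to $p(v)$ at most once and the total is bounded by $T$ in absolute value.
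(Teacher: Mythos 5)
Your proof is correct and follows essentially the same route as the paper's: use the given feasible schedule to resolve the disjunction in each hyperarc, project $\H$ onto a conservative integral-weighted ordinary graph, and then extract an integral feasible potential via Bellman--Ford. The only difference is that you explicitly justify the range bound $s'(v)\in\{-T,\ldots,T\}$ (adding a universal root and observing that a minimum-weight walk in a conservative graph is a simple path using at most one arc per hyperarc), a detail the paper's proof merely asserts.
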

\begin{proof} 
Since $\H$ is consistent, then there exists a feasible scheduling $\tilde{s}:V\rightarrow\R$.
The idea in this proof is to project the \TN $\H$ over a conservative graph and then, in that setting, 
to exploit the integrality properties of potentials as stated in Theorem~\ref{teo:charConservativeGraphs}.
However, this projection is asked to resolve the non-determinism contained in the disjunctive nature of the hyperarcs; in order to sort out such non-determinism,
the projection is built considering the given feasible scheduling $\tilde{s}$ as follows.

For each hyperarc $A\in \A$, a weighted directed arc $e_A$ is defined as follows:
\begin{itemize}
	\item if $A=(u,v,w)$ is a standard arc, 
then $e_A\triangleq (u,v,w)$. Note that $\tilde{s}(v) \leq \tilde{s}(u) + w$ follows by the feasibility of $\tilde{s}$; 
\item if $A=(t_A, H_A, w_A)$ is a multi-head hyperarc, then  
	\[e_A\triangleq (t_A, v_A, w_A(v)) \text{ where } v_A=\arg\min_{v\in H_A} \{\tilde{s}(v) - w_A(v) \}. \]
Here, $\tilde{s}(v_A) \leq \tilde{s}(t_A) + w_A(v)$ follows by the feasibility of $\tilde{s}$;
\item if $A=(T_A, h_A, w_A)$ is a multi-tail hyperarc, then 
	\[e_A\triangleq (v_A, h_A, w_A(v)) \text{ where } v_A=\arg\max_{v\in T_A} \{\tilde{s}(v) + w_A(v) \}.\]
Here, $\tilde{s}(h_A) \leq \tilde{s}(v_A) + w_A(v)$ follows by the feasibility of $\tilde{s}$.
\end{itemize}
Now, a weighted directed graph $G=(V,E)$ with $E\triangleq\{e_A\mid A\in\A\}$ is defined. 
$G$ is integral-weighted and conservative graph since it admits $\tilde{s}$ as a potential function.
Therefore, $G$ admits an integral potential function $s:V \rightarrow \{-T,-T+1, \ldots, T-1, T\}$. 
Indeed, such a function $s$ is obtained by applying the Bellman-Ford algorithm on $G$.
To conclude, we observe that $s$ is also an integral feasible scheduling for $\H$.
\end{proof}

\noindent The following theorem states that \GTNC is \NP-complete.
\begin{theorem}\label{Teo:npcompleteness} 
\GTNC is an \NP-complete problem even if input instances $\H=(V, \A)$ are restricted to satisfy $w_A(\cdot) \in\{-1, 0, 1\}$ and $|H_A|, |T_A|\leq 2$ for every $A\in\A$.
\end{theorem}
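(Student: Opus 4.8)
The plan is to establish membership in \NP{} and \NP-hardness separately. For membership I would invoke Lemma~\ref{Lem:int_sched}: since every instance is integral-weighted, a consistent \TN{} $\H=(V,\A)$ admits an integral feasible scheduling with all values in $\{-T,\dots,T\}$ for $T=\sum_{A\in\A,\,v}|w_A(v)|$, and $T$ has polynomially many digits (indeed $T\le m$ under the stated restriction). A nondeterministic machine therefore guesses such a scheduling and checks in polynomial time that every hyperarc constraint holds --- the inequality $s(h)-s(t)\le w$ for standard arcs, the $\min$-inequality for multi-head hyperarcs, and the $\max$-inequality for multi-tail hyperarcs. Hence \GTNC{}$\,\in\NP$, in fact even without any restriction on the instances.

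For hardness I would reduce from \textsc{3-Colorability}. Given a graph $G=(V_G,E_G)$, introduce a reference node $z$; for every vertex $x\in V_G$ a node $P_x$ whose ``color'' is read off as $s(P_x)-s(z)$, forced into $\{0,1,2\}$ over the integers by the standard arc $(P_x,z,0)$ (which gives $s(P_x)\ge s(z)$) and by an auxiliary node $w_x$ with standard arcs $(z,w_x,1)$ and $(w_x,P_x,1)$ (which give $s(P_x)-s(z)\le 2$; the detour through $w_x$ keeps all weights in $\{-1,0,1\}$). For every edge $\{x,y\}\in E_G$ I would add an \emph{inequality gadget} on two fresh nodes $m_{xy},M_{xy}$: a binary multi-head hyperarc with tail $m_{xy}$, heads $\{P_x,P_y\}$ and zero weights --- which forces $s(m_{xy})\ge\min\{s(P_x),s(P_y)\}$ --- together with standard arcs $(P_x,m_{xy},0)$ and $(P_y,m_{xy},0)$ giving the reverse inequality, so that $s(m_{xy})=\min\{s(P_x),s(P_y)\}$; symmetrically, a binary multi-tail hyperarc with head $M_{xy}$, tails $\{P_x,P_y\}$ and zero weights plus arcs $(M_{xy},P_x,0)$ and $(M_{xy},P_y,0)$, forcing $s(M_{xy})=\max\{s(P_x),s(P_y)\}$; and finally the standard arc $(M_{xy},m_{xy},-1)$ forcing $s(m_{xy})\le s(M_{xy})-1$, \ie $\max\{s(P_x),s(P_y)\}>\min\{s(P_x),s(P_y)\}$, \ie $s(P_x)\ne s(P_y)$. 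Every weight lies in $\{-1,0,1\}$, every hyperarc has $|H_A|,|T_A|\le 2$, and the construction has polynomial size.

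Then I would check both directions of correctness. From a proper $3$-coloring $c:V_G\to\{0,1,2\}$, setting $s(z)=0$, $s(P_x)=c(x)$, $s(w_x)=1$, $s(m_{xy})=\min\{c(x),c(y)\}$ and $s(M_{xy})=\max\{c(x),c(y)\}$ satisfies every standard arc and every hyperarc, so the \TN{} is consistent. Conversely, if the \TN{} is consistent then by Lemma~\ref{Lem:int_sched} it has an \emph{integral} feasible scheduling $s$; then $s(P_x)-s(z)\in\{0,1,2\}$ defines a coloring of $G$, and for every edge $\{x,y\}$ the gadget forces $s(P_x)\ne s(P_y)$, so the coloring is proper and $G$ is $3$-colorable.

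The delicate point --- and the step I expect to be the real obstacle --- is not the arithmetic of the gadgets but articulating \emph{why} integrality is indispensable: over the reals the inequality gadget forces only $s(P_x)\ne s(P_y)$, which together with $s(P_x)-s(z)\in[0,2]$ is satisfiable for every graph, so Lemma~\ref{Lem:int_sched} is exactly what collapses the continuum of admissible real values down to the three integers $\{0,1,2\}$. A secondary care point is verifying that each $\min$/$\max$ gadget really equates its auxiliary node with $\min$/$\max$ of the two vertex nodes --- the hyperarc supplying one inequality and the two standard arcs the converse one --- and that the syntactic restrictions $|H_A|,|T_A|\le 2$ and $w_A(\cdot)\in\{-1,0,1\}$ are honored throughout, which is precisely why the bound $s(P_x)-s(z)\le 2$ is realized via the auxiliary node $w_x$ rather than by a single weight-$2$ arc.
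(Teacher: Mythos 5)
Your proof is correct, but it takes a genuinely different route from the paper's. The paper establishes membership in \NP{} exactly as you do (via Lemma~\ref{Lem:int_sched}), but proves hardness by a reduction from 3-SAT: each Boolean variable $x_i$ gets a gadget in which a multi-head hyperarc $(z,\{x_i,\overline{x}_i\},[0,0])$ forces $\min\{s(x_i),s(\overline{x}_i)\}\le s(z)$ and a multi-tail hyperarc $(\{x_i,\overline{x}_i\},z,[-1,-1])$ forces $\max\{s(x_i),s(\overline{x}_i)\}\ge s(z)+1$, which together with the range constraints pins $\{s(x_i),s(\overline{x}_i)\}=\{0,1\}$; each clause gets a multi-tail hyperarc forcing at least one of its literals to value $1$. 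Your reduction from \textsc{3-Colorability} instead builds explicit $\min$/$\max$ extraction gadgets per edge and a strict-separation arc $(M_{xy},m_{xy},-1)$; the arithmetic of all your gadgets checks out against the paper's semantics (in particular the arc orientations and the $\min$/$\max$ inequalities for multi-head/multi-tail hyperarcs), all weights stay in $\{-1,0,1\}$, all hyperarcs are binary, and both directions of correctness go through. Both reductions necessarily use both kinds of hyperarc, which is the point of the theorem given the tractability of the single-kind case. One inaccuracy in your closing discussion, though it does not affect the proof: the arc $(M_{xy},m_{xy},-1)$ forces $|s(P_x)-s(P_y)|\ge 1$, not merely $s(P_x)\ne s(P_y)$, so the real-valued relaxation of your instance is \emph{not} satisfiable for every graph --- indeed a $1$-separated real assignment into $[0,2]$ already yields a proper $3$-coloring by rounding, so your reduction (like the paper's) happens to be sound even without invoking integrality; Lemma~\ref{Lem:int_sched} is a clean and valid way to argue the backward direction, but it is not the load-bearing ingredient you describe it as.
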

\begin{proof}
If $\H=(V,\A)$ is integral-weighted and consistent, then it admits an integral feasible scheduling $s:V\rightarrow\{-T, \ldots, T\}$ by Lemma~\ref{Lem:int_sched}.   
Moreover, any such feasible scheduling can be verified in polynomial time \wrt the size of the input; hence, \GTNC is in \NP.

To show that the problem is \NP-hard, we describe a reduction from 3-SAT. 

Let us consider a boolean 3-CNF formula with $n\geq 1$ variables and $m\geq 1$ clauses: 
\[\varphi(x_1, \ldots, x_n) = \bigwedge_{i=1}^m (\alpha_i \vee \beta_i \vee \gamma_i)\]
where $\C_i = (\alpha_i \vee \beta_i \vee \gamma_i)$ is the $i$-th clause of $\varphi$
and  each $\alpha_i,\beta_i,\gamma_i\in \{x_j, \overline{x}_j\mid 1\leq j\leq n\}$ is either a positive or a negative literal. 

We associate to $\varphi$ a \TN $\H_{\varphi}=(V, \A)$, where each boolean variable $x_i$ occurring in $\varphi$ is represented by two nodes, $x_i$ and
$\overline{x}_i$.
$V$ also contains node $z$ that represents the reference initial node for the \TN $\H_{\varphi}$, \ie the first node that has to be executed.
For each pair $x_i$ and $\overline{x}_i$, $\H_{\varphi}$ contains a pair of hyperarc constraints as depicted in \figref{FIG:Var_i}: one with multi-head
$\{x_i,\overline{x}_i\}$ and tail in $z$ and the other multi-tail $\{x_i,\overline{x}_i\}$ and head in $z$. If $\H_{\varphi}$ is consistent, the pair of
hyperarcs associated to $x,\neg x$ assures that $\H_{\varphi}$ admits a feasible scheduling $s$ such that $s(x_i)$ and $s(\overline{x}_i)$ are coherently set
with values in $\{0,1\}$ (see Lemma~\ref{Lem:int_sched}).
In this way, $s$ is forced to encode a truth assignment on the $x_i$'s.
The \TN $\H_{\varphi}$ contains also a node $\C_j$ for each clause $\C_j$ of $\varphi$; each node $\C_j$ is connected by a multi-tail hyperarc with head in
$\C_j$ and tails over the literals occurring in $\C_j$ and by two standard and opposite arcs with node $z$ as displayed in \figref{FIG:Cl_i}.
Such setting of arcs assures that if $\H_{\varphi}$ admits a feasible scheduling $s$, then $s$ assigns value 1 at least to one of the node representing the
literals connected with the hyperarc.

%
More formally, $\H_{\varphi}=(V, \A)$  is defined as follows:
\begin{itemize}
	\item $V=\{z\}\cup \{x_i \mid 1\leq i \leq n\}\cup \{\overline{x}_i\mid 1\leq i \leq n\}\cup\{\C_j \mid 1\leq j\leq m\}$;
	\item $\A = \bigcup_{i=1}^n \text{Var}_i \cup \bigcup_{j=1}^m\text{Cla}_j$, where:
	\begin{itemize}
		\item $\text{Var}_i=\Big\{ (z, x_i, 1), (x_i, z, 0), (z, \overline{x}_i, 1), (\overline{x}_i, z, 0), \\
			 (\{x_i, \overline{x}_i\}, z, [w(x_i), w(\overline{x}_i)] = [-1, -1] ),\\ 
			 (z, \{x_i, \overline{x}_i\}, [w(x_i), w(\overline{x}_i)] = [0, 0] )\Big\}$.\\ 
			This defines the variable gadget for $x_i$ as depicted in \figref{FIG:Var_i}; 
		\item $\text{Cla}_j=\Big\{(z, \C_j, 1), (\C_j, z, -1), \\
					(\{\alpha_j, \beta_j, \gamma_j\}, \C_j, [ w(\alpha_j), w(\beta_j), w(\gamma_j) ] = [0,0,0] )\Big \}$.\\ 
			This defines the clause gadget for clause $\C_j = (\alpha_i \vee \beta_i \vee \gamma_i)$ as depicted in \figref{FIG:Cl_i}.
	\end{itemize}

\end{itemize}

\begin{figure}[tb]

\subfloat[Gadget for a 3-SAT variable $x_i$.]{
\begin{tikzpicture}[arrows=->,scale=1,node distance=2 and 2]
	\node[node, label={below:$[0]$}] (zero) {$z$};
	\node[node, above right=of zero] (nX) {$\overline{x}_i$};
	\node[node, above left=of zero] (X) {$x_i$};
	\draw[] (zero) to [bend left=40] node[below] {$1$} (X); 
	\draw[] (X) to [bend left=40] node[above] {$0$} (zero);
	\draw[dashed, thick] (zero) to [bend right=15] node[above] {$0$} (X);
	\draw[dotted, thick] (X) to [bend right=15] node[below] {$-1$} (zero);
	\draw[] (zero) to [bend left=40] node[above] {$1$} (nX); 
	\draw[] (nX) to [bend left=40] node[below] {$0$} (zero); 	
	\draw[dashed, thick] (zero) to [bend left=15] node[above] {$0$} (nX);
	\draw[dotted, thick] (nX) to [bend left=15] node[below] {$-1$} (zero);
\end{tikzpicture}\label{FIG:Var_i}
}
\quad
\subfloat[Gadget for a 3-SAT clause $\C_j = (\alpha_j \vee \beta_j \vee \gamma_j)$ where each $\alpha_j, \beta_j, \gamma_j$ is a positive or negative literal.]{
\begin{tikzpicture}[arrows=->,scale=.85,node distance=1.5 and 2]
	\clip (-94pt,-155pt) rectangle (94pt, 25pt);
	\node[node,label={above:$[1]$}] (one) {$\C_j$};
	\node[node,below =of one] (beta) {$\beta_j$};
	\node[node,left=of beta] (alpha) {$\alpha_j$};
	\node[node,right=of beta] (gamma) {$\gamma_j$};
 	\node[node,label={below:$[0]$}, below=of beta] (zero) {$z$};
 	\coordinate (fakeL) at ($(alpha.west)+(-1,0)$);
 	\coordinate (fakeR) at ($(gamma.east)+(1,0)$);
 	\draw[] (zero.west) .. controls ($(fakeL)+(0,-10mm)$) and ($(fakeL)+(0,10mm)$) .. node[left,pos=.3] {$+1$} (one);
  	\draw[] (one) .. controls ($(fakeR)+(0,10mm)$) and ($(fakeR)+(0,-10mm)$) .. node[right,pos=.3] {$-1$} (zero.east); 
	\draw[] (alpha) to [bend right=20] node[timeLabel,below] {} (zero); 
	\draw[] (zero) to [bend right=20] (alpha);
	\draw[dotted, thick] (alpha) to [bend right=5] (zero);
	\draw[dashed, thick] (zero) to [bend right=5] (alpha);
	\draw[dotted, thick] (alpha) to [bend right=25] node[timeLabel,below] {$0$} (one.south west);
	\draw[] (zero) to [bend left=25] (beta); 
	\draw[] (beta) to [bend left=25] (zero); 	
	\draw[dotted, thick] (beta) to [bend right=10] (zero);
	\draw[dashed, thick] (zero) to [bend right=10] (beta);
	\draw[dotted, thick] (beta) to [] node[timeLabel,right] {$0$} (one.south);	
	\draw[] (gamma) to [bend left=20] (zero); 
	\draw[] (zero) to [bend left=20] (gamma);
	\draw[dotted, thick] (gamma) to [bend left=5] (zero);
	\draw[dashed, thick] (zero) to [bend left=5] (gamma);
	\draw[dotted, thick] (gamma) to [bend left=25] node[timeLabel,below] {$0$} (one.south east);
\end{tikzpicture}\label{FIG:Cl_i}
}
\caption{Gadgets used in the reduction from 3-SAT to \GTNC.}\label{fig.gadget}
\end{figure}

Notice that $|V|=1+2n+m=O(m+n)$ and $m_{\A}=8n+5m=O(m+n)$; therefore, the transformation is linearly bounded in time and space.

We next show that $\varphi$ is satisfiable if and only if $\H_{\varphi}$ is consistent.

Any truth assignment $\nu:\{x_1, \ldots, x_n\}\rightarrow \{\texttt{true}, \texttt{false}\}$ satisfying $\varphi$ can be translated into a feasible scheduling
$s:V\rightarrow \Z$ of $\H_{\varphi}$ as follows.
For node $z$, let $s(z)=0$, and let $s(\C_j)=1$ for each $j=1, \ldots, m$; then, for each $i=1, \ldots, n$, let $s(x_i) = 1$ and $s(\overline{x}_i) = 0$ if the
truth value of $x_i$, $\nu(x_i)$, is \texttt{true}, otherwise let $s(x_i) = 0$ and $s(\overline{x}_i) = 1$.
It is easy to verify that, using this scheduling $s$, all the constraints comprising each single gadget are satisfied and, therefore, the network is
consistent.

Vice versa, assume that $\H_{\varphi}$ is consistent.
Then, it admits an integral feasible scheduling $s$ by Lemma~\ref{Lem:int_sched}.
After the translation $s(v) = s(v) - s(z)$, we can assume that $s(z)=0$.
Hence, $s(\C_j) = 1$ for each $j=1,\ldots, m$, as enforced by the two standard arcs incident at $\C_j$ in the clause gadget, and $\{s(x_i),s(\overline{x}_i)\} =
\{0,1\}$ for each $i=1,\ldots, n$, as enforced by the constraints comprising the variable gadgets.
Therefore, the feasible scheduling $s$ can be translated into a truth assignment $\nu:\{x_1, \ldots, x_n\}\rightarrow \{\texttt{true}, \texttt{false}\}$ defined
by $\nu(x_i)=\texttt{true}$ if $s(x_i)=1$ (and $s(\overline{x}_i)=0$); $\nu(x_i)=\texttt{false}$ if $s(x_i)=0$ (and $s(\overline{x}_i)=1$) for every $i = 1,
\ldots, n$.

To conclude, we observe that any hyperarc $A\in\A$ of $\H_{\varphi}$ has weights $w_A(\cdot)\in\{-1, 0, 1\}$ and size $|A|\leq 3$.
Since any hyperarc with three heads (tails) can be replaced by two hyperarcs each having at most two heads (tails), 
the consistency problem remains \NP-Complete even if  $w_A(\cdot)\in\{-1, 0, 1\}$ and $|A|\leq 2$ for every $A\in \A$.
\end{proof}

Theorem~\ref{Teo:npcompleteness} motivates the study of consistency problems on hypergraphs having either 
only multi-head or only multi-tail hyperarcs. In the former case, the consistency problem is called 
\HTNC, while in the latter it is called \TTNC.
In the following theorem we observe that the two problems are inter-reducible, 
\ie we can solve consistency for any one of the two models in $f(m,n,W)$ time whenever we  
have a $f(m,n,W)$ time procedure for solving consistency for the other one.
 
\begin{theorem}\label{Teo:inter-reducitble-HTNs} 
\HTNC and \TTNC are inter-reducible by means of $\log$-space, linear-time, local-replacement reductions.
\end{theorem}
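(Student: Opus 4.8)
The plan is to exploit a simple orientation-reversing duality between the two kinds of hyperarcs, analogous to the classical fact that reversing all arcs of an \STN distance graph and negating the schedule preserves consistency. Concretely, to reduce \HTNC to \TTNC I would take an input \TN $\H=(V,\A)$ all of whose hyperarcs are multi-head and output $\H'=(V,\A')$ obtained by applying, independently to each hyperarc, the following \emph{local replacement}: a standard arc $(u,v,w)$ is replaced by the reversed arc $(v,u,w)$; a multi-head hyperarc $A=(t_A,H_A,w_A)$ is replaced by the multi-tail hyperarc $A'=(H_A,t_A,w_A)$ — that is, the old heads become the tails, the old tail becomes the head, and every weight $w_A(v)$ is left unchanged. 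The reduction from \TTNC to \HTNC is the very same rule read backwards. Since $\H'$ has the same order $n$ and the same size $m$ as $\H$, each hyperarc is rewritten in isolation, and the whole transcription can be carried out by a single left-to-right pass over the encoding of $\H$ using only a constant number of pointers, this is a linear-time, log-space, local-replacement reduction.

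The correctness claim I would establish is that a scheduling $s:V\to\R$ is feasible for $\H$ if and only if $s':=-s$ is feasible for $\H'$; because negation is an involution, this single statement yields both directions of inter-reducibility at once. I would verify it by checking the hyperarc types one by one, using only the elementary identities $a\le b \iff -b\le -a$ and $-\min_v f(v)=\max_v(-f(v))$. For a reversed standard arc, $s(v)-s(u)\le w \iff s'(u)-s'(v)\le w$, which is exactly the constraint of $(v,u,w)$ under $s'$. For a replaced multi-head hyperarc, the constraint $s(t_A)\ge\min_{v\in H_A}\{s(v)-w_A(v)\}$ negates to $-s(t_A)\le\max_{v\in H_A}\{-s(v)+w_A(v)\}$, i.e. $s'(t_A)\le\max_{v\in H_A}\{s'(v)+w_A(v)\}$, which is precisely the multi-tail constraint that $A'=(H_A,t_A,w_A)$ imposes on $s'$. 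Reading the same chain of equivalences in reverse handles the \TTNC-to-\HTNC direction.

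The one place where care is needed — and where a sign slip would ruin the argument — is lining up the $\min$/$\max$ bookkeeping with the definition of \GTNC exactly: the distinguished tail of a multi-head hyperarc plays the same structural role as the distinguished head of a multi-tail hyperarc, and it is the negation of the schedule (not any rescaling of the weights) that converts the ``$\ge\min$'' form into the ``$\le\max$'' form while keeping each $w_A(v)$ fixed. Once this correspondence is pinned down, there is nothing else to do: integrality is automatically preserved (if $s$ is integral so is $-s$), so the reduction is in particular compatible with the integral feasible schedules guaranteed by Lemma~\ref{Lem:int_sched}, and a feasible-schedule witness for one instance is transported to the other in linear time by a single negation.
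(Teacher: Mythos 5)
Your proposal is correct and follows essentially the same route as the paper's own proof: reverse each hyperarc (swapping the roles of the distinguished tail/head with the weighted head/tail set, keeping all weights), and certify correctness by the time-axis flip $s' = -s$ together with the identity $-\min_v f(v) = \max_v(-f(v))$. The only cosmetic difference is that you spell out the \HTNC-to-\TTNC direction while the paper writes out the reverse one, but both rely on the same symmetry.
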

\begin{proof} 
We show the reduction from multi-tail to multi-head hypergraphs; the other direction is symmetric.
Informally, what we will do is to reverse all the arcs (so that what was multi-tail becomes multi-head), and, contextually, we invert the time-axis (to account 
for the inversion of the direction of all arcs).

Let $\H=(V,\A)$ be a multi-tail hypergraph, we associate to $\H$ a multi-head hypergraph $\H'=(V, \A')$ by reversing all multi-tail hyperarcs. 
Formally, we define
\[
   {\A}' = \{(v, S, w)\mid (S, v, w)\in \A\}.
\]
For example, a standard arc $(t,h,w)\in \A$ is transformed into a reversed standard arc $(h,t,w)$ in ${\A}'$ while a hyperarc with two weighted tails $t_1$ and
$t_2$ becomes a hyperarc having $t_1$ and $t_2$ as its two weighted heads.
 

Now, $\H$ is consistent if and only if $\H'$ is consistent.
To prove it, we note that each scheduling $s$ for $\H$ can be associated, with a flip of the time direction, to the scheduling $s' \triangleq -s$.
Then,  it holds that $s$ is feasible for $\H$ if and only if $s'$ is feasible for $\H'$.
Indeed, $s$ satisfies the constraint represented by an hyperarc $A=(T_A, h_A, w_A)\in\A$, namely 
\[
   s(h_A)\leq \max_{v\in T_A} \{s(v) + w_A(v) \},
\] 
or, equivalently 
\[ 
	-s(h_A)\geq \min_{v\in T_A}\{-s(v) - w_A(v) \}, 
\] 
if and only if $s'$ (that is, $-s$) satisfies the constraint represented by the
reversed hyperarc $A'=(h_A, T_A, w_A)$, namely 
\[
   s'(h_A)\geq \min_{v\in T_A}\{s'(v)-w_{A'}(v)\}.
\]
\end{proof}

In the remainder of this work we shall adopt the multi-head hypergraph as our reference model.
Hence, when considering hypergraphs and \TN{s}, we will be implicitly referring to multi-head hyperarcs.
Notably, we consider the following specialized notion of consistency for \TN{s}.
\begin{definition}[\TNC] 
Given a (multi-head) \TN \mbox{$\H=(V,\A)$}, decide whether there exists a scheduling \mbox{$s:V \rightarrow \R$} such that:
\begin{equation}
   s(t_A) \geq \min_{v\in H_A} \{s(v) - w_A(v) \}\quad \forall A\in\A.
\end{equation}
\end{definition}

\begin{remark}\label{Rem:PLconvex}
Notice that this notion of consistency for \TN{s} is a strict generalization of \STN one.
In general, the feasible schedules of an \STN are the solutions of a linear system and, therefore, they form a convex polytope.  
Since an \STN may be viewed as a \TN, the space of feasible schedules of an \STN can always be described as the space of feasible schedules of a \TN.
The converse is not true because feasible schedules for a \TN do not form a convex polytope.
Let us consider, for example, a \TN of just three nodes $x_1$, $x_2$, $x_3$ and one single hyperarc with heads $\{x_1,x_2\}$ and tail $x_3$ expressing the 
constraint $x_3\geq \min\{x_1,x_2\}$; 
$(0,2,2)$ and $(-2,0,2)$ are both admissible schedules, but $(1,1,0) = \frac{1}{2}(0,2,2) - \frac{1}{2}(-2,0,2)$ is not an admissible schedule. 
In conclusion, the \STN model is a special case of the Linear Programming paradigm, whereas the \TN model is not.
\end{remark}

In the rest of this section, we extend the characterization of \STN consistency recalled in Section~\ref{sect:background} to \TN{s}.

\begin{definition}[Reduced Slack Value $w^{p}_A(v)$]
With reference to a potential $p:V\rightarrow \R$, we define, for every arc $A\in \A$ and every $v\in H_A$, 
the \textit{reduced slack value} $w^{p}_A(v)$ as $w_A(v) + p(t_A) - p(v)$
and the \textit{reduced slack} $w^{p}_A$ as
\[
    w^{p}_A \triangleq \max\{w^{p}_A(v) \mid v\in H_A\}.
\]
A potential $p$ is said to be \emph{feasible} if and only if $ w^{p}_A \geq 0$ for every $A\in \A$.
\end{definition}

Again, as it was the case for \STN{s}, a mapping $f:V \rightarrow \R$ is a feasible potential if and only if it is a feasible schedule.
In order to better characterize feasible schedules, we introduce a notion of \emph{negative cycle}.
\begin{definition}[Negative Cycle]
Given a multi-head \TN $\H=(V, \A)$, a \textit{cycle} is a pair $(S,\C)$ with $S\subseteq V$ and $\C \subseteq \A$ such that:
\begin{enumerate}
  \item $S = \bigcup_{A\in \C} (H_A\cup \{t_A\})$ and $S\neq\emptyset$;
  \item $\forall v\in S$ there exists an unique $A\in \C$ such that $t_A = v$.
\end{enumerate}
Moreover, we let $a(v)$ denote the 
unique arc $A\in \C$ with $t_A = v$ as required in previous item~2.
Every infinite path in a cycle $(S,\C)$ contains, at least, one \textit{finite cyclic sequence} $v_i, v_{i+1}, \ldots, v_{i+p}$,
where $v_{i+p} = v_i$ is the only repeated node in the sequence.
A cycle $(S,\C)$ is \textit{negative} if for any finite cyclic sequence $v_1, v_{2}, \ldots, v_{p}$, 
it holds that
\[
   \sum_{t=1}^{p-1} w_{a(v_t)}(v_{t+1}) < 0.
\] 
\end{definition}

There are two results about negative cycles as stated in the following lemmas.
\begin{lemma}\label{lem:nc}
   A \TN with a negative cycle admits no feasible schedule.
\end{lemma}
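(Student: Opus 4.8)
\textbf{Plan for the proof of Lemma~\ref{lem:nc}.}

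The plan is to argue by contradiction: suppose $\H=(V,\A)$ contains a negative cycle $(S,\C)$ and yet admits a feasible schedule $s:V\to\R$. I would first exploit the defining properties of a cycle to extract an infinite path inside $(S,\C)$. Starting from any node $v_0\in S$, since every node of $S$ is the tail of a (unique) hyperarc $a(v_0)\in\C$, and hyperarcs in $\C$ are feasible under $s$, the constraint for $A=a(v_0)$ reads $s(t_A)\ge\min_{v\in H_A}\{s(v)-w_A(v)\}$, so there exists a head $v_1\in H_{a(v_0)}\subseteq S$ with $s(v_1)-w_{a(v_0)}(v_1)\le s(v_0)$, i.e.
\[
   s(v_1) \le s(v_0) + w_{a(v_0)}(v_1).
\]
Since $v_1\in S$, I can repeat the argument with $a(v_1)$, producing $v_2\in S$ with $s(v_2)\le s(v_1)+w_{a(v_1)}(v_2)$, and so on, generating an infinite path $v_0,v_1,v_2,\ldots$ in the cycle along which the schedule values satisfy $s(v_{t+1})\le s(v_t)+w_{a(v_t)}(v_{t+1})$ for every $t\ge 0$.

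Because $S$ is finite, this infinite path must revisit some node; by the observation in the definition of negative cycle, it contains a finite cyclic sequence $v_i,v_{i+1},\ldots,v_{i+p}$ with $v_{i+p}=v_i$ the only repeated node. Telescoping the chain of inequalities along this sub-sequence yields
\[
   s(v_{i+p}) \le s(v_i) + \sum_{t=i}^{i+p-1} w_{a(v_t)}(v_{t+1}),
\]
and since $s(v_{i+p})=s(v_i)$ this forces $\sum_{t=i}^{i+p-1} w_{a(v_t)}(v_{t+1})\ge 0$. But the cycle $(S,\C)$ is negative, so by definition this same sum over the finite cyclic sequence is strictly less than $0$ — a contradiction. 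Hence no feasible schedule can exist.

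The only delicate point is the bookkeeping: one must make sure that the hyperarc used at step $t$ is exactly $a(v_t)$, the unique arc of $\C$ having $v_t$ as its tail (guaranteed by item~2 of the cycle definition), so that the walk stays inside $\C$ and the weights that appear are precisely the $w_{a(v_t)}(v_{t+1})$ occurring in the definition of ``negative cycle''; and one must invoke the right direction of the feasibility inequality (the multi-head constraint gives an upper bound on $s$ at \emph{some} head). I expect this indexing/termination argument — extracting the finite cyclic sequence from the infinite path and aligning it with the definition — to be the main obstacle, while the telescoping itself is routine. Note the argument uses only that $s$ is feasible, not integrality, so it applies verbatim to real-valued schedules.
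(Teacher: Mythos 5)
Your proposal is correct and follows essentially the same route as the paper: both arguments pick, for each hyperarc of the cycle, the head witnessing the feasibility inequality (your $\arg\min_{v}\{s(v)-w_A(v)\}$ is exactly the paper's head of maximum reduced slack), follow the induced infinite path until it closes into a finite cyclic sequence, and telescope to obtain a non-negative weight sum contradicting negativity. The only difference is presentational — the paper argues by contraposition using the language of feasible potentials and reduced slack values, while you argue by contradiction directly on the schedule inequalities.
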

\begin{proof}
	By contraposition.
	Let $\H$ be a consistent \TN and let $p$ be a feasible potential for $\H$.
	Also, let $(S,\C)$ be any cycle of $\H$;
	we will show that $(S,\C)$ is not negative.
	For every $A\in \C$, let $h_A$ be the head of $A$ with maximum reduced slack value:
	\[ 
	h_A \triangleq \arg \max_{v\in H_{A}} \{w^p_A(v) \}.
	\]

	Let us consider the infinite path in $(S,\C)$ built choosing, at each node $v_t$, $h_{a(v_t)}$ as the following node.
	As already seen, such a path contains at least one finite cyclic sequence $v_h, v_{h+1}, \ldots, v_k$ with $v_k=v_h$.
	The sum of weights of the finite cyclic sequence is given by 
	\[
            \sum_{t=h}^{k-1} w_{a(v_t)}(v_{t+1}) = \sum_{t=h}^{k-1} w^p_{a(v_t)}(v_{t+1})
        \]
	for every potential $p$;
	since $p$ is feasible, all terms of the last sum are non-negative.
	It follows that $(S,\C)$ is not negative.
%
\end{proof}

At first sight, it may appear that checking whether $(S,\C)$ is a negative cycle might take exponential time since one should check a possibly exponential
number of cyclic sequences.
The next lemma shows instead that it is possible to check the presence of negative cycle in polynomial time.

\begin{lemma}
Let $(S, \C)$ be a cycle in a \TN.
Then checking whether $(S,\C)$ is a negative cycle can be done in polynomial time.
\end{lemma}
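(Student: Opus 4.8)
The plan is to reduce the negativity test for $(S,\C)$ to a maximum-cycle-mean computation on an ordinary weighted digraph of polynomial size. Associate to $(S,\C)$ its \emph{choice graph} $G_\C=(S,E_\C)$: for every node $v\in S$, letting $a(v)$ be the unique hyperarc of $\C$ whose tail is $v$, put into $E_\C$ one ordinary arc $(v,u,w_{a(v)}(u))$ for each head $u\in H_{a(v)}$. By condition~1 in the definition of cycle each such head $u$ lies in $S$, so $G_\C$ is well defined; it has $|S|\le n$ nodes and $|E_\C|=\sum_{A\in\C}|H_A|\le m$ arcs, and, since every $H_A$ is nonempty, every node of $G_\C$ has out-degree at least one, whence $G_\C$ contains at least one simple cycle.

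The first key step is to observe that the finite cyclic sequences of $(S,\C)$ are \emph{exactly} the simple cycles of $G_\C$, weight for weight: a finite cyclic sequence $v_1,v_2,\ldots,v_p=v_1$ traverses at step $t$ precisely the arc of $G_\C$ from $v_t$ to $v_{t+1}\in H_{a(v_t)}$ of weight $w_{a(v_t)}(v_{t+1})$; and conversely any simple cycle of $G_\C$, traversed repeatedly, is an infinite path of $(S,\C)$ whose first finite cyclic sequence is that cycle itself. Hence $(S,\C)$ is a negative cycle if and only if $w(C)<0$ for every simple cycle $C$ of $G_\C$, and since any closed walk in $G_\C$ decomposes into simple cycles this is in turn equivalent to $w(C)<0$ holding for \emph{all} cycles $C$ of $G_\C$.

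The second key step is to note that, writing $\mu^*(G_\C)=\max_C w(C)/|C|$ for the maximum cycle mean of $G_\C$ (the maximum ranging over its cycles, a nonempty set by the out-degree remark above), the condition ``$w(C)<0$ for all cycles $C$'' is equivalent to $\mu^*(G_\C)<0$, because dividing a cycle's weight by its positive length preserves its sign. Finally, $\mu^*(G_\C)$ is computable in $O(nm)$ time, e.g.\ by Karp's minimum-cycle-mean algorithm run on $G_\C$ with negated weights (applied to each nontrivial strongly connected component of $G_\C$ and then maximised); so the whole procedure—build $G_\C$, compute $\mu^*(G_\C)$, answer ``negative'' precisely when $\mu^*(G_\C)<0$—runs in polynomial time in the size of $\H$. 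I do not foresee a genuine obstacle: the only point needing care is the equivalence at the heart of the argument, namely the recognition that the seemingly exponential quantification over finite cyclic sequences in the definition is nothing but the cycle space of the polynomial-size graph $G_\C$, together with the routine sign-preserving passage from ``every cycle weight is negative'' to the single scalar test $\mu^*(G_\C)<0$.
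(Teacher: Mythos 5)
Your proof is correct, and its central move --- recognizing that the seemingly exponential quantification over finite cyclic sequences is really a quantification over the simple cycles of a polynomial-size ordinary digraph induced by the choice function $a(\cdot)$ --- is precisely the paper's reduction: the paper builds the same graph $G_{\C}$, only with negated weights. Where you genuinely diverge is in the subroutine that decides whether every cycle of $G_{\C}$ has strictly negative weight. You compute the maximum cycle mean $\mu^*(G_{\C})$ (via Karp's algorithm on each strongly connected component, with negated weights) and test $\mu^*(G_{\C})<0$, correctly noting that dividing by the positive cycle length preserves the sign. The paper instead runs Bellman--Ford on the negated graph to obtain either a feasible potential $\pi$ or a violating cycle, and then handles the strictness of the inequality by checking that the subgraph of arcs with zero reduced weight $w^{\pi}$ is acyclic (a zero-weight cycle being exactly a non-strictly-negative cyclic sequence). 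Both subroutines run in $O(nm)$ time, so the two arguments are equivalent in cost and rigor; your version additionally yields the scalar $\mu^*$ as a quantitative margin of negativity, while the paper's potential-based test is slightly more elementary and dovetails with the potential machinery already set up in its background section.
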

\begin{proof}
Consider the weighted graph $G=(S,\cup_{t\in S} A_t)$ where each hyperarc $a(t)$,
  for every $t\in S$, is transformed into a set of standard arcs as follows:
    \[a(t)\leadsto A_t \triangleq \{(t,v,-w_{a(t)}(v)) \mid v\in H_{a(t)})\},\; \forall\, t\in S.\]

Notice that $G$ is thus an \STN. Checking whether $(S,\C)$ is a negative cycle amounts to check whether all cycles in $G$ have strictly positive weight.
To do this, firstly, a potential function $\pi$ for $G$ is determined by Bellman-Ford algorithm.
If the algorithm returns a negative cycle instead of $\pi$, then there is \emph{no} negative cycle in $(S,\C)$ and the check ends.

Otherwise, since $w(C)=w^\pi(C) \geq 0$ for every cycle $C$ of $G$, it is necessary to verify that no cycle in $G$ has $w^\pi(C) = 0$.
This check can be done by verifying the acyclicity of the subgraph of $G$ comprising only arcs $a$ of $G$ with $w^{\pi}(a) = 0$.
The check that a graph is acyclic can be done in linear time by a depth first visit~\cite{Cormen01}.
\end{proof}

A hypergraph $\H$ is called \emph{conservative} when it contains no negative cycle.
In the next sections we will provide a pseudo-polynomial time algorithm that always returns either a feasible scheduling or a negative cycle, thus extending the
validity of the classical good-characterization of \STN consistency to general \TN consistency.
Here, we anticipate the statement of the main result in order to complete this general introduction of \TN{s}.
\begin{theorem}\label{teo:charCons}
   A \TN $\H$ is \textit{consistent} if and only if it is \textit{conservative}.
   Moreover, when all weights are integral, then $\H$ admits an integral scheduling if and only if it is conservative.
\end{theorem}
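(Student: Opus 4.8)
Since Lemma~\ref{lem:nc} already shows that a \TN with a negative cycle admits no feasible schedule, the implication ``consistent $\Rightarrow$ conservative'' (and its integral variant, which is weaker) is in hand. The plan is therefore to concentrate on the converse, ``conservative $\Rightarrow$ consistent'', together with the integral refinement. I would prove this by reducing \TNC to a threshold-$0$ Mean Payoff Game and then invoking the memoryless determinacy of \MPG{s} (recalled in Section~\ref{sect:meanpayoffgames}) together with the good characterization of conservative graphs, Theorem~\ref{teo:charConservativeGraphs}. The pseudo-polynomial-time algorithmic content is exactly the reduction carried out in Section~\ref{sect:reductions}; here I only need its qualitative consequence.

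\emph{The game.} Given $\H=(V,\A)$, I would build a finite arena with a \textsc{Min}-vertex for each node of $V$ and a \textsc{Max}-vertex $u_A$ for each hyperarc $A=(t_A,H_A,w_A)$. From a node $v$, \textsc{Min} moves with weight $0$ to $u_A$ for any $A$ with $t_A=v$ (a node that is the tail of no hyperarc gets a weight-$0$ self-loop, so the arena is total); from $u_A$, \textsc{Max} moves with weight $w_A(v)$ to any head $v\in H_A$ (which is nonempty, so $u_A$ also has a successor). Play so that \textsc{Max} tries to keep the limiting average payoff nonnegative.

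\emph{The two halves.} First I would show that if $\H$ is not conservative then \textsc{Min} wins from some vertex: a negative cycle $(S,\C)$ induces the memoryless strategy ``from $v\in S$ go to $u_{a(v)}$''; since $H_{a(v)}\cup\{t_{a(v)}\}\subseteq S$, every play from $S$ stays in $S$ and its eventual cyclic sequence $v_h,\dots,v_k=v_h$ has weight $\sum_t w_{a(v_t)}(v_{t+1})<0$ by definition, so \textsc{Min} wins. Second I would show that if \textsc{Max} wins everywhere then $\H$ is consistent: fix a memoryless winning strategy $\sigma$ and let $G_\sigma$ be the arena restricted to the edges $\sigma$ uses at \textsc{Max}-vertices; since $\sigma$ beats every \textsc{Min}-response, every cycle of $G_\sigma$ has nonnegative weight, so by Theorem~\ref{teo:charConservativeGraphs} $G_\sigma$ has a feasible potential $\pi$, integral if the weights are. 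For each $A$, $\sigma$ picks a head $v^{*}_A$, and feasibility of $\pi$ on the two edges $t_A\to u_A$ and $u_A\to v^{*}_A$ yields $\pi(u_A)\le\pi(t_A)$ and $w_A(v^{*}_A)+\pi(u_A)-\pi(v^{*}_A)\ge 0$, hence $w^{\pi}_A(v^{*}_A)=w_A(v^{*}_A)+\pi(t_A)-\pi(v^{*}_A)\ge 0$ and so $w^{\pi}_A\ge 0$. Thus $\pi|_V$ is a feasible potential of $\H$, i.e.\ a feasible schedule (integral when the weights are). By memoryless determinacy the two halves combine: $\H$ conservative $\Rightarrow$ \textsc{Max} wins everywhere $\Rightarrow$ $\H$ consistent, with an integral schedule when the weights are integral; alternatively the integral claim also follows a posteriori from Lemma~\ref{Lem:int_sched}.

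\emph{Main obstacle.} The delicate part will not be any single inequality but pinning down the dictionary between the game and the hypergraph: that \textsc{Min}'s memoryless strategies are exactly the tail-assignments $v\mapsto a(v)$ underlying \TN cycles, that the reachable part of such a strategy really is a ``cycle'' $(S,\C)$ in the precise sense of the definition (the identity $S=\bigcup_{A\in\C}(H_A\cup\{t_A\})$ and the uniqueness-of-tail condition), that the finite cyclic sequences of $(S,\C)$ correspond to the simple cycles of $G_\sigma$ through $V$-vertices, and that the strict ``$<0$'' defining a negative cycle lines up with the ``limiting average ${}<0$'' on the game side. One must also dispose of the small structural points noted above (tail-less nodes, several hyperarcs sharing a tail, keeping the arena total), and ultimately defer to Section~\ref{sect:reductions} the claim that the resulting game---hence \TNC---is solvable in pseudo-polynomial time and effectively produces either the schedule or the negative cycle.
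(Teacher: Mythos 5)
Your overall strategy---routing the converse direction through the \MPG reduction and memoryless determinacy---is the paper's strategy, and your second half (``\textsc{Max} wins everywhere $\Rightarrow$ $\H$ consistent'') is essentially Lemma~\ref{lem:mainReductionGwin}. But there is a genuine logical gap in how you combine the halves. You prove (a) ``$\H$ not conservative $\Rightarrow$ \textsc{Min} wins from some vertex'' and (b) ``\textsc{Max} wins everywhere $\Rightarrow$ $\H$ consistent,'' and then assert ``$\H$ conservative $\Rightarrow$ \textsc{Max} wins everywhere.'' Determinacy does not give you this from (a): determinacy only says that if \textsc{Max} does not win everywhere then \textsc{Min} wins somewhere, so to conclude non-conservativity you need the \emph{converse} of (a), namely ``\textsc{Min} wins from some vertex $\Rightarrow$ $\H$ has a negative cycle.'' What you actually proved in (a) is the easy direction; together with determinacy it yields ``\textsc{Max} wins everywhere $\Rightarrow$ $\H$ conservative,'' which (like Lemma~\ref{lem:nc}) only re-establishes ``consistent $\Rightarrow$ conservative'' and contributes nothing to ``conservative $\Rightarrow$ consistent.''

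The missing step is precisely the content of Theorem~\ref{Teo:MainAlgorithms}-(3) (Algorithm~\ref{ALG:PseudocodeNegativeCycles}), which is what the paper's own proof of Theorem~\ref{teo:charCons} cites for this direction: from a memoryless winning strategy $\pi_0$ for \textsc{Min} on its winning region $W_0$ one must \emph{extract} a \TN negative cycle, taking $S = W_0\cap V$ and $\C=\{\pi_0(v)\}_{v\in S}$, checking that the heads of the selected hyperarcs remain inside $W_0$ (a winning strategy for \textsc{Min} cannot move into \textsc{Max}'s winning region), so that $(S,\C)$ meets the structural definition of a cycle, and that every finite cyclic sequence of $(S,\C)$ is realizable as the terminal cycle of a play consistent with $\pi_0$ and therefore has negative weight. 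This is the nontrivial half of the ``dictionary'' you list under your main obstacle; deferring it wholesale to Section~\ref{sect:reductions} is legitimate but then your argument is the paper's proof, and as written your first half points in the wrong direction and cannot substitute for it. (Your treatment of the integral claim, via Theorem~\ref{teo:charConservativeGraphs} applied to $G_\sigma$ or via Lemma~\ref{Lem:int_sched}, is fine.)
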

\begin{proof}
If $\H$ is consistent, then it is conservative by Lemma~\ref{lem:nc}.
If $\H$ is not consistent, then there is a negative cycle as shown in Theorem~\ref{Teo:MainAlgorithms}-(3).
The existence of an integral scheduling when all weights are integral is guaranteed by Lemma~\ref{Lem:int_sched}. 
\end{proof}

\section{Mean Payoff Games}%
\label{sect:meanpayoffgames}
In this section, we propose an introduction to Mean Payoff Games (\MPG{s}) tailored to the needs of the present work.
MPGs represent a well-studied model for representing some kinds of two-player dynamics and we will show in Section 6 that there is a substantial equivalence
between the MPG and the HyTN model, which will allow us to exploit some important algorithmic and structural results.

An \MPG is a weighted directed graph $G=(V_0\cupdot V_1, E)$ whose node set $V$ is partitioned into two disjoint sets $V_0$ and $V_1$, where, for $p=0,1$, the
nodes in $V_p$ are those under control of Player~$p$.
Even with these graphs we have no loops and no parallel arcs.
It is also assumed that every node has at least one outgoing arc.
Notice that, in general, $(V_0,V_1)$ does not need to be a bipartiton of $G$, \ie $E$ may contain arcs with both endpoints in $V_0$, or with both endpoints in
$V_1$.

Each play starts with a pebble placed at some node $v_0\in V_0\cupdot V_1$ and consists in a sequence of moves.
Move~$t$ begins with the pebble placed in node $v_{t-1}$ and is played by the Player~$p$ such that $v_{t-1}\in V_p$:
the player chooses any arc $e\in E$ with tail $t_e = v_{t-1}$ and moves the pebble along $e$; at the end of the move the pebble is in node $v_t = h_e$.
The game ends as soon as $v_t = v_{t'}$ for some $t>t'$, \ie when the pebble comes back to an already visited node $v_{t'}$.
At this point, the pebble has traversed a cyclic sequence of arcs $e_{t'+1}, \ldots, e_t$ and Player~$0$ ``pays'' to Player~$1$ the average weight of the
visited cycle:
\[ 
	\frac{1}{t-t'}\sum_{i=t'+1}^t w(e_i).
\] 
If this amount is negative, then Player~$0$ \emph{wins} the game, otherwise the winner is Player~$1$.

A \textit{strategy} for Player~$p$ is a mapping that, given all the previous visited nodes and the current node, returns which node has to be visited in the
next move;
a strategy is said to be  \textit{positional} (or \emph{memoryless}) if it depends only on the current position $v_t$ and does not take into account all the
previous history.
If $s\in V_0\cup V_1$ and Player~$p$ has a strategy leading him to win any possible play starting at $v_0 = s$, then we say that $s$ is a \textit{winning start
position} for Player~$p$.
We denote by $W_p$ the set of winning start positions for Player~$p$.
A \emph{winning strategy} for Player~$p$ leads Player~$p$ to win every play started from any node in $W_p$.
Since these finite games are zero-sum, \ie what won by a player is what lost by the other one, then they admit a \textit{game value} $\nu$:
for each start position $s\in V$ of the game, there exists a $\nu_s\in\R$ such that Player~$0$ has a strategy ensuring payoff at most $\nu_s$, while Player~$1$
has a strategy ensuring payoff at least $\nu_s$.

It is worthwhile to consider an infinite variant of the model, in which the game does not stop, and continues for an infinite number of steps.
In this model, Player~$1$ wants to maximize the limit inferior of the average weight: \[\liminf_{n\rightarrow\infty}\frac{1}{n}\sum_{t=1}^n w(v_{t-1},v_t)\] 
Symmetrically, Player~$0$ wants to minimize the limit superior of the same average weight: \[\limsup_{n\rightarrow\infty}\frac{1}{n}\sum_{t=1}^n w(v_{t-1},v_t)\]

In their \emph{Determinacy Theorem}, Ehrenfeucht and Mycielski~\cite{EhrenfeuchtMycielski:1979} proved that any infinite game admits a value $\nu^{\infty}$, and
that this value equals the one of the finite counterpart game on every start position, \ie $\nu^{\infty}_s = \nu_s$ for every $s\in V_0\cup V_1$.
Moreover, they proved the existence of positional strategies which are \emph{optimal} for both variants of the model:
when Player~$p$ limits himself to an optimal strategy $\pi_p$, \ie when, for every $v\in V_p$, he disregards all arcs with tail in $v$ except the one with head
in $\pi_p(v)$, then he will secure himself the optimal payoff $\nu$ in every play, finite or infinite, however the adversary plays.
The graph $G_{\pi_p}$ obtained from $G$ by dropping all arcs with tail in $V_p$ not prescribed by $\pi_p$ is called the \emph{projection} of the game $G$ on
$\pi_p$, and is a solitaire game whose value can be easily computed by means of a simple variant of Bellman-Ford algorithm.
Therefore, the  Ehrenfeucht and Mycielski's results are already sufficient for determining a simple exponential time algorithm computing the node values
and the two optimal positional strategies in an \MPG: the algorithm consists in evaluating each possible strategy for one of the two players as a solitaire
game for determining the optimal one.
In the literature there are many local search algorithms that explore this space in a more efficient way~\cite{BV07,Sch08,BrimCha12,ScheweTV15} and some of
them have been proven to be practically efficient in many settings by experiments~\cite{Sch08,BrimCha12}.
Moreover, the global optimization problem of computing the best strategies for one player, according to a given metric, has been shown to have the property that
every local optimum is also a global one for many complete metrics~\cite{BV07}.

\sloppy
As another line of research, Zwick and Paterson~\cite{ZwickPaterson:1996} proposed pseudo-polynomial time algorithms for
computing values of games, as well as positional optimal strategies. 
In particular, they considered the following four algorithmic problems:
\begin{enumerate}
	\item \MPGD{$(\nu,s)$}: given a real number $\nu$ and a start position $s$, decide whether $\nu_s \geq \nu$;
	\item \MPGT{($T$)}: given a real number $T$, determine for which nodes $s\in V$ it holds that $\nu_s \geq T$;
	\item \MPGV: compute the optimal values $\nu_s$ for all $s\in V$.
	\item \MPGS: assuming $\nu_s \geq 0$ ($\nu_s < 0$) for every $s\in V$, \emph{synthesize} a positional winning strategy for Player~$1$ (Player~$0$);
\end{enumerate}
and they proved the following theorem:
\fussy

\begin{theorem}[\hspace{-1sp}\cite{ZwickPaterson:1996}]\label{Teo:ZPSummary}
Let $G=(V, E)$ be a mean payoff game.
Assume all weights are integers and let $W=\max_{e\in E}{|w(e)|}$. 
Then the following hold:
\begin{enumerate}
	\item \MPGT{{$(T)$}} can be solved in time $\Ord(|V|^2 |E| \, W)$ when $T\in \Z$, whereas it can be solved in time $\Ord(|V|^3 |E| \, W)$ when $T\in \R$;
	\item \MPGV can be solved in time $\Ord(|V|^3 |E| \, W)$;
	\item \MPGS can be solved in time $\Ord(|V|^4 |E| \log(|E|/|V|) \, W)$.
\end{enumerate}
\end{theorem}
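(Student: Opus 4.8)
The plan is to reduce all three running-time claims to a single primitive: the value $v_k(s)$ of the \emph{$k$-step truncation} of the game, in which exactly $k$ moves are played and Player~1 (the maximizer) collects $\sum_{t=1}^{k} w(v_{t-1},v_t)$ with no averaging. First I would note that the whole vector $(v_k(s))_{s\in V}$ is computed by a backward dynamic program: $v_0(\cdot)\equiv 0$ and, for $j=1,\dots,k$, set $v_j(u)=\max_{(u,v)\in E}\{w(u,v)+v_{j-1}(v)\}$ when $u\in V_1$ and $v_j(u)=\min_{(u,v)\in E}\{w(u,v)+v_{j-1}(v)\}$ when $u\in V_0$; each level relaxes every arc once, so this costs $O(k\,|E|)$ time. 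Everything else is a matter of choosing $k$ just large enough that $v_k(s)/k$ determines the mean-payoff value $\nu_s$.

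The quantitative core is the bound $|\,v_k(s)-k\,\nu_s\,|\le(|V|-1)\,W$, valid for every $s$ and $k$. I would derive it from the Ehrenfeucht--Mycielski theorem recalled above, which supplies \emph{optimal positional} strategies $\pi_0$ and $\pi_1$. In the projection $G_{\pi_0}$, Player~0 is committed, so any $k$-step walk from $s$ decomposes into a family of cycles plus one simple residual path of length $<|V|$; every such cycle is reachable from $s$ in $G_{\pi_0}$, and Player~1 could reach it and loop, so optimality of $\pi_0$ from $s$ forces its mean weight to be $\le\nu_s$, while the residual path contributes at most $(|V|-1)W$ in absolute value. Hence every $k$-step walk from $s$ in $G_{\pi_0}$ has weight $\le k\,\nu_s+(|V|-1)W$, so Player~0's strategy $\pi_0$ witnesses $v_k(s)\le k\,\nu_s+(|V|-1)W$; the symmetric argument with $\pi_1$ witnesses the matching lower bound. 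I would then use the structural fact (also a consequence of optimal positional play) that $\nu_s=w(C)/|C|$ for some cycle $C$ with $|C|\le|V|$, so $\nu_s$ is a fraction whose denominator lies in $\{1,\dots,|V|\}$ and two distinct admissible values differ by more than $1/|V|^2$. Therefore, once $k$ exceeds roughly $2|V|^3 W$, rounding $v_k(s)/k$ to the nearest admissible fraction returns $\nu_s$ exactly. This settles \MPGV in time $O(|V|^3|E|\,W)$, and \MPGT with a real threshold $T$ follows by one such computation and then comparing each $\nu_s$ with $T$, again $O(|V|^3|E|\,W)$.

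When $T\in\Z$ one can be cheaper. Subtracting $T$ from every weight, it suffices to decide the sign of each $\nu_s$: if $\nu_s\ge0$ then $v_k(s)\ge-(|V|-1)W$, whereas if $\nu_s<0$ then $\nu_s\le-1/|V|$ and $v_k(s)\le k\,\nu_s+(|V|-1)W\le-k/|V|+(|V|-1)W$, so as soon as $k>2|V|^2 W$ these two ranges are disjoint and a single $k$-step dynamic program with $k=\Theta(|V|^2 W)$ reads off the answer for all $s$ at once, in $O(|V|^2|E|\,W)$ time, proving item~1. For \MPGS (item~3) assume $\nu_s\ge0$ for all $s$ and synthesise a positional winning strategy for Player~1, the other case being symmetric after negating all weights and swapping the players. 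I would treat the Player-1 nodes one at a time, maintaining the invariant that all node values stay $\ge0$: for the current node $v$, optimality of a positional Player-1 strategy in the current restricted game gives an out-arc $e$ at $v$ whose choice alone preserves the invariant, and since enlarging Player~1's options at $v$ can only raise node values, every set of out-arcs of $v$ that contains $e$ also preserves it; splitting the out-arcs of $v$ in half, at least one half therefore contains a good arc and preserves the invariant, and one value computation (item~2) tells us which, so a binary search isolates a good arc in $O(\log d_v)$ such computations. After committing $v$ to it and deleting its other out-arcs we move on; since $\sum_{v\in V_1}d_v\le|E|$, convexity of the logarithm bounds the number of value computations by $O(|V|\log(|E|/|V|))$, for a total of $O(|V|^4|E|\log(|E|/|V|)\,W)$.

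The main obstacle is the approximation step itself: converting the purely existential Ehrenfeucht--Mycielski guarantee into the quantitative estimate $|v_k(s)-k\nu_s|\le(|V|-1)W$ via the cycle-plus-path decomposition of truncated plays, and marrying it with the bounded-denominator property of the values so that the final rounding is provably exact and the horizon $k$ can be pinned to $\Theta(|V|^2 W)$ or $\Theta(|V|^3 W)$ as the case requires. Once that lemma is in hand, all three running times drop out by bookkeeping on the size of $k$ and, for synthesis, on the $O(|V|\log(|E|/|V|))$ rounds of binary search.
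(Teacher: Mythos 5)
The paper offers no proof of this theorem: it is quoted as an external result of Zwick and Paterson, so there is no internal argument to compare against. Your reconstruction is, in substance, the original Zwick--Paterson proof --- the $k$-step value iteration $v_k$, the additive approximation of $k\nu_s$ by $v_k(s)$ via the cycle-plus-residual-path decomposition of plays against an optimal positional strategy, the bounded-denominator property $\nu_s=w(C)/|C|$ with $|C|\le|V|$ that makes the final rounding exact, and the binary-search edge-elimination scheme for \MPGS --- and it is essentially correct. Two small quantitative slips are worth flagging. First, the decomposition gives $|v_k(s)-k\nu_s|\le 2(|V|-1)W$ rather than $(|V|-1)W$: the cycles contribute at most $(k-r)\nu_s$ where $r<|V|$ is the length of the residual path, and the correction term $-r\nu_s$ can itself be as large as $(|V|-1)W$ in absolute value, so it must be added to the $rW$ coming from the residual path; this only changes the constant in the choice of the horizon $k$, not the asymptotics. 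Second, the bound $\sum_{v\in V_1}\log d_v\le |V_1|\log\bigl(|E|/|V_1|\bigr)$ follows from concavity of the logarithm (Jensen), not convexity. Neither issue affects any of the three stated running times.
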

Then, they observed that \MPGD is the basic decision problem for \MPG{s} in the sense that several natural questions for \MPG{s},
like evaluating the value $\nu_s$ for every node $s$ or constructing the optimal positional strategies, may all be Turing-reduced to it.
They also pointed out that the existential results of Ehrenfeucht and Mycielski~\cite{EhrenfeuchtMycielski:1979} 
already implies that $\MPGD\in \NP\cap \coNP$ and asked whether there might exist a strongly polynomial time decision procedure.
Proving the existence of such algorithm is an open problem~\cite{brim2011faster}.
Finally, they showed how to reduce mean payoff games to other important families of games on graphs, like discounted payoff games and simple stochastic games.

The complexity status of \MPGD has been since updated by proving that it lays in $\UP\cap \coUP$ by Jurdzi\'nski in \cite{Jur98}.

In recent years, some other interesting results have been proven.
Notably, in 2007 Lifshits, Pavlov~\cite{LifshitsPavlov:2007} proposed a \emph{potential theory} for \MPG{s} and in 2011 Brim et al.~\cite{brim2011faster}
obtained faster algorithms exploiting results obtained in the the fields of \emph{Energy Games} and \emph{energy progress measures}, which are intimately
related to the potentials studied in~\cite{LifshitsPavlov:2007}.


Their algorithmic results are summarized in the following theorem.

\begin{theorem}[\hspace{-1sp}\cite{brim2011faster}]\label{Teo:FasterAlgo}
	For \MPG{s} in which all weights are integers {and for $T\in\Z$}, the Value Iteration Algorithm~\cite{brim2011faster}
	solves \MPGT{$(T)$} and \MPGS in time $O(|V|\,|E|\,W)$, where $W=\max_{e\in E}{|w(e)|}$. 
\end{theorem}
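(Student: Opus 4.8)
The plan is to reconstruct the argument of Brim et al.\ via \emph{energy progress measures}. First I would reduce \MPGT{$(T)$} to the threshold $0$: since $T\in\Z$, replacing every arc weight $w(e)$ by $w(e)-T$ keeps all weights integral and shifts every value $\nu_s$ by exactly $-T$, so it suffices to identify the set $W_1$ of start positions from which Player~$1$ can guarantee a nonnegative mean payoff. Then I would invoke the classical equivalence between mean payoff games and \emph{energy games}: Player~$1$ secures mean payoff $\geq 0$ from $v$ if and only if there is a finite initial credit $c(v)\in\N$ together with a positional strategy of Player~$1$ keeping the running sum $c(v)+\sum_{t=1}^{n}w(v_{t-1},v_t)$ nonnegative for all $n$; and whenever any finite credit works, a credit of at most $(|V|-1)W$ works (because, fixing an optimal positional Player~$1$ strategy, every cycle Player~$0$ can force from a winning node is nonnegative, so the worst energy deficit is attained along a simple path and is bounded by $(|V|-1)W$).

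Next I would set up the finite lattice $L=\{0,1,\ldots,(|V|-1)W\}\cup\{\top\}$, with $\top$ on top, and call $f\colon V\to L$ an energy progress measure when it is locally consistent: for every $v\in V_1$ some arc $(v,u)\in E$ satisfies $f(v)\succeq f(u)\ominus w(v,u)$, and for every $v\in V_0$ \emph{all} arcs $(v,u)\in E$ satisfy $f(v)\succeq f(u)\ominus w(v,u)$, where $a\ominus b \triangleq \max(0,a-b)$ for $a\neq\top$ and $a\ominus b \triangleq \top$ once it would exceed $(|V|-1)W$ or when $a=\top$. The pointwise-least-consistency operator $\mathrm{Lift}$, which raises $f(v)$ to the minimum value restoring consistency at $v$, is monotone on $L^{V}$, so by Knaster--Tarski it has a least fixed point $f^\ast$ obtainable from the bottom element $f\equiv 0$ by iterating $\mathrm{Lift}$. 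I would then prove the characterisation: $f^\ast(v)<\top$ exactly on $W_1$, and at every $v\in V_1$ with $f^\ast(v)<\top$ the arc witnessing consistency yields a positional winning strategy for Player~$1$ on $W_1$ (one checks that the witnessing arcs keep the play inside $\{f^\ast<\top\}$ and never let the credit drop below $0$, hence never close a negative cycle). The easy direction is that a finite energy progress measure produces such a strategy; the substantive direction is that value iteration from $0$ does not overshoot, which I would obtain by showing that the function $c^\ast$ assigning to each node its minimal sufficient initial credit (and $\top$ if none exists) is itself an energy progress measure, hence lies above $f^\ast$, while minimality of $f^\ast$ among consistent functions $\geq 0$ forces equality.

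Finally I would do the running-time analysis of the worklist implementation (``value iteration''). Each application of $\mathrm{Lift}$ strictly increases some $f(v)$, and $f$ is monotone and bounded inside $L$, so the value of a fixed node is raised at most $(|V|-1)W+1=\Ord(|V|W)$ times. The implementation maintains, for each $v\in V_0$, a counter of how many outgoing arcs currently satisfy $v$'s condition; when $f(u)$ is raised only the in-arcs $(v,u)$ are inspected, a $V_1$-node is re-lifted immediately, and a $V_0$-node is re-scanned over its out-arcs only when its counter reaches $0$, which occurs at most once per raise of $f(v)$. Charging each raise of $f(u)$ the cost $\Ord(\deg^{\mathrm{in}}(u))$ and each raise of $f(v)$ with $v\in V_0$ the extra cost $\Ord(\deg^{\mathrm{out}}(v))$, the total is $\Ord\!\big(\sum_{v}(\#\text{raises at }v)\cdot(\deg^{\mathrm{in}}(v)+\deg^{\mathrm{out}}(v))\big)=\Ord(|V|W\cdot|E|)$, so \MPGT{$(0)$}, hence \MPGT{$(T)$}, runs in time $\Ord(|V|\,|E|\,W)$. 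For \MPGS: under the hypothesis $\nu_s\geq 0$ for all $s$ one has $f^\ast(v)<\top$ for every $v$, and reading off Player~$1$'s witnessing arcs gives the required positional winning strategy; the complementary case $\nu_s<0$ for all $s$ is symmetric, by negating every weight and swapping the roles of the two players (plus, if needed, a harmless integer shift to convert the strict inequality into a non-strict one), a linear-time reduction. The hard part will be the correctness of the second step — proving that iterating $\mathrm{Lift}$ from $0$ converges to exactly the minimal energy progress measure and that its witnesses are globally, not just locally, winning — since once the counter-based bookkeeping is in place the complexity bound in the third step is a routine amortised count.
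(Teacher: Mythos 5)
The paper does not actually prove this statement: it is imported verbatim from Brim et al.~\cite{brim2011faster}, and the only trace of its mechanism in the text is Remark~\ref{Rem:FeasibleSchedAlgo}, which records the defining local inequalities of an energy progress measure. Your reconstruction is, in substance, the argument of the cited paper itself: the shift from threshold $T$ to threshold $0$, the equivalence with energy games and the $(|V|-1)W$ credit bound, the least fixed point of the monotone lifting operator on the truncated lattice, the characterisation of $W_1$ as $\{f^\ast<\top\}$ with the witnessing arcs yielding Player~1's positional strategy, and the counter-based amortised analysis giving $\Ord(|V|\,|E|\,W)$ are exactly the ingredients of Brim et al.'s value iteration. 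So there is nothing on the paper's side to compare against; your proposal is a faithful proof of the cited theorem, and it is consistent with how the paper later uses progress measures.

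One point deserves care, because the paper relies on it (Algorithm~\ref{ALG:PseudocodeNegativeCycles} invokes \MPGS{} to obtain a strategy for Player~0). Your treatment of the $\nu_s<0$ branch of \MPGS{} by negating weights and swapping players must convert the strict inequality $-\nu_s>0$ into a non-strict one; the shift you allude to (scale by $|V|$ and subtract $1$, so that every strictly negative cycle of $G$ becomes a non-positive one) is correct but replaces $W$ by roughly $|V|W$, so this route yields only $\Ord(|V|^2\,|E|\,W)$ for Player~0's synthesis rather than the claimed $\Ord(|V|\,|E|\,W)$. Brim et al.\ avoid this by extracting the minimizer's strategy from the same run of the lifting procedure (essentially from the order in which vertices are lifted to $\top$ and the arcs responsible for the last lifts), rather than by reducing to the maximizer's problem on a rescaled game. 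This is a fixable but genuine gap in the stated complexity for that one sub-case; everything else in your proposal is sound.
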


We remark that both the algorithm of Paterson and Zwick~\cite{ZwickPaterson:1996}
and the Value Iteration Algorithm~\cite{brim2011faster}
prescribe well defined procedures even if the weights on the arcs are real values. What is lost in running these algorithms on real weights is only the pseudo-polynomial upper bound on their running time. 

For our purposes, the family of pseudo-polynomial algorithms for \MPG{s} is the best option. Indeed, in most of temporal workflow graphs
all weights are expressed by integers of relatively small magnitude with respect to the intrinsic temporal granularity of the considered workflow.
For example, in a workflow containing temporal distance constraints of days, the commonly adopted temporal granularity
is the ``minute'' (m) and, therefore, all weights can be assumed to be less than $10^4$ as order of magnitude.
In such circumstances, Brim's algorithm offers the guarantee to terminate within short computation times.
For these reasons we opted for integrating the procedures of Zwick and Paterson, as well as the faster procedures of Brim et al.\cite{brim2011faster}, in order
to efficiently solve instances of \TNC and compute feasible schedules.

Furthermore, as will be discovered in the experimental section, if these algorithms are suitably adapted---so as to allow them to terminate earlier as soon as 
certain evidences of inconsistency have been collected---then their observed behavior outperforms by orders of magnitude what predicted by their
theoretical pseudo-polynomial bounds even on input instances containing very large integer values. 

Based on these findings, we think that these pseudo-polynomial algorithms are to be considered (and probably adopted) even for solving \TN{} instances where
weights are floating point values whose magnitudes may differ in a significant way.
In case the running time results to be unacceptable for a real application, one could then consider the possibility to round the weights to integer values.
This rounding would clearly require special care: a very accurate approximation might lead to very high computation times 
while a gross approximation might not represent the original instance in a correct way.


\section{The Reductions}%
\label{sect:reductions}
This section presents the direct connection and the computational equivalence between \MPGT and \TNC.
The equivalence is formally proven by offering one reduction in each direction.

The reduction of \TNC to \MPGT allows to apply, in the context of \TN{s}, 
any of the algorithms known for \MPG{s}, included the exponential and subexponential ones.

Vice versa, in consideration of the fact that the $\MPGD \stackrel{?}{\in} \p$ question is an open
problem~\cite{ZwickPaterson:1996,Jur98,Sch08,BV07,brim2011faster}, 
the reduction of \MPGD to \TNC confirms that \TNC offers an algorithmically more ambitious
and mathematically steeper generalization of \STN-Consistency  (see also Remark~\ref{Rem:PLconvex}).
Moreover, the reduction gives a further evidence that, within \STN{s}, 
a new algorithmic approach is necessary in order to manage temporal aspects of event like
the synchronization one presented in the Introduction.

Let us start considering the first reduction.
\begin{theorem}\label{Teo:TN2MPG}
   There exists a $\log$-space\footnote{A strong and basic-form 
	of reduction introduced by Papadimitriou in~\cite{Papadimitriou:1994}.},
   linear-time, local-replacement\footnote{A restricted kind of Karp reduction introduced in ~\cite{GareyJohnson:1979}.}
   reduction from \TNC to \MPGT.
\end{theorem}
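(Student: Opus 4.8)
The plan is to encode a multi-head \TN $\H=(V,\A)$ as a mean payoff game $G_\H$ in which the two players' moves mirror, respectively, the choice of a hyperarc and the choice of one of its heads, so that consistency of $\H$ corresponds to Player~$1$ winning (in the nonnegative-value sense) from every start position. First I would build the vertex set: keep every timepoint $v\in V$ as a node controlled by Player~$0$, and for each multi-head hyperarc $A=(t_A,H_A,w_A)$ add one auxiliary node $u_A$ controlled by Player~$1$. The arcs are: from the tail $t_A$ a $0$-weight arc $t_A\to u_A$ for every hyperarc $A$ with that tail (this is Player~$0$ committing to ``discharge'' the constraint $A$ whose tail sits at the current timepoint); and from $u_A$, for each head $v\in H_A$, an arc $u_A\to v$ of weight $-w_A(v)$ (Player~$1$ picking the head it will try to violate). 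Since every timepoint is the tail of at least one hyperarc in the consistency condition — and if some node is the tail of no hyperarc we can harmlessly give it a self-handling dummy hyperarc, or just argue it is irrelevant — every node has an outgoing arc, as required for an \MPG. Note $|V_{G_\H}|=|V|+|\A|$, the number of arcs is $O(m)$ in the hypergraph size, and the maximum absolute weight is $W=\max_A\max_{v\in H_A}|w_A(v)|$; the construction is clearly computable by a local-replacement, linear-time, $\log$-space transducer.

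Next I would relate cyclic plays in $G_\H$ to cyclic sequences in $\H$. A play in $G_\H$, once it enters a cycle, alternates timepoint-nodes and hyperarc-nodes $v_1, u_{A_1}, v_2, u_{A_2}, \ldots$ where $A_i = a(v_i)$ is a hyperarc with tail $v_i$ and $v_{i+1}\in H_{A_i}$; its average weight is $\frac{1}{p}\sum_{t} \bigl(0 + (-w_{a(v_t)}(v_{t+1}))\bigr) = -\frac{1}{p}\sum_t w_{a(v_t)}(v_{t+1})$, the negation of the weight of a finite cyclic sequence in the sense of the Negative Cycle definition. The natural claim is then: $\H$ is consistent $\iff$ $\nu_s\ge 0$ for every $s\in V_{G_\H}$, equivalently \MPGT$(0)$ returns all of $V_{G_\H}$. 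For one direction I would use Theorem~\ref{teo:charCons} (or directly Lemma~\ref{lem:nc} via the contrapositive): an inconsistent $\H$ has a negative cycle $(S,\C)$; a Player~$0$ strategy that, at each timepoint $v\in S$, moves to $u_{a(v)}$ traps the pebble in $S$, and because every finite cyclic sequence through $S$ has strictly negative weight, every cycle in $G_\H$ reachable under this strategy has strictly positive average weight divided by $-1$, i.e. strictly negative — so Player~$0$ wins everywhere inside $S$, and one checks this forces $\nu_s<0$ somewhere (after a little care: from any start position the play eventually reaches $S$ under an optimal Player~$0$ strategy that everywhere imitates this behaviour, using that $S$ is ``closed'' under $a(\cdot)$). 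For the converse, given a feasible potential $p$ (which exists by the remark that a feasible schedule is a feasible potential), assign to each timepoint the value $p(v)$ and to each $u_A$ the value $p(t_A) = \min$ over heads is not quite it — instead I would read off Player~$1$'s positional strategy from $p$: at $u_A$ move to the head $v$ maximizing $w^p_A(v)$, which is $\ge 0$ by feasibility; then the reduced weights $w^p$ along any cycle sum to $\ge 0$, and since $w^p$ and $w$ agree on cycles, Player~$1$ secures nonnegative average weight from every start, i.e. $\nu_s\ge 0$ for all $s$.

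The main obstacle I expect is the bookkeeping around start positions and the ``every node'' quantifier: \MPGT$(0)$ asks for the set of $s$ with $\nu_s\ge 0$, so the reduction must certify that $\H$ is consistent precisely when that set is all of $V_{G_\H}$, and this requires showing that a single negative cycle in $\H$ propagates to make $\nu_s<0$ at \emph{some} node while, conversely, a feasible potential makes $\nu_s\ge 0$ at \emph{every} node — the asymmetry ``some'' vs.\ ``every'' has to be handled by the standard \MPG fact that $W_0$ is closed under Player~$0$'s optimal moves and reachable, plus the closure property of the arc map $a(\cdot)$ on the cycle's node set $S$. A secondary point of care is nodes of $V$ that are tails of no hyperarc and the no-loops/no-parallel-arcs hygiene condition on \MPG{s}; both are dispatched by trivial padding. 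Everything else — the size, time, and space bounds, and the weight bound $W$ — is immediate from the local-replacement nature of the construction.
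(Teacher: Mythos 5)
Your overall architecture is the paper's: one Player-$1$ node per hyperarc, a zero-weight arc from the tail into it, one arc per head out of it, and the target equivalence ``$\H$ consistent iff $\nu_s\geq 0$ for every $s$''. But there is a sign error that breaks both directions of your correctness argument. The paper weights the arc from the hyperarc-node $A$ to the head $v$ by $+w_A(v)$, not $-w_A(v)$. With that sign, a game cycle $v_1,A_1,v_2,\dots,v_p=v_1$ has weight $\sum_t w_{a(v_t)}(v_{t+1})$, so a negative cycle of $\H$ becomes a negative-mean cycle of $G_{\H}$ (a Player-$0$ win), and the telescoping of the reduced slacks $w^p_A(v)=w_A(v)+p(t_A)-p(v)\geq 0$ along Player~$1$'s greedy strategy gives cycle weight $\geq 0$ (a Player-$1$ win). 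With your sign everything is inverted: from a feasible schedule, the strategy you describe forces cycle weight $-\sum_t w^p_{a(v_t)}(v_{t+1})\leq 0$, so Player~$1$ secures a \emph{nonpositive} payoff, which does not certify $\nu_s\geq 0$; and a negative cycle of $\H$ yields game cycles of strictly \emph{positive} weight, which Player~$0$ cannot use. Your phrase ``strictly positive average weight divided by $-1$'' negates the quantity a second time; the minus sign is already inside your arc weights. The fix is one character, but as written the construction does not reduce \TNC{} to \MPGT.

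Second, your plan for the direction ``inconsistent $\Rightarrow$ some $\nu_s<0$'' goes through ``an inconsistent $\H$ has a negative cycle''. That implication is the hard half of Theorem~\ref{teo:charCons}, and in the paper it is \emph{derived from} this very reduction (via Theorem~\ref{Teo:MainAlgorithms}-(3)); Lemma~\ref{lem:nc} gives only the converse, so your parenthetical hedge does not rescue it. Using it here is circular. The paper avoids negative cycles entirely in this direction: Lemma~\ref{lem:mainReductionGwin} shows directly that if Player~$1$ wins from every node, then the projection $G_{\pi_1}$ of a positional winning strategy is conservative, and the restriction to $V_0=V$ of a feasible potential of $G_{\pi_1}$ is a feasible schedule; the contrapositive is exactly what you need, and your worry about ``some vs.\ every'' and reachability of the losing region then evaporates. (Your handling of timepoints that are tails of no hyperarc also needs the paper's deletion argument rather than a ``dummy hyperarc'', which would either be a forbidden loop or perturb $W$; but that is minor.)
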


Since this reduction plays a main role in the algorithmic solutions proposed in this paper, 
we firstly describe how it works and, secondly, we prove its correctness by means of two lemmas, 
Lemma~\ref{lem:mainReductionHcons} and Lemma~\ref{lem:mainReductionGwin}.

The reduction goes as follows.

Let $\H=(V, \A)$ be a \TN. We assume that every $v\in V$ is the tail of some arc $A\in \A$.
This assumption is not a restriction since, if $\H$ contains a \textit{sink node} $v$, 
\ie a node $v$ with no arc $A\in \A$ having tail in it, then $\H$ is
consistent if and only if so is $\H_v$, the \TN obtained from $\H$ by removing node $v$ and every hyperarc having $v$ as an head.
Indeed, any feasible scheduling $s:V\mapsto \R$ for $\H$, once projected onto $V\setminus \{v\}$, 
gives a feasible scheduling for $\H_v$ since every constraint
involving $v$ has been dropped and no constraint has been added; 
conversely, any feasible scheduling $s$ for $\H_v$ can be easily extended to a feasible
scheduling $s$ for $\H$ by exploiting the property of $v$ being a sink node: 
it is sufficient to set $s(v) \triangleq \min\{ s(t_A) - w_A(v) \mid A\in \A, v\in H_A \}$.

Now, let us consider a mean payoff game $G_{\H}=(V_0\cupdot V_1, E)$ where: 
(1) $V_0 = V$, $V_1=\A$, nodes in $V_0$ are colored by \emph{black} while nodes in $V_1$ are colored by \emph{white}, and (2) 
for each $A\in \A$, the following weighted arcs are added to $E$:
\begin{itemize}
	\item an arc of weight $0$ from the black node $t_a$ to the white node $A$, \ie arc $(t_A, A, 0)$;
	\item for each head node $h\in H_A$, an arc of weight $w_A(h)$ from the white node $A$ to the black node $h$, \ie arc $(A,h,w_A(h))$.
\end{itemize}
In short, $G_{\H}=(V_0\cupdot V_1, \A)$, with 
$V_0 = V$,
$V_1 = \A$,
$E = \{ (t_A,A,0) \mid A\in \A \}  \cup \{ (A,h,w_A(h)) \mid A\in \A, h\in H_A \}$. 
\figref{FIG:htn-mpg1} depicts how a hyperarc is transformed into a MPG subnetwork while \figref{FIG:PseudocodeReduction-htn-mpg} reports a pseudocode for the whole construction process, \ie Algorithm~\ref{ALGO:PseudocodeReduction-htn-mpg}.
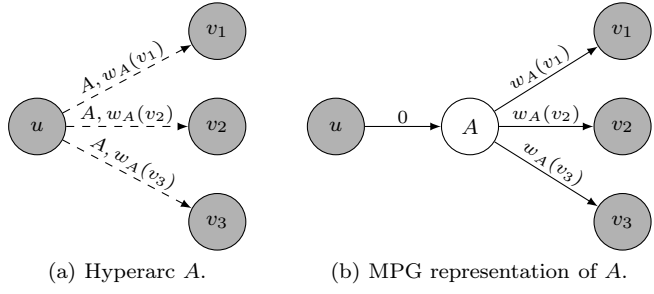
\begin{figure}[tb]
\centering
\subfloat[Hyperarc $A$.]{
\begin{tikzpicture}[arrows=->,scale=1,node distance=.5 and 1]
    \node[node,blackNode] (v1) {$v_1$};
    \node[node,blackNode,below=of v1] (v2) {$v_2$};
    \node[node,blackNode,below=of v2] (v3) {$v_3$};
	\node[node,blackNode,left=of v2, xshift=-5ex] (u) {$u$};
%
   	\draw[dashed] (u) to node[timeLabel,above,sloped] {$A, w_A(v_1)$} (v1);%
   	\draw[dashed] (u) to node[timeLabel,above,sloped] {$A, w_A(v_2)$} (v2);%
   	\draw[dashed] (u) to node[timeLabel,above,sloped] {$A, w_A(v_3)$} (v3);%
\end{tikzpicture}
}
\qquad
\subfloat[MPG representation of $A$.]{
\begin{tikzpicture}[arrows=->,scale=1,node distance=.5 and 1]
    \node[node,blackNode] (v1) {$v_1$};
    \node[node,blackNode,below=of v1] (v2) {$v_2$};
    \node[node,blackNode,below=of v2] (v3) {$v_3$};
	\node[node,left=of v2, xshift=-2ex] (a) {$A$};
	\node[node,blackNode,left=of a] (u) {$u$};
%
   	\draw[] (a) -- node[timeLabel,above,sloped] {$w_A(v_1)$} (v1);%
   	\draw[] (a) -- node[timeLabel,above,sloped] {$w_A(v_2)$} (v2);%
   	\draw[] (a) -- node[timeLabel,above,sloped] {$w_A(v_3)$} (v3);%
   	\draw[] (u) -- node[timeLabel,above,sloped] {$0$} (a);%
\end{tikzpicture}
}
\caption{The conversion of a hyperarc into a white MPG node and its incident arcs.}
\label{FIG:htn-mpg1}
\end{figure}

\begin{figure}[tb]
\removelatexerror
\begin{algorithm}[H]\label{ALGO:PseudocodeReduction-htn-mpg}
\caption{\texttt{makeACorrespondingGame}$(\H)$}
\tcp{a \TN $\H = (V,\A)$}
$V_0 \leftarrow  V$\;
$V_1 \leftarrow  \A$\;
$E \leftarrow \emptyset$\;
\ForEach{ $A\in \A $} {
   $E \leftarrow E\cup (t_A,A,0)$\;
   \ForEach{$h\in H_A$} {
      $E\leftarrow  E\cup (A,h,w_A(h))$\;
   }
}
\KwOut{The \MPG $G_{\H} = (V_0\cupdot V_1, E)$}
\end{algorithm}
\caption{The algorithm implementing the reduction from a \TN to the corresponding \MPG.}
\label{FIG:PseudocodeReduction-htn-mpg}
\end{figure}

$G_{\H}$ has $|V|+|\A|$ nodes and $O(m)$ arcs and can be constructed in linear time.
Moreover, $G_{\H}$ is a bipartite graph with bipartition $(V_0, V_1)$ and it has been obtained from $\H$ by a simple local replacement rule:
replace every hyperarc $A\in \A$ by a claw subgraph as depicted in \figref{FIG:htn-mpg1}.
For each single object, it is necessary only to manage a constant number of indexes, each of them having a polynomial size;
thus the reduction is log-space.
\figref{FIG:htn-mpg} depicts an MPG obtained applying 
the reduction to the motivating example \TN depicted in \figref{FIG:wf-hstn};
we remark that the MPG depicted in~\figref{FIG:htn-mpg} has been obtained by considering the (equivalent) multi-head \TN transformation of the multi-tail \TN shown in \figref{FIG:wf-hstn},
indeed, Theorem~\ref{Teo:inter-reducitble-HTNs} allows us to consider both the multi-head
or the (equivalent) multi-tail HyTN without concerns \wrt the consistency checking problem.
\begin{figure}[tb]
\centering
\begin{tikzpicture}[arrows=->,scale=1,node distance=2 and 1]
	\node[node,blackNode] (t2e) {$E_{T_2}$};
    \node[node,blackNode,below=of t2e] (t3e) {$E_{T_3}$};
    \node[node,blackNode,below=of t3e] (t4e) {$E_{T_5}$};
	\node[node,blackNode,left=2 of t3e] (a1e) {$E_{\sAND_1}$};
    \node[node,blackNode,right=2 of t2e] (w2) {$b_1$};
    \node[node,blackNode,right=2 of t3e] (w3) {$b_2$};
    \node[node,blackNode,right=2 of t4e] (w4) {$b_3$};
	\node[node,blackNode,right=2 of w3] (a2b) {$B_{\sAND_2}$};
 	\node[node,above right=2.5 and .25 of a1e] (e1u) {$e_1^u$};
	\node[node,below right=0.25 of e1u, yshift=-2ex] (e1l) {$e_1^l$};
 	\node[node,right=.7 of a1e,yshift=6ex] (e2u) {$e_2^u$};
	\node[node,right=.7 of a1e,yshift=-6ex] (e2l) {$e_2^l$};
 	\node[node,below right=2.5 and .25 of a1e] (e3u) {$e_3^u$};
	\node[node,above right=0.25 of e3u, yshift=2ex] (e3l) {$e_3^l$};
 	\node[node,right=.7 of t2e,yshift=4ex] (e4u) {$e_4^u$};
	\node[node,right=.7 of t2e,yshift=-6ex] (e4l) {$e_4^l$};
 	\node[node,right=.7 of t3e,yshift=5ex] (e5u) {$e_5^u$};
	\node[node,right=.7 of t3e,yshift=-5ex] (e5l) {$e_5^l$};
 	\node[node,right=.7 of t4e,yshift=6ex] (e6u) {$e_6^u$};
	\node[node,right=.7 of t4e,yshift=-4ex] (e6l) {$e_6^l$};
	\node[node,right=.7 of w2,yshift=3ex] (e7) {$e_7$};
	\node[node,right=.7 of w3,yshift=-4ex] (e8) {$e_8$};
	\node[node,right=.7 of w4,yshift=-3ex] (e9) {$e_9$};
	\node[node,right=.7 of w3,yshift=6ex] (h1) {$h_1$};

	\draw[] (a1e) -- node[timeLabel,above,sloped] {$0$} (e1l);%
	\draw[] (a1e) -- node[timeLabel,above,sloped] {$0$} (e2l);%
	\draw[] (a1e) -- node[timeLabel,above,sloped] {$0$} (e3l);%
   	\draw[thick] (e1u) -- node[timeLabel,above,sloped] {$74$} (a1e);%
	\draw[thick] (e2u) -- node[timeLabel,above,sloped] {$12$} (a1e);%
	\draw[thick] (e3u) -- node[timeLabel,above,sloped] {$69$} (a1e);%
	\draw[] (t2e) -- node[timeLabel,above,sloped] {$0$} (e1u);%
	\draw[] (t2e) -- node[timeLabel,above,sloped] {$0$} (e4l);%
	\draw[thick] (e1l) -- node[timeLabel,above,sloped] {$-17$} (t2e);%
    	\draw[thick] (e4u) -- node[timeLabel,above,sloped] {$6$} (t2e);%
	\draw[] (t3e) -- node[timeLabel,above,sloped] {$0$} (e2u);%
	\draw[] (t3e) -- node[timeLabel,above,sloped] {$0$} (e5l);%
	\draw[thick] (e2l) -- node[timeLabel,above,sloped] {$-6$} (t3e);%
    	\draw[thick] (e5u) -- node[timeLabel,above,sloped] {$5$} (t3e);%
	\draw[] (t4e) -- node[timeLabel,above,sloped] {$0$} (e3u);%
	\draw[] (t4e) -- node[timeLabel,above,sloped] {$0$} (e6l);%
	\draw[thick] (e3l) -- node[timeLabel,above,sloped] {$-20$} (t4e);%
    	\draw[thick] (e6u) -- node[timeLabel,above,sloped] {$10$} (t4e);%

	\draw[] (w2) -- node[timeLabel,above,sloped] {$0$} (e4u);%
    	\draw[] (w2) -- node[timeLabel,above,sloped] {$0$} (e7);%
	\draw[thick] (h1) -- node[timeLabel,above,sloped] {$0$} (w2);%
	\draw[thick] (e4l) -- node[timeLabel,above,sloped] {$-2$} (w2);%
	\draw[] (w3) -- node[timeLabel,above,sloped] {$0$} (e5u);%
   	 \draw[] (w3) -- node[timeLabel,above,sloped, xshift=-1ex] {$0$} (e8);%
	\draw[dashed] (h1) -- node[timeLabel,above,sloped] {$0$} (w3);%
	\draw[thick] (e5l) -- node[timeLabel,above,sloped] {$-1$} (w3);%
	\draw[] (w4) -- node[timeLabel,above,sloped] {$0$} (e6u);%
   	 \draw[] (w4) -- node[timeLabel,above,sloped] {$0$} (e9);%
	\draw[dashed] (h1) to [bend right=15] node[timeLabel,above,sloped] {$0$} (w4);%
	\draw[thick] (e6l) -- node[timeLabel,above,sloped] {$-5$} (w4);%
	\draw[] (a2b) -- node[timeLabel,above,sloped] {$0$} (h1);%
	\draw[thick] (e7) -- node[timeLabel,above,sloped] {$0$} (a2b);%
	\draw[thick] (e8) -- node[timeLabel,above,sloped] {$0$} (a2b);%
	\draw[thick] (e9) -- node[timeLabel,above,sloped] {$0$} (a2b);%
\end{tikzpicture}
\caption{The MPG equivalent to the \TN depicted in \figref{FIG:wf-hstn},
  obtained by considering the (equivalent) multi-head \TN transformation of the multi-tail \TN shown in \figref{FIG:wf-hstn}.
A winning positional strategy $\pi_1$ for Player~$1$ is highlighted by thick arcs.
The dashed arcs are those not prescribed by strategy $\pi_1$, \ie they are removed when projecting the MPG on $\pi_1$.}
\label{FIG:htn-mpg}
\end{figure}
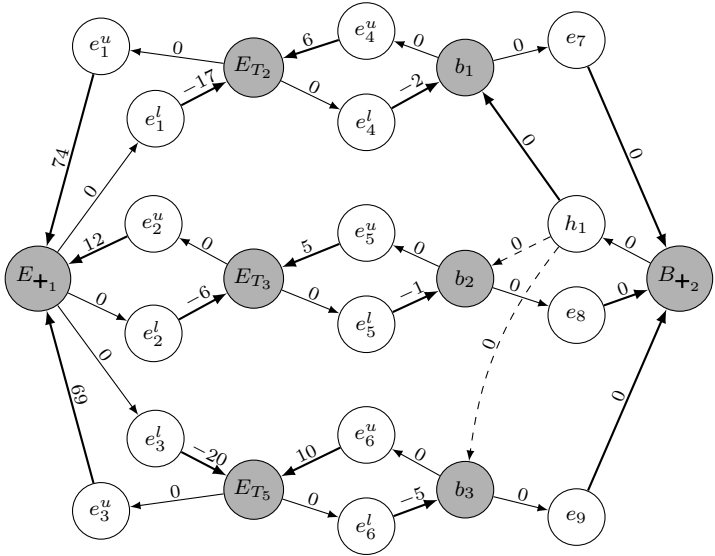

Now, let us introduce the formal proof of Theorem~\ref{Teo:TN2MPG} by the following two lemmas.
\begin{lemma}\label{lem:mainReductionHcons}
If $\H$ is consistent then every node of $G_{\H}$ is a winning start position for Player~$1$.
\end{lemma}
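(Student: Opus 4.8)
The plan is to build an explicit positional winning strategy for Player~1 directly from a feasible potential of $\H$, and then to show that every cycle that can arise once Player~1 commits to this strategy has nonnegative total weight.

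First I would use the consistency of $\H$ to fix a feasible scheduling $s:V\to\R$; regarded as a potential $p \triangleq s$ (recall that a mapping $V\to\R$ is a feasible potential iff it is a feasible schedule), it satisfies $w^{p}_A = \max_{v\in H_A} w^{p}_A(v) \geq 0$ for every $A\in\A$, where $w^{p}_A(v) = w_A(v) + p(t_A) - p(v)$. I would then define the positional strategy $\pi_1$ for Player~1 by letting, for each white node $A\in V_1 = \A$,
\[
   \pi_1(A) \triangleq \arg\max_{v\in H_A} w^{p}_A(v),
\]
ties broken arbitrarily; that is, from the claw subgraph of a hyperarc $A$ Player~1 always moves to the head realizing the reduced slack $w^{p}_A$.

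Next I would analyze the projection $G_{\pi_1}$, in which each white node has exactly the single outgoing arc prescribed by $\pi_1$ while the black nodes keep all their outgoing arcs $(u,A,0)$. Since $G_{\H}$ is bipartite with sides $V_0$ and $V_1$ and all its arcs cross the bipartition, any cycle $C$ of $G_{\pi_1}$ has even length and can be written $u_0 \to A_0 \to u_1 \to A_1 \to \cdots \to u_{k-1}\to A_{k-1}\to u_0$ with $t_{A_i}=u_i$ and $\pi_1(A_i)=u_{i+1}$ (indices modulo~$k$). Its weight is $w(C) = \sum_{i=0}^{k-1} w_{A_i}(u_{i+1})$, since the arcs $(u_i,A_i,0)$ contribute $0$. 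Using $p(t_{A_i})=p(u_i)$ one has $w_{A_i}(u_{i+1}) = w^{p}_{A_i}(u_{i+1}) - p(u_i) + p(u_{i+1})$, and the potential terms telescope around the cycle, so
\[
   w(C) = \sum_{i=0}^{k-1} w^{p}_{A_i}(u_{i+1}) = \sum_{i=0}^{k-1} w^{p}_{A_i} \geq 0,
\]
where the middle equality is the choice of $\pi_1$ and the inequality is feasibility of $p$.

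Finally I would conclude: starting from any node $v_0$ of $G_{\H}$, every node of $G_{\pi_1}$ has an outgoing arc (here is where the standing assumption that every $v\in V$ is the tail of some $A\in\A$ is used), so the play is infinite and, being eventually periodic in a finite graph, closes a cycle $C$ of $G_{\pi_1}$; by the displayed inequality its average weight $w(C)/|C|$ is nonnegative, hence Player~1 wins the play regardless of Player~0's choices. Thus $v_0\in W_1$ for every $v_0\in V_0\cupdot V_1$, which is exactly the claim. The only delicate point is the bipartite-cycle bookkeeping together with the telescoping identity — the very same computation underlying Lemma~\ref{lem:nc} — but it becomes routine once the strategy is chosen to realize the reduced slack at each hyperarc.
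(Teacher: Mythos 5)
Your proof is correct and takes essentially the same route as the paper: the same positional strategy $\pi_1$ (your $\arg\max$ of the reduced slack coincides with the paper's $\arg\min_{h\in H_A}\{s(h)-w_A(h)\}$), followed by showing that the projection $G_{\pi_1}$ is conservative, which forces every play under $\pi_1$ to close a nonnegative cycle. The only cosmetic difference is that the paper certifies conservativeness by extending $s$ to a feasible potential on $V_0\cup V_1$ (setting $p(A)=s(t_A)$ on white nodes) and checking reduced weights arc by arc, whereas you verify cycle nonnegativity directly by the telescoping sum; the two computations are interchangeable.
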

\begin{proof} 
Since $\H$ is consistent, there exists a feasible scheduling \mbox{$s:V\rightarrow \R$} such that, for each hyperarc 
$A\in \A$, the reduced slack weight is non-negative $w_A^{s}\geq 0$.
Consider the following positional strategy $\pi_1$ for Player~$1$: 
for each $A\in V_1$, 
\[
\pi_1(A)=\arg\min_{h\in H_A} \{ s(h) - w_A(h) \}.
\]
We claim that $\pi_1$ ensures Player~$1$ the win, wherever node the game starts from and however Player~$0$ moves.
In order to show this, we prove that the projection graph $G_{\pi_1}$ is conservative exhibiting a feasible potential $p$.
Let \mbox{$p:V_0\cup V_1\rightarrow \R$} be defined as follows: 
\begin{equation}\label{eq:def:potential}
 p(v) \triangleq
\begin{cases}
        s(v)   & \text{if $v\in V_0$,} \\
      s(t(v))   & \text{if $v\in V_1$.} \\
\end{cases}
\end{equation}
Now, let $a=(u,v,w)$ be any arc of $G_{\pi_1}$:

\noindent{\emph Case~1:} 
if $v\in V_1$, then $v$ is a hyperarc of $\H$ with $t(v)=u$ and $w=0$; 
therefore, $p(v) = s(t(v)) = s(u) = p(u)$ since $u\in V_0$.
Then $w^p(u,v) = w - p(v) + p(u) = 0 \geq 0$ follows;

\noindent{\emph Case~2:}
if $v\in V_0$, then $u\in V_1$ and $w=w_u(v)$.
Moreover, $v=\pi_1(u)$,
which implies that $v = \arg\min_{h\in H_u} \{s(h) - w_u(h)\}$.
Therefore, recalling that $w_u^s\geq 0$, \ie $s(t(u)) \geq \min_{h\in H_u} \{s(h) - w_u(h)\}$:
\[
   p(u) = s(t(u)) \geq \min_{h\in H_u} \{ s(h) - w_u(h) \} = s(v)-w_u(v) = p(v)-w. \\
\]
Hence, $w^p(u,v) = w - p(v) + p(u) \geq 0$.

\noindent
In conclusion, $G_{\pi_1}$ is conservative.
Therefore, the positional strategy $\pi_1$ certifies that any node of $G$ is a winning start position for Player~$1$. 
\end{proof}

\begin{lemma}\label{lem:mainReductionGwin}
  If every node of $G_{\H}$ is a winning start position for Player~$1$
  then $\H$ is a consistent \TN.
\end{lemma}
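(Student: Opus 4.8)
The plan is to run the correspondence used in the proof of Lemma~\ref{lem:mainReductionHcons} backwards: there a feasible scheduling produced a conservative projection graph $G_{\pi_1}$, and here a \emph{winning} (hence conservative) projection will be turned back into a feasible scheduling. First I would appeal to the Determinacy Theorem of Ehrenfeucht and Mycielski~\cite{EhrenfeuchtMycielski:1979}: since every node of $G_{\H}$ is a winning start position for Player~$1$, Player~$1$ possesses a single \emph{positional} winning strategy $\pi_1$, and by the shape of $G_{\H}$, at each white node $A\in V_1$ the strategy $\pi_1$ selects one head $\pi_1(A)\in H_A$, namely the head of the single retained arc $(A,\pi_1(A),w_A(\pi_1(A)))$.

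Next I would consider the projection $G_{\pi_1}$, which retains every arc leaving a black node (Player~$0$'s moves are untouched) and, at each white node $A$, only the arc prescribed by $\pi_1$. The crucial observation is that, inside $G_{\pi_1}$, Player~$0$ can still drive the pebble around any cycle; since $\pi_1$ wins every play, that cycle has nonnegative average weight, so $G_{\pi_1}$ contains no negative cycle and is therefore conservative. By Theorem~\ref{teo:charConservativeGraphs}, $G_{\pi_1}$ admits a feasible potential $p:V_0\cup V_1\to\R$, \ie $w(a)-p(v)+p(u)\ge 0$ for every arc $a=(u,v,w(a))$ of $G_{\pi_1}$.

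To finish I would set $s(v)\triangleq p(v)$ for every $v\in V=V_0$ and check that $s$ is a feasible scheduling of $\H$. Fix a hyperarc $A\in\A$. Feasibility of $p$ on the arc $(t_A,A,0)$ yields $p(t_A)\ge p(A)$, and on the arc $(A,\pi_1(A),w_A(\pi_1(A)))$ it yields $p(A)\ge p(\pi_1(A))-w_A(\pi_1(A))$; concatenating, $s(t_A)=p(t_A)\ge s(\pi_1(A))-w_A(\pi_1(A))\ge \min_{v\in H_A}\{s(v)-w_A(v)\}$, which is exactly the constraint imposed by $A$. Since $A$ was arbitrary, $s$ is feasible, so $\H$ is a consistent \TN.

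The inequalities above are routine; the single point deserving care is the assertion that $G_{\pi_1}$ is conservative. Making it precise amounts to spelling out what a positional winning strategy guarantees — that every closed walk in $G_{\pi_1}$ closes a cycle of nonnegative average weight, Player~$0$ being free to choose among the retained black-node arcs — and only then invoking Theorem~\ref{teo:charConservativeGraphs}. It is also worth recalling at the outset that the standing assumption of the reduction (every $v\in V$ is the tail of some hyperarc, so $G_{\H}$ has no dead ends and each $H_A$ is nonempty) is what makes $\pi_1$ and the solitaire game $G_{\pi_1}$ well defined.
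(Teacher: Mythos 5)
Your proposal is correct and follows essentially the same route as the paper's own proof: extract a positional winning strategy $\pi_1$, observe that the projection $G_{\pi_1}$ must be conservative (else Player~$0$ could win by reaching a negative cycle), take a feasible potential $p$ via Theorem~\ref{teo:charConservativeGraphs}, and chain the reduced-weight inequalities along $(t_A,A,0)$ and $(A,\pi_1(A),w_A(\pi_1(A)))$ to verify the hyperarc constraints. The added remarks on positional determinacy and the no-sink assumption are sound but do not change the argument.
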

\begin{proof} 
If every node is a winning start position for Player~$1$, then there exists a positional strategy $\pi_1$ which is everywhere winning for Player~$1$.
Notice that $G_{\pi_1}$ must be conservative
since Player~$0$ can clearly win any play starting from a node located on a negative cycle.
Let $p:V_0\cup V_1\rightarrow \R$ be a feasible potential for $G_{\pi_1}$.
We claim that the restriction of $p$ onto $V_0$ is a feasible scheduling for $\H$. 
Indeed, for any hyperarc $A$ of $\H$,
$(t_A,A,0)$ is an arc of $G_{\pi_1}$, 
whence $p(A)\leq p(t_A)$.
Moreover,
$(A,\pi_1(A), w_A(\pi_1(A)))$ is also an arc of $G_{\pi_1}$,
whence $p(\pi_1(A))\leq p(A) + w_A(\pi_1(A))$. 
Since $\pi_1(A)\in H_A$, then the following holds:
\begin{align*}
  p(t_A) \geq p(A) &\geq p(\pi_1(A)) - w_A(\pi_1(A)) \\
                  &\geq \min_{h\in H_A} \{ s(h) - w_A(h) \}.
\end{align*}
Hence, the restriction of $p$ onto $V_0$ is a feasible scheduling for $\H$. Thus, $\H$ is consistent.
\end{proof}

In \figref{FIG:mpg-with-t} the values under the nodes represent a feasible potential for the projection of the MPG depicted in \figref{FIG:htn-mpg}.
By Lemma~\ref{lem:mainReductionGwin}, the restriction of such a feasible potential on the black nodes is also a feasible scheduling for the corresponding \TN depicted in \figref{FIG:wf-hstn}.
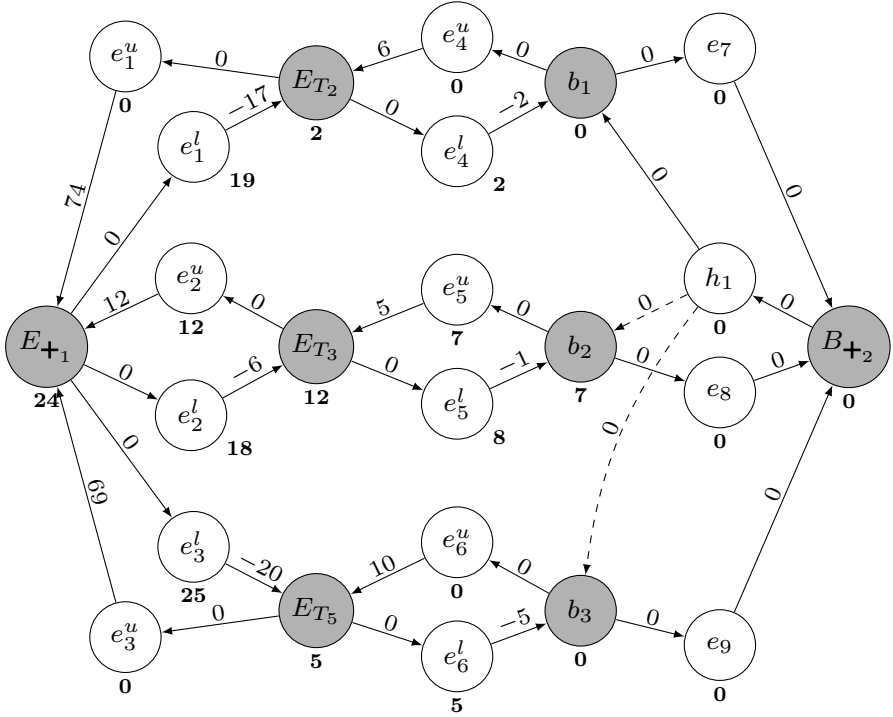
\begin{figure}[tb]
\centering
\begin{tikzpicture}[arrows=->,scale=1.25,node distance=2 and 1]
	\node[node,blackNode,label=-90:$\textbf{2}$] (t2e) {$E_{T_2}$};
    \node[node,blackNode,below=of t2e,label=-90:$\textbf{12}$] (t3e) {$E_{T_3}$};
    \node[node,blackNode,below=of t3e,label=-90:$\textbf{5}$] (t4e) {$E_{T_5}$};
	\node[node,blackNode,left=2 of t3e,label=-90:$\textbf{24}$] (a1e) {$E_{\sAND_1}$};
    \node[node,blackNode,right=2 of t2e,label=-90:$\textbf{0}$] (w2) {$b_1$};
    \node[node,blackNode,right=2 of t3e,label=-90:$\textbf{7}$] (w3) {$b_2$};
    \node[node,blackNode,right=2 of t4e,label=-90:$\textbf{0}$] (w4) {$b_3$};
	\node[node,blackNode,right=2 of w3,label=-90:$\textbf{0}$] (a2b) {$B_{\sAND_2}$};
 	\node[node,above right=2.5 and .25 of a1e, label=-90:$\textbf{0}$] (e1u) {$e_1^u$};
	\node[node,below right=0.25 of e1u, yshift=-2ex, label=-35:$\textbf{19}$] (e1l) {$e_1^l$};
 	\node[node,right=.7 of a1e,yshift=6ex, label=-90:$\textbf{12}$] (e2u) {$e_2^u$};
	\node[node,right=.7 of a1e,yshift=-6ex, label=-35:$\textbf{18}$] (e2l) {$e_2^l$};
 	\node[node,below right=2.5 and .25 of a1e, label=-90:$\textbf{0}$] (e3u) {$e_3^u$};
	\node[node,above right=0.25 of e3u, yshift=2ex, label=-90:$\textbf{25}$] (e3l) {$e_3^l$};
 	\node[node,right=.7 of t2e,yshift=4ex, label=-90:$\textbf{0}$] (e4u) {$e_4^u$};
	\node[node,right=.7 of t2e,yshift=-6ex, label=-35:$\textbf{2}$] (e4l) {$e_4^l$};
 	\node[node,right=.7 of t3e,yshift=5ex, label=-90:$\textbf{7}$] (e5u) {$e_5^u$};
	\node[node,right=.7 of t3e,yshift=-5ex, label=-35:$\textbf{8}$] (e5l) {$e_5^l$};
 	\node[node,right=.7 of t4e,yshift=6ex, label=-90:$\textbf{0}$] (e6u) {$e_6^u$};
	\node[node,right=.7 of t4e,yshift=-4ex, label=-90:$\textbf{5}$] (e6l) {$e_6^l$};
	\node[node,right=.7 of w2,yshift=3ex, label=-90:$\textbf{0}$] (e7) {$e_7$};
	\node[node,right=.7 of w3,yshift=-4ex, label=-90:$\textbf{0}$] (e8) {$e_8$};
	\node[node,right=.7 of w4,yshift=-3ex, label=-90:$\textbf{0}$] (e9) {$e_9$};
	\node[node,right=.7 of w3,yshift=6ex, label=-90:$\textbf{0}$] (h1) {$h_1$};

	\draw[] (a1e) -- node[timeLabel,above,sloped] {$0$} (e1l);%
	\draw[] (a1e) -- node[timeLabel,above,sloped] {$0$} (e2l);%
	\draw[] (a1e) -- node[timeLabel,above,sloped] {$0$} (e3l);%
   	\draw[] (e1u) -- node[timeLabel,above,sloped] {$74$} (a1e);%
	\draw[] (e2u) -- node[timeLabel,above,sloped] {$12$} (a1e);%
	\draw[] (e3u) -- node[timeLabel,above,sloped] {$69$} (a1e);%
	\draw[] (t2e) -- node[timeLabel,above,sloped] {$0$} (e1u);%
	\draw[] (t2e) -- node[timeLabel,above,sloped] {$0$} (e4l);%
	\draw[] (e1l) -- node[timeLabel,above,sloped] {$-17$} (t2e);%
    \draw[] (e4u) -- node[timeLabel,above,sloped] {$6$} (t2e);%
	\draw[] (t3e) -- node[timeLabel,above,sloped] {$0$} (e2u);%
	\draw[] (t3e) -- node[timeLabel,above,sloped] {$0$} (e5l);%
	\draw[] (e2l) -- node[timeLabel,above,sloped] {$-6$} (t3e);%
    \draw[] (e5u) -- node[timeLabel,above,sloped] {$5$} (t3e);%
	\draw[] (t4e) -- node[timeLabel,above,sloped] {$0$} (e3u);%
	\draw[] (t4e) -- node[timeLabel,above,sloped] {$0$} (e6l);%
	\draw[] (e3l) -- node[timeLabel,above,sloped] {$-20$} (t4e);%
    \draw[] (e6u) -- node[timeLabel,above,sloped] {$10$} (t4e);%

	\draw[] (w2) -- node[timeLabel,above,sloped] {$0$} (e4u);%
    \draw[] (w2) -- node[timeLabel,above,sloped] {$0$} (e7);%
	\draw[] (h1) -- node[timeLabel,above,sloped] {$0$} (w2);%
	\draw[] (e4l) -- node[timeLabel,above,sloped] {$-2$} (w2);%
	\draw[] (w3) -- node[timeLabel,above,sloped] {$0$} (e5u);%
    \draw[] (w3) -- node[timeLabel,above,sloped, xshift=-1ex] {$0$} (e8);%
	\draw[dashed] (h1) -- node[timeLabel,above,sloped] {$0$} (w3);%
	\draw[] (e5l) -- node[timeLabel,above,sloped] {$-1$} (w3);%
	\draw[] (w4) -- node[timeLabel,above,sloped] {$0$} (e6u);%
    \draw[] (w4) -- node[timeLabel,above,sloped] {$0$} (e9);%
	\draw[dashed] (h1) to [bend right=15] node[timeLabel,above,sloped] {$0$} (w4);%
	\draw[] (e6l) -- node[timeLabel,above,sloped] {$-5$} (w4);%
	\draw[] (a2b) -- node[timeLabel,above,sloped] {$0$} (h1);%
	\draw[] (e7) -- node[timeLabel,above,sloped] {$0$} (a2b);%
	\draw[] (e8) -- node[timeLabel,above,sloped] {$0$} (a2b);%
	\draw[] (e9) -- node[timeLabel,above,sloped] {$0$} (a2b);%
\end{tikzpicture}
\caption{The integer labels under the nodes are a feasible potential
for the {projection} on $\pi_1$ of the MPG depicted in \figref{FIG:htn-mpg}.
The restriction of this potential on the black nodes (those in $V_0$)
is a feasible scheduling for the \TN depicted in \figref{FIG:wf-hstn}
as explained in the proof of Lemma~\ref{lem:mainReductionGwin}.}%
\label{FIG:mpg-with-t}
\end{figure}
Now, 
we have all the necessary results to prove the following theorem.

\begin{theorem}\label{Teo:MainAlgorithms}
	Let $\H=(V, \A)$ be an integral-weighted \TN, $m= \sum_{A\in\A}|A|$, 
	and $W=\max_{A\in\A}\{ \max_{h\in A} |w_A(h)|\}$ the maximal weight value present in $\H$.  
	The following propositions hold: 
	\begin{enumerate}
		\item There exists an $O((|V|+|\A|) m W)$ pseudo-polynomial time algorithm deciding \TNC for $\H$;
		\item\label{Cor:PseudoPolyScheduling} There exists an $O((|V|+|\A|) m W)$ pseudo-polynomial 
		time algorithm such that, given on input any consistent \TN $\H$, 
		it returns as output a feasible scheduling $s:V_{\H}\rightarrow \Z$ of $\H$;
		\item There exists an $O((|V|+|\A|) m W)$ pseudo-polynomial time algorithm such that, given on input any not-consistent \TN $\H$, it returns as output a
		negative cycle $(S,\C)$ of $\H$.
	\end{enumerate}
\end{theorem}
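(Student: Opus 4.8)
The plan is to derive all three items from the reduction of Theorem~\ref{Teo:TN2MPG} together with the pseudo-polynomial \MPG{} algorithms of Theorem~\ref{Teo:FasterAlgo}. First I would (after the consistency-preserving removal of sink nodes discussed just before Theorem~\ref{Teo:TN2MPG}) build the game $G_{\H}$ of Algorithm~\ref{ALGO:PseudocodeReduction-htn-mpg}; as already observed it has $|V|+|\A|$ nodes and $O(m)$ arcs (in fact exactly $m=\sum_{A\in\A}|A|$, since each hyperarc $A$ becomes $|A|$ arcs), maximum absolute weight $W$, and is produced in linear time. By Lemmas~\ref{lem:mainReductionHcons} and~\ref{lem:mainReductionGwin}, $\H$ is consistent iff every node of $G_{\H}$ is a winning start position for Player~$1$, i.e.\ iff $\nu_s\ge 0$ for all $s$, i.e.\ iff $W_0=\emptyset$. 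For item~(1) I would then run the Value Iteration Algorithm to solve \MPGT{$(0)$} on $G_{\H}$; by Theorem~\ref{Teo:FasterAlgo} (legitimate since $0\in\Z$) this takes $O\big((|V|+|\A|)\,m\,W\big)$ time and decides \TNC.

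For item~(2), assume $\H$ consistent, so $\nu_s\ge 0$ everywhere. Then \MPGS on $G_{\H}$ returns, within the same time bound, a positional strategy $\pi_1$ winning for Player~$1$ from every node. The projection $G_{\pi_1}$ is conservative --- otherwise Player~$0$ would win by looping on a negative cycle, contradicting that $\pi_1$ wins everywhere --- so one run of Bellman--Ford on $G_{\pi_1}$ produces a feasible potential $p$, integral because all weights are. The computation carried out in the proof of Lemma~\ref{lem:mainReductionGwin} then shows that the restriction of $p$ to $V_0=V$ is an integral feasible scheduling $s:V\to\Z$ of $\H$; the Bellman--Ford pass costs $O\big((|V|+|\A|)\,m\big)$ and is absorbed into the bound.

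For item~(3), assume $\H$ not consistent, so $W_0\neq\emptyset$. I would pass to the subgame $G'$ induced by $G_{\H}$ on $W_0$: this is well defined because $W_0$ is a trap for Player~$1$ (every arc out of a Player-$1$ node of $W_0$ lands in $W_0$) while every Player-$0$ node of $W_0$ keeps a successor inside $W_0$ and every Player-$1$ node --- being a hyperarc --- has at least one head, and moreover every value of $G'$ is strictly negative. Hence \MPGS applied to $G'$ synthesizes a positional Player-$0$ strategy $\pi_0$ in time $O\big((|V|+|\A|)\,m\,W\big)$. I would then read a negative cycle of $\H$ directly off $\pi_0$, \emph{without} hunting for a simple cycle: set $S\triangleq W_0\cap V$ and, for $v\in S$, let $a(v)$ be the hyperarc $\pi_0$ picks at $v$ (the arc $(v,a(v),0)$), and put $\C\triangleq\{a(v)\mid v\in S\}$. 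Because $\pi_0$ never leaves $W_0$ and heads of hyperarcs are black nodes, $H_{a(v)}\subseteq W_0\cap V=S$ for all $v\in S$, so $\bigcup_{A\in\C}(H_A\cup\{t_A\})=S\neq\emptyset$; and $v\mapsto a(v)$ is injective since $t_{a(v)}=v$, so each $v\in S$ is the tail of exactly one arc of $\C$ --- thus $(S,\C)$ is a cycle in the paper's sense. To see it is negative, observe that any finite cyclic sequence $v_1,\dots,v_p=v_1$ with $v_{t+1}\in H_{a(v_t)}$ traces a cycle $v_1\to a(v_1)\to v_2\to\cdots\to v_p$ in the projection $G_{\pi_0}$ whose total weight equals $\sum_{t=1}^{p-1}w_{a(v_t)}(v_{t+1})$ (the black-to-white arcs carry weight $0$); since $\pi_0$ wins everywhere on $W_0$, $G_{\pi_0}$ has no cycle of non-negative total weight (Player~$1$ could otherwise loop on it for payoff $\ge 0$), so this sum is negative. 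Extracting $(S,\C)$ is linear, preserving the overall bound.

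The bookkeeping --- the exact counting of nodes, arcs and weights in $G_{\H}$, and re-running the argument of Lemma~\ref{lem:mainReductionGwin} --- is routine. The step I expect to be the real obstacle is item~(3): turning the bare fact ``$W_0\neq\emptyset$'' into a certificate meeting the paper's rather rigid definition of a cycle (each node the tail of a unique arc of $\C$, all heads contained in $S$). The device that makes this clean is taking $S=W_0\cap V$ wholesale and exploiting the trap property of the winning region, rather than trying to trim a simple cycle out of $G_{\pi_0}$; checking that trap property is the one genuinely game-theoretic ingredient needed here.
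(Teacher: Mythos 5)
Your proof is correct and follows essentially the same route as the paper's: build $G_{\H}$, decide consistency by solving \MPGT with $T=0$ via Value Iteration together with Lemmas~\ref{lem:mainReductionHcons} and~\ref{lem:mainReductionGwin}, extract a schedule from a feasible potential of the conservative projection $G_{\pi_1}$, and extract a negative cycle as $(\overline{W}_0,\{\pi_0(v)\}_{v\in\overline{W}_0})$ with $\overline{W}_0=W_0\cap V_0$ using the trap property of the losing region. The only detail you gloss over is that before running Bellman--Ford on $G_{\pi_1}$ the paper adds an auxiliary source $s$ with zero-weight arcs to every node of $V_0$ so that all nodes are reachable and the computed distances are finite; your item~(3), by contrast, actually spells out the negativity of every cyclic sequence more explicitly than the paper does.
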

\begin{proof}
\begin{enumerate}
\item The decision algorithm is sketched in \figref{FIG:PseudocodeDecisionTNC}.
It takes in input a \TN $\H=(V,\A)$ and, in line~1, constructs the corresponding \MPG $G_{\H}$ as described in Theorem~\ref{Teo:TN2MPG}. 
This first step takes $O(m)$ time and yields a graph with $|V|+|\A|$ nodes and $O(m)$ arcs. 
Then, in line~2, the instance of \MPGT with $T=0$ on graph $G_{\H}$ is solved in $O((|V|+|\A|) m W)$ time by the Value Iteration Algorithm (see Theorem~\ref{Teo:FasterAlgo}).
The output consists in a partition of $G_{\H}$ nodes into two sets: $W_1 = \{v\in V\cup \A \mid \nu_v\geq 0\}$ and $W_0 = \{v\in V\cup \A \mid \nu_v< 0\}$.
If $W_0$ is empty, then $\H$ is consistent by Lemma~\ref{lem:mainReductionGwin}, otherwise it is not consistent by Lemma~\ref{lem:mainReductionHcons}.

\begin{figure}[tb]
\removelatexerror
\begin{algorithm}[H]
\caption{\texttt{isConsistent}$(\H)$}\label{ALG:PseudocodeDecisionTNC}
	\tcp{a \TN $\H = (V,\A)$ of unknown consistency state}
	$G_{\H} \leftarrow \texttt{makeACorrespondingGame}(\H)$; \tcp{See Algorithm~\ref{ALGO:PseudocodeReduction-htn-mpg}}
	$(W_0, W_1)\leftarrow \texttt{solve{\MPGT}}(G_{\H},0)$; \tcp{Brim's algorithm, see Theorem~\ref{Teo:FasterAlgo}}
	\lIf{$(W_0=\emptyset)$}{\textbf{Output}: \texttt{YES}}
	\lElse{\textbf{Output}: \texttt{NO}}
\end{algorithm}
\caption{Pseudocode of the algorithm for deciding \TNC.}
\label{FIG:PseudocodeDecisionTNC}
\end{figure}

\item In case $W_0$ is empty, a feasible scheduling is obtained as shown in Algorithm~\ref{ALG:PseudocodeFeasibleSchedules}. 
First, in line~2, the algorithm computes a positional winning strategy $\pi_1$ for Player~$1$.
This takes $O((|V| + |\A|) m W)$ time by Theorem~\ref{Teo:FasterAlgo}.  
Next, in line~3, it builds the graph $G_{\pi_1}$ which is conservative
since $\pi_1$ is a positional winning strategy for Player~$1$.
Then, in lines~4-5, it adds a new node $s$ to $V_1$ and a new arc $e_v=(s,v,0)$ for each node $v\in V_0$ in $G_{\pi_1}$. 
Let $G'_{\pi_1}=(V_0\cupdot (V_1\cup \{s\}), E')$ the graph thus obtained.
Observe that every node of $G'_{\pi_1}$ is reachable from $s$.
Indeed, every node $A\in V_1 =\A$ can be reached by traversing two arcs:
from $s$ to $t_A$ along the arc $e_{t_A}=(s,t_A,0)$, which belongs to $G'_{\pi_1}$ as $t_A\in V_0$,
then from $t_A$ to $A$ along the arc $(t_A,A,0)$, which belongs to $G_{\pi_1}$ (and hence to $G'_{\pi_1}$) since $t_A\in V_0$.\\
Since the added node $s$ is a source, then $G'_{\pi_1}$ is conservative too.
Therefore, in $G'_{\pi_1}$,
the set of distances from node $s$,
computed calling the Bellman-Ford algorithm in line~6,
forms a feasible potential $p:V_0\cup V_1\cup \{s\}\rightarrow \Z$
and the restriction of $p$ onto $V_0 = V$ is a feasible scheduling for $\H$.

\begin{figure}[tb]
\begin{algorithm}[H]
	\caption{\texttt{computeAFeasibleSchedule}$(\H)$}\label{ALG:PseudocodeFeasibleSchedules}
	\tcp{a consistent \TN $\H = (V,\A)$}%
	$G\leftarrow \texttt{makeACorrespondingGame}(\H)$; \tcp{See Algorithm~\ref{ALGO:PseudocodeReduction-htn-mpg}}
	$\pi_1 \leftarrow$ \MPGS($G$); \tcp{Compute a positional winning strategy for Player~$1$; see Theorem~\ref{Teo:FasterAlgo}}
	$G_{\pi_1}\leftarrow$ compute the subgraph of $G$ induced by $\pi_1$\;
	\tcp{Recall $G_{\pi_1}=(V_0 \cupdot V_1,E)$, where $V_0=V$ and $V_1=\A$.}
	$s \leftarrow$ a new node; \tcp{$s\not\in V_0 \cup V_1$}
	Add $s$ to $V_1$ and add an arc $(s, v, 0)$ for each $v\in V_0$\;
	$p\leftarrow \texttt{Bellman-Ford}(G_{\pi_1}, s)$; \tcp{compute a potential function $p$}
	\KwOut{the restriction of $p$ onto $V$}
\end{algorithm}
\caption{Pseudocode of the algorithm for computing a feasible schedule.}%
\label{FIG:PseudocodeFeasibleSchedules}
\end{figure}

\item In case $W_0$ is not empty, a negative cycle is determined by Algorithm~\ref{ALG:PseudocodeNegativeCycles}.
   Let $G[W_0]$ be the subgraph of $G$ induced by $W_0$, \ie the graph obtained from $G$ by removing all nodes not in $W_0$ and all the arcs incident into them.
   Notice that every node $v\in W_0$ is a winning start position for Player~$0$ in game $G[W_0]$ because $v$ is a winning start position for Player~$0$ in game $G$, and no winning strategy for Player~$0$ in $G$ can prescribe a move from a node in $W_0$ to a node in $W_1$; therefore, that same winning strategy remains valid on $G[W_0]$.
   This implies that, for every $u\in W_0$, there exists at least one arc $(u,v)$ with $v\in W_0$.
   In particular, since $(V_0,V_1)$ is a bipartition of $G$,
   then $\overline{W}_0 \triangleq W_0 \cap V_0 \neq \emptyset$.
   In line~3, a positional winning strategy $\pi_0$ for Player~$0$ on $G[W_0]$ is determined. 
   By Theorem~\ref{Teo:FasterAlgo}, this computation takes time $O((|V|+|\A|) m W)$.
   Consider the set of hyperarcs $\C=\{\pi_0(v)\}_{v\in \overline{W}_0}$; the pair $(\overline{W}_0, \C)$ returned by the algorithm is a negative cycle.
   Indeed, for any $v\in \overline{W}_0$, $\pi_0(v)\in V_1$ is a hyperarc of $\H$. Thus the head set $H_{\pi_0(v)}\subseteq V_0$.
   Also, $H_{\pi_0(v)}\subseteq W_0$, 
   since $v$ is a winning start position for Player~$0$
   and $\pi_0$ is a winning strategy for Player~$0$.
   Combining, $H_{\pi_0(v)}\subseteq \overline{W}_0$ determining that $(\overline{W}_0, \C)$ is a negative cycle.
\end{enumerate}
\end{proof}

\begin{figure}[tb]
	\begin{algorithm}[H]
		\caption{\texttt{computeANegativeCycle}$(\H, W_0)$}\label{ALG:PseudocodeNegativeCycles}
		\tcp{a \TN $\H = (V,\A) = (V_0\cupdot V_1, \A)$ which is not consistent}
		\tcp{the non-empty set $W_0 = \{v\in V \mid \nu_v< 0\}$}
		$G\leftarrow \texttt{makeACorrespondingGame}(\H)$; \tcp{See Algorithm~\ref{ALGO:PseudocodeReduction-htn-mpg}}
		$G[W_0]\leftarrow$ compute the subraph of $G$ induced by $W_0$\; 
		$\pi_0\leftarrow$ \MPGS($G[W_0]$); \tcp{Compute a positional winning strategy for Player~$0$; see Theorem~\ref{Teo:FasterAlgo}}
		$\overline{W}_0 \leftarrow W_0 \cap V_0$\;
		$\C\leftarrow \{\pi_0(v)\}_{v\in \overline{W}_0}$\;
		\KwOut{$(\overline{W}_0, \C)$}
	\end{algorithm}
	\caption{Pseudocode of the algorithm for computing a negative cycle.}
	\label{FIG:PseudocodeNegativeCycles}
\end{figure}

\begin{remark}\label{Rem:FeasibleSchedAlgo}
	In Theorem~\ref{Teo:MainAlgorithms}~Item~2), a set of feasible potentials may be obtained without executing the Bellman-Ford algorithm. 
	Actually, if the partition $(W_0,W_1)$ is computed by the Value Iteration Algorithm~\cite{brim2011faster},
	then a feasible scheduling for $\H$ can be directly derived from the \emph{progress measure} computed within the algorithm.
	In more detail, let $G=(V_0\cupdot V_1, E)$ be an \MPG weighted by $w:E\rightarrow\Z$.
	An \emph{energy progress measure} is a function $f:V_0\cup V_1\rightarrow \N\cup\{+\infty\}$ such that:
	if $v\in V_0$, then for every $(v,v', w)\in E$ it holds $f(v)\geq f(v') - w$; 
	otherwise, $v\in V_1$ and there exists $(v,v', w)\in E$ such that $f(v)\geq f(v') - w$.
	An energy progress measure $f:V_0\cup V_1\rightarrow \N\cup\{+\infty\}$ such that $0\leq f(v)<+\infty$ for every $v\in V_0\cup V_1$ is provided by 
	the resolution algorithm of Theorem~\ref{Teo:FasterAlgo} in time $O((|V|+|\A|) m W)$.
	
	The progress measure $f$ is already a feasible scheduling for $\H$: in fact, for every hyperarc $A\in\A$, it holds $(t_A, A, 0)\in E$ and $(A,v, w_A(v))\in E$,
	for every $v\in H_A$; combining these two last facts, it follows that:
	\[ f(t_A)\geq f(A)\geq \min_{v\in H_A} \{ f(v)-w_A(v) \}, \] \ie $f$ is a scheduling satisfying all constrains $A\in\A$.
	This allow us to employ the algorithm depicted in \figref{FIG:PseudocodeFeasibleSchedules-Remark} instead of the one depicted in
	\figref{FIG:PseudocodeFeasibleSchedules} in the case that $W_1 = V$.
\end{remark}

\begin{figure}[tb]
	\begin{algorithm}[H]
		\caption{\texttt{computeAFeasibleSchedule-Remark\ref{Rem:FeasibleSchedAlgo}}$(H)$}
		\tcp{a consistent \TN $\H = (V,\A) = (V_0\cupdot V_1, \A)$}
		\tcp{ref. Remark~\ref{Rem:FeasibleSchedAlgo} and Theorem~\ref{Teo:FasterAlgo}\cite{brim2011faster}}
		$G\leftarrow \texttt{makeACorrespondingGame}(\H)$; \tcp{ref. Algorithm~\ref{ALGO:PseudocodeReduction-htn-mpg}}
		$f\leftarrow \texttt{Value-Iteration}(G)$; \tcp{compute an energy progress measure for $G$ as in Theorem~\ref{Teo:FasterAlgo}} 
		\KwOut{$f$}
	\end{algorithm}
	\caption{Pseudocode of the algorithm of Remark~\ref{Rem:FeasibleSchedAlgo} for computing a feasible schedule.}%
	\label{FIG:PseudocodeFeasibleSchedules-Remark}
\end{figure}

The computational equivalence between \MPGD problem and \TNC can be now determined by showing that also \MPGD can be reduced to \TNC.
\begin{theorem}\label{Teo:MPG2TN}
   There exists a $\log$-space, linear-time, local-replacement reduction from \MPGD to \TNC.
\end{theorem}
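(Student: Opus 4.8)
The plan is to invert the construction of Theorem~\ref{Teo:TN2MPG}: from an instance $(G,\nu,s)$ of \MPGD, with $G=(V_0\cupdot V_1,E)$, I would build a \TN $\H$ whose associated game $G_{\H}$ (the one of Theorem~\ref{Teo:TN2MPG}) agrees, up to value-preserving local surgery, with a doctored copy of $G$, and then appeal to Lemmas~\ref{lem:mainReductionHcons} and \ref{lem:mainReductionGwin} (a \TN is consistent iff Player~$1$ wins from \emph{every} node of its game; see also Theorem~\ref{teo:charCons}). First I would perform two harmless normalizations. Subtracting $\nu$ from every arc weight moves each node value from $\nu_v$ to $\nu_v-\nu$, so I may assume the threshold is $0$. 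If $s\in V_1$ I would prepend a fresh node $s'\in V_0$ with the single arc $(s',s,0)$, whose value equals $\nu_s$, and rename $s:=s'$, so I may assume $s\in V_0$. Then I would make $G$ bipartite between $V_0$ and $V_1$ by subdividing every arc internal to a side with a dummy of the opposite side (subdivision preserves path weights, hence all node values); call the result $G'$.

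The crux is a \emph{localization gadget} that turns the single-source question ``$\nu_s\geq 0$'' into the all-sources statement that \TN consistency records. For every node $w\in V_1$ that is not already the tail of an arc entering $s$, I would add an \emph{escape arc} $(w,s,-M)$, where $M$ is any real number exceeding $|V(G')|\cdot W'$ and $W'$ is the largest absolute weight among the non-escape arcs; call the resulting game $\hat G$. The claim to prove is: every node of $\hat G$ is a winning start position for Player~$1$ if and only if $\nu_s\geq 0$.

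For the direction ``$\nu_s\geq 0\Rightarrow$ Player~$1$ wins everywhere'' I would fix a positional winning strategy $\tau$ for Player~$1$ from $s$ in $G'$ and let $R$ be the set of nodes reachable from $s$ when Player~$1$ follows $\tau$ and Player~$0$ plays arbitrarily, using only non-escape arcs: $R$ contains $s$, is closed under such plays, and every cycle inside it is non-negative. Player~$1$ then plays ``follow $\tau$ while inside $R$, and from any $V_1$-node outside $R$ move to $s$'' (possible because every $V_1$-node is a tail of some arc into $s$). Since in $\hat G$ every $V_0$-node sends all its arcs into $V_1$, any play reaches $R$ within two moves and stays there forever; hence the cycle it eventually repeats lies in $R$ and is non-negative, while any escape arc that got used sits on the finite prefix, never on that cycle. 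For the converse I would argue the contrapositive: if $\nu_s<0$ then Player~$0$ has a positional winning strategy $\sigma$ from $s$ in $G'$, and playing $\sigma$ in $\hat G$ the cycle eventually repeated either uses an escape arc---so its weight is at most $|V(G')|\cdot W'-M<0$---or uses none, so it is a $\sigma$-cycle reachable from $s$ in $G'$ and therefore negative; either way Player~$0$ wins from $s$ in $\hat G$, so not every node is winning for Player~$1$.

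Finally I would present $\hat G$ as $G_{\H}$ for an explicit \TN $\H$. The games produced by Theorem~\ref{Teo:TN2MPG} are bipartite with every $V_1$-node of in-degree $1$ and every $V_0$-to-$V_1$ arc of weight $0$; I would reach this shape by two further local, value-preserving moves---subdividing each remaining non-zero $V_0$-to-$V_1$ arc by a length-two path so that its weight migrates onto a $V_1$-to-$V_0$ arc, and splitting each $V_1$-node into one copy per incoming arc, every copy retaining all outgoing arcs---and then read the tails, heads and head-weights off the white nodes to obtain $\H$. Then $\H$ is consistent iff every node of $G_{\H}$ wins for Player~$1$ (by Lemmas~\ref{lem:mainReductionHcons} and \ref{lem:mainReductionGwin}), iff every node of $\hat G$ wins for Player~$1$ (value preservation), iff $\nu_s\geq 0$ (the claim above), iff $\nu_s\geq\nu$ in the original $G$; and every step is a bounded local rewriting computable in logarithmic space and linear time, as required. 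I expect the main obstacle to be the correctness of the localization gadget: the escape arcs must be costly enough never to yield a profitable cycle for Player~$1$ starting at $s$ (forcing the large $M$), yet must never be forced onto a traversed cycle while Player~$1$ is winning---which is precisely why Player~$1$ must escape only from outside the region $R$, rather than simply ``escape to $s$, then follow $\tau$''.
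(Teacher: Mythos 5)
Your argument is essentially correct, but it takes a genuinely different and much heavier route than the paper's. The paper does not pass through the forward reduction at all: it builds $\H$ directly on the node set $V_0\cupdot V_1$ of the given game --- one multi-head hyperarc $A_u=(u,N_G(u),w_{A_u})$ for each $u\in V_1$, and one standard arc per outgoing arc of each $u\in V_0$ --- and proves from scratch, by exchanging feasible schedulings of $\H$ with feasible potentials of projections $G_{\pi_1}$, that $\H$ is consistent iff every node of $G$ is a winning start position for Player~$1$. No bipartization and no normal form are needed, since the construction accepts an arbitrary game. In effect the paper reduces the ``all nodes have value $\geq 0$'' question and leaves the threshold $\nu$ and the distinguished start position $s$ of \MPGD implicit; your weight shift by $\nu$ and, above all, your escape-arc localization gadget are precisely what is needed to treat the literal single-source statement, and your analysis of the gadget (escape only from outside the $\tau$-reachable region $R$, so escape arcs sit on the prefix and never on the repeated cycle; conversely any cycle through an escape arc weighs less than $|V(G')|W'-M<0$) is sound. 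Two repairable slips: subdivision preserves the sign of cycle weights but not the mean-payoff values themselves (cycle lengths change), which suffices after shifting to threshold $0$ but is not ``preserves all node values''; and subdividing a $V_0$-to-$V_1$ arc by a single intermediate node breaks the bipartition, so the weight must be pushed through a path with two intermediate nodes. More importantly, the entire final massaging of $\hat G$ into the exact image shape of Theorem~\ref{Teo:TN2MPG} can be dropped by applying the paper's direct one-shot construction to $\hat G$; what your route buys in exchange for that extra surgery is an explicit, fully justified treatment of the source-and-threshold form of \MPGD that the paper's own proof glosses over.
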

\begin{proof}
Let $G=(V_0\cupdot V_1, E)$ be an \MPG. 
For each node $u\in V_0\cup V_1$, let $N_G(u)$ denote the outgoing neighborhood of $u$ in $G$, \ie 
$N_G(u)\triangleq\{v\in V_0\cup V_1\mid (u,v)\in E\}$. 

A corresponding \TN $\H=(V, \A)$, where $V = V_0\cupdot V_1$, is constructed from $G$ as follows.
For every $u\in V_1$, a hyperarc $A_u\in \A$ is added to $\H$, where: 
\[A_u\triangleq (u, N_G(u), w_{A_u}),\] with weight 
$w_{A_u}(v)\triangleq w(u,v)$ for every $v\in N_G(u)$.
Moreover, for every $u\in V_0$ and every $v\in N_G(u)$, a hyperarc $A_{uv}\in\A$ is added to $\H$, 
where: \[A_{uv}\triangleq (u,v,w(u,v)).\]
This construction requires a log-space and linear-time computation.

Now, we firstly prove that if $\H$ is consistent then every node of $G$ is a winning start position for Player~$1$.

Indeed, let $s:V\rightarrow \R$ be a feasible scheduling for $\H$.
Thus, $w_A^{s}\geq 0$ for every hyperarc $A\in \A$.
Notice that, by construction, for each $u\in V_1$ there exists a unique hyperarc $A_u\in \A$ with tail $t_{A_u}=u$; 
moreover, it holds that $H_{A_u} \triangleq N_G(u)$.
Hence, for each $u\in V_1$, we can define a positional strategy $\pi_1$ for Player~$1$ as follows: 
\[\pi_1(u) \triangleq \arg\min_{h\in H_{A_u}} \{ s(h)-w_{A_u}(h) \} .\] 
Now, consider the potential function $p:V\rightarrow\R$ defined as: $p(u)\triangleq s(u)$ for every $u\in V$. 
We argue that $p$ is a feasible potential for $G_{\pi_1}$. 

In fact, let $a=(u,v,w)\in E$ be any arc of $G$:
\begin{description}[Case~2:]
\item[Case~1:] Assume that $u\in V_0$. Then, by construction, $a=A_{uv}$. 
Hence, from $p\triangleq s$ and the feasibility of $s$, we have: 
\[
	\begin{array}{lll}
	p(u)=s(u)&\geq& \min_{h\in H_{A_{uv}}} \{ s(h)-w_{A_{uv}}(h) \} \\ 
		 &=& s(v)-w_{A_{uv}}(v) \\
	         &=& p(v)-w
	\end{array}
\]
Hence, $w^p(u,v)=w-p(v)+p(u)\geq 0$;
\item[Case~2:] Assume that $u\in V_1$. Then, by construction, $A=(u, N_G(u), w_A)$, where $w_A(z)=w(u,z)$ for every $z\in N_G(u)$;
moreover, notice that $v=\pi_1(u)\in N_G(u)=H_A$. 
Hence, from $p\triangleq s$, the feasibility of $s$, and the definition of $\pi_1$, we have: 
\[
	\begin{array}{lll}
	p(u)=s(u)&\geq& \min_{h\in H_{u}} \{ s(h)-w_{A_u}(h) \} \\
	         &=& s(\pi_1(u)) - w_{A_u}(\pi_1(u)) \\
	         &=& s(v) - w_{A_u}(v) \\
	         &=& p(v) - w \\
	\end{array}
\]
Hence, $w^p(u,v)=w-p(v)-p(u)\geq 0$.
\end{description}
Thus, $G_{\pi_1}$ is conservative. This implies that every node of $G$ is a winning start for Player~$1$.  

Secondly, we prove that if every node of $G$ is a winning start position for Player~$1$, then $H$ is consistent.  

Let $\pi_1$ be a positional winning strategy for Player~$1$.
It follows that $G_{\pi_1}$ is conservative and, therefore, it admits a feasible potential $p:V\rightarrow \R$.
Now, consider the scheduling function $s:V\rightarrow\R$ for $\H$ 
defined as: $s(u)\triangleq p(u)$ for every $u\in V$. 
We argue that $s$ is a feasible scheduling of $\H$.

In fact, let $A=(t_A, H_A, w_A)\in\A$ be any hyperarc of $\H$:
\begin{description}[Case~2:]
\item[Case 1:] assume $t_A\in V_0$. 
Then, by construction, $A=(u,v,w)$ for some $v\in N_G(u), w\in\R$ and $u=t_A$. 
Hence, from $s\triangleq p$ and the feasibility of $p$, we have:
\[
\begin{array}{lll}
s(t_A) = p(u) & \geq& p(v) - w \\
              &=& s(v) - w_A(v) \\
              &=& \min_{h\in H_A} \{ s(h) - w_A(h) \} \\
\end{array}
\]
Hence, $s$ satisfies $A$, \ie $w_A^s\geq 0$ ;
\item[Case 2:] assume $t_A\in V_1$. 
Then, by construction, $A=(u, N_G(u), w_A)$ for $u=t_A$ and $w_A(v)=w(u,v)\in\R$ for every $v\in N_G(u)$;
moreover, if $v\triangleq \pi_1(u)$, then $v\in N_G(u)=H_A$. 
Hence, from $s\triangleq p$ and the feasibility of $p$, we have:
\[
\begin{array}{lll}
s(t_A) = p(u) &\geq& p(v) - w \\
	      &=& s(v) - w_A(v) \\
              &\geq& \min_{h\in H_A} \{ s(h) - w_A(h) \} \\
\end{array}
\]
Hence, $s$ satisfies $A$, \ie $w_A^s\geq 0$.
\end{description}
This proves that $s$ satisfies every hyperarc $A\in\A$. 
Then $s$ is a feasible scheduling of $\H$, which is thus consistent.
\end{proof}

\section{Computational Experiments}%
\label{sect:experiments}
This section describes our empirical evaluation of the proposed consistency checking algorithms 
to evaluate the performances and the general applicability of the proposed \TN model.
Both Algorithm~\ref{ALG:PseudocodeFeasibleSchedules} and Algorithm~\ref{ALG:PseudocodeNegativeCycles} consist of one single call to 
Algorithm~\ref{ALG:PseudocodeDecisionTNC}, plus some extra computation of lower asymptotic complexity.
Since the cost of these further computations was confirmed to be practically negligible in some preliminary experiments, 
we report on the results of our experimental investigations only for 
Algorithm~\ref{ALG:PseudocodeDecisionTNC}. 

All algorithms and procedures employed in this empirical evaluation have been implemented in C/C++ 
and executed on a Linux machine having the following characteristics:
\begin{itemize}
\item 2 CPU AMD Opteron 4334;
\item 64GB RAM; 
\item Ubuntu server 14.04.1 Operating System.
\end{itemize}
The source code and all \TN{s} used in the experiments are freely available~\cite{Comin15}.

The main goal of this empirical evaluation was to determine the average computation time 
of Algorithm~\ref{ALG:PseudocodeDecisionTNC}, with respect to
randomly-generated \TN{s} following different criteria, in order to give an idea of the practical behavior of the algorithm.
According to Theorem~\ref{Teo:MainAlgorithms}, 
the worst-case time complexity of Algorithm~\ref{ALG:PseudocodeDecisionTNC} is $O(( n + m' ) m W)$,
where $n=|V|$, $m'=|\A|$, $m= \sum_{A\in\A}|A|$, and $W=\max_{A\in\A}\{ \max_{h\in A} |w_A(h)|\}$.
Hence, we implemented different experiments with respect to the parameters $n, m', m,$ and $W$. 
Here we propose a summary of the obtained results presenting a brief report about four tests, 
Test~1, Test~2, Test~3 and Test~4.

In Test~1 the average computation time was determined for different \TN{} 
orders $n$ to emphasize the practical computation time dependency on $n$.
In Test~2 the average computation time was determined for different \TN{} 
maximal edge-weights $W$ to understand how much the practical computation time is dependent on  $W$.
In Test~3 we investigated how some execution times affect the value of the standard deviation, 
with the goal to determine how many instances require a significant greater computation time 
with respect to the average time of a data set.
Finally, in Test~4 the average computation time was determined with respect to different values of the number of possible strategies of Player~1
$\prod_{A\in\A} |H_A|$ in order to give an idea about the possible practical relation between execution time and number of possible strategies.

The generation of random \TN instances was carried out exploiting two generators.
The first generator was the random workflow schema generator provided by ATAPIS toolset~\cite{Lanz14b}: 
it produces random workflow graphs according to different 
input parameters that allow to control the minimal and maximal number of activities, probability for having
parallel branches, the minimal and maximal probability of inter-task temporal constraints, etc. on the generated graphs.
We verified that this tool allows the determination of graphs that are not only a closer approximation to real-world examples,
but also more difficult to check than those generated at random without particular criteria.

We generated benchmarks as follows:
\begin{enumerate} 
	\item First, temporal workflow graphs were generated by fixing the probability for parallel branches to 10\% and
	maximal value for each activity duration or delay between activities to a value $W$, where $W$ was chosen accordingly to the test type;
	\item Then, each workflow graph was translated into an equivalent \TN $\H$ by the simple transformation algorithm exemplified in Section~\ref{sect:motivating}.
\end{enumerate}
It is worth noting that different random workflow graphs all having the same number of activities may translate into \TN{s}
having different orders $n$ because the original workflow graphs may have different number of connector nodes.
Considering the transformation algorithm exemplified in Section~\ref{sect:motivating}, it is easy to verify that a workflow with $N$ activities can translate
into a CSTN having between $2N+2$ nodes (when the workflow is a simple sequence)
and $5N+2$ nodes (when the workflow is a sequence of groups of two parallel activities).

ATAPIS toolset has been designed to generate graphs 
with strongly connected components (Andreas Lanz, personal communication, October 6, 2015). 
In particular, it has been optimized for small graphs with up to $50$ activities.
This design choice was motivated by the widely accepted seven 
process modeling guidelines~\cite{Mendling10} which suggests to always ``decompose a (workflow) model
with more than $50$ elements (activities)''.
Therefore, we used the tool for generating random workflow 
graphs with $100$ activities at maximum and, consequently, obtaining \TN{s} having $502$ nodes at most.

In Table~\ref{Table:WorkflowTNsize} we report the orders of the smallest and 
the largest \TN{} determined from each set of random generated
workflow graphs having $N$ activities for $N\in\{10,20,\ldots,100\}$.
\begin{table}[tb]
	\centering
	\caption{Orders of the smallest and biggest \TN{} determined 
		for each set of random generated workflows having $N$ activities.}%
		\label{Table:WorkflowTNsize}
	    \begin{tabular}[b]{| r | r | r |}
		\hline
		 \multicolumn{1}{|c|}{$N$}  & \multicolumn{1}{c|}{Order of smallest \TN{}} & \multicolumn{1}{c|}{Order of biggest \TN{}} \\
		\hline
	      10 & 26   & 50 \\
	      20 & 48   & 94 \\
	      30 & 78   & 138 \\  
	      40 & 104  & 196 \\
	      50 & 136  & 236 \\
	      60 & 164  & 268 \\
	      70 & 196  & 306 \\
	      80 & 222  & 350 \\
	      90 & 262  & 394 \\
	     100 & 292  & 410 \\
		\hline
		\end{tabular}
\end{table}

In order to study the scalability of the algorithm with respect to the number of nodes, we had to rely on a second generator of random \TN{} graphs.
Our choice has been to use the \texttt{randomgame} procedure of \texttt{pgsolver} suite~\cite{pgsolver}, 
that can produce parity games instances for any given number of nodes. 
In particular, we exploited \texttt{randomgame} in the following way:
\begin{enumerate} 
	\item First, \texttt{randomgame} was used to generate random directed graphs with out-degree fixed to 3;
	\item Then, the resulting graphs were translated into \MPG{s} by weighting each arc with an integer randomly chosen in the interval $[-W,W]$, where $W$ was
	chosen accordingly to the test type;
	\item Finally, each \MPG $G$ was translated into a \TN $\H_G$ by the reduction algorithm of Theorem~\ref{Teo:MPG2TN}.  
	During the translation from \MPG to \TN, only 10\% of the hyperarcs were maintained having multiple heads, 
	while 90\% of hyperarcs were transformed into standard arcs.
	This 10\%-rule stems from the fact that we are considering 
	workflow based applications where the percentage of (multi-headed) hyperarcs is less than 10\%
	compared to standard arcs in general.
\end{enumerate}
With such settings, the resulting \TN{s} are characterized by $m, m'\in\Theta(n)$.
\smallskip


\begin{table}[t]
	\caption{Comparison between different kinds of queue implementation in the Value-Iteration procedure. All values are in seconds.}\label{Table:Value-Iteration-Queues}
	\centering
	\begin{tabular}{| c | r | r | r | r |}
		\cline{2-5}
		\multicolumn{1}{c|}{} & FIFO Queue & LIFO Queue &\vtop{\hbox{\strut LIFO Queue}\hbox{\strut + Stopping-Criterion}} & Max-Priority Queue\\
		\hline
		$\mu$  		&  90.55    &    11.77    &    6.98    &    184.69   \\ 
		$\sigma$	&  487.69   &    64.10    &   34.61    &    653.26  \\
		\hline
	\end{tabular}
\end{table}

Before presenting the summary of results, it is worthwhile to present some implementation choices about Algorithm~\ref{ALG:PseudocodeDecisionTNC} that we had
to adopt.
The core of the algorithm consists of calls to algorithms \texttt{makeACorrespondingGame(\H)}, that transforms the given \TN \H  into a MPG $G_{\H}$, and
\texttt{solveMPG-Threshold($G_{\H},0$)} (Value Iteration algorithm), that determines  for which game nodes $s$ it holds that $v_s \geq 0$.
The \texttt{makeACorrespondingGame()} implementation didn't require significant choices thanks to the simple structure of the algorithm. 
On the contrary, in the implementation of \texttt{solveMPG-Threshold()} we introduced some further ideas in order to speed-up the  algorithm and avoid
unnecessary computations.
In particular, it is not necessary for \texttt{solveMPG-Threshold()} to continue to determine other potential value $v_{s'}$ as soon as it determines a value
$v_s < 0$: at this point we can already conclude that the network is not consistent and, with a lower computational cost, we can yield 
a generalized negative circuit assessing this fact (Lemmas~\ref{lem:mainReductionHcons} and \ref{lem:mainReductionGwin}).

Moreover, we verified that there is an important data structure in the original Value Iteration algorithm, a queue, that is not further specified by the
authors and that different implementations of it affect the performance of the algorithm.
Therefore, we decided to verify whether \texttt{solveMPG-Threshold()} performance could be appreciably improved adding a suitable stopping criterion and a
proper queue implementation. 
Table~\ref{Table:Value-Iteration-Queues} reports the obtained results, mean execution time $\mu$ and its standard deviations $\sigma$, determined running the
following different versions of \texttt{solveMPG-Threshold()} on the same data set of $10^3$ not consistent \TN{s}\footnote{We
considered not consistent \TN{s} because they practically required more time to be solved.} having $|V|=10^6$ and $W=10^3$:
\begin{enumerate}
	\item \textbf{FIFO Queue}: the original queue is implemented as a FIFO queue; 
	\item \textbf{LIFO Queue:} the original queue is implemented as a a LIFO queue (stack);
	\item \textbf{LIFO Queue+Stopping Criterion:} the queue is implemented as stack and the computation is halted either when all potential values are
	stable or when any of them is negative;
	\item \textbf{Max-Priority Queue:} the original queue is implemented as a Fibonacci's heap.
\end{enumerate}

The results show that, in general, \texttt{solveMPG-Threshold()} performance becomes better if the original queue is implemented as a stack and, in particular,
a further improvement can be obtained if the stopping criterion is also considered. 
Nevertheless, such improvements can only partially reduce the statistics
variability of the running time, as it is shown in the following experimental results.
\smallskip

As mentioned above, the goal of Test~1 was to determine the average computation time of Algorithm~\ref{ALG:PseudocodeDecisionTNC} implementation for different
values of $n$ to study the practical computation time dependency on such parameter.

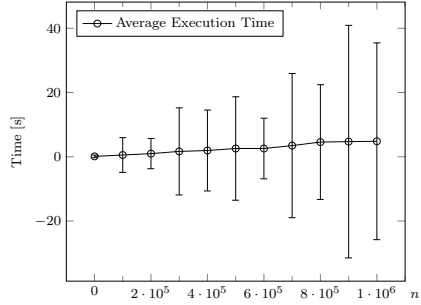
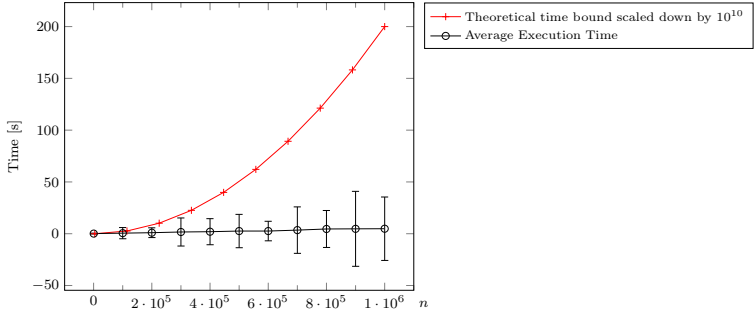
\begin{figure}[tb]
	\centering
	\subfloat[Test~1 results.]{\label{Table:Test1}
		\begin{tabular}[b]{| r | r | r |}
			\hline
		 	\multicolumn{1}{|c|}{$n$}  & \multicolumn{1}{c|}{$\mu$ (sec)} & \multicolumn{1}{c|}{$\sigma$} \\
			\hline
			   $<4\cdot 10^2$ &  0.13 & 0.42\\
			   $1\cdot 10^5$ &   0.55 & 5.41 \\
			   $2\cdot 10^5$ &   0.99 & 4.71 \\
			   $3\cdot 10^5$ &   1.67 & 13.55 \\
			   $4\cdot 10^5$ &   1.95 & 12.59 \\
			   $5\cdot 10^5$ &   2.58 & 16.10 \\
			   $6\cdot 10^5$ &   2.58 & 9.43 \\
			   $7\cdot 10^5$ &   3.48 & 22.43 \\
			   $8\cdot 10^5$ &   4.58 & 17.85 \\
			   $9\cdot 10^5$ &  4.72  & 36.19 \\
			   $10\cdot 10^5$ &  4.83 & 30.62 \\
			\hline
		\end{tabular}
	}
	\qquad
	\subfloat[Interpolation of average execution times of Test~1.]{\label{SubFig:Test1}
 	    \begin{tikzpicture}[scale=0.65]
			\begin{axis}[legend pos=north west, 
				xlabel={$n$},
				scaled x ticks=false,
				minor x tick num=1,
				ylabel={Time},
				y unit=s,
 				/pgfplots/ylabel near ticks,
 				/pgfplots/xlabel near ticks,
				xlabel style={ at={(ticklabel cs:1)}, anchor=south west}
				]
				\addplot[mark=o, 
						error bars/.cd,
						y dir=both, y explicit, 
					]
				    	table[row sep=crcr, x=x,y=y,y error=yerr] {
				        x       y       yerr  \\
				         400  0.13     0.42\\
				      100000  0.55   5.41   \\
				      200000  0.99   4.71   \\
				      300000  1.67   13.55  \\
				      400000  1.95   12.59  \\
				      500000  2.58   16.10  \\ 
				      600000  2.58   9.43    \\
				      700000  3.48   22.43  \\
				      800000  4.58   17.85   \\
				      900000  4.72   36.19   \\
				     1000000  4.83   30.62  \\
				}; 
			\addlegendentry{Average Execution Time}
			\end{axis}
		\end{tikzpicture}
	}
	\qquad
	\subfloat[Comparison between theoretical computation times and experimental ones.\label{SubFig:Test1Comparison}]{
	    \begin{tikzpicture}[scale=0.67]
			\begin{axis}[
				legend cell align=left,
    			legend pos=outer north east, 
				xlabel={$n$},
				scaled x ticks=false,
				minor x tick num=1,
				ylabel={Time},
				y unit=s,
				/pgfplots/ylabel near ticks,
				/pgfplots/xlabel near ticks,
				xlabel style={ at={(ticklabel cs:1)}, anchor=south west}
				]
				\addplot[red,mark=+,domain=4e3:10e5,samples=10] {((2*x)*x/1e10)};
				\addlegendentry{\scriptsize Theoretical time bound scaled down by $10^{10}$}
				\addplot[mark=o, 
						error bars/.cd,
						y dir=both, y explicit, 
					]
				    	table[row sep=crcr, x=x,y=y,y error=yerr] {
				        x       y       yerr  \\
				        400  0.13     0.42\\
				      100000  0.55   5.41   \\
				      200000  0.99   4.71   \\
				      300000  1.67   13.55  \\
				      400000  1.95   12.59   \\
				      500000  2.58   16.10   \\ 
				      600000  2.58   9.43    \\
				      700000  3.48   22.43   \\
				      800000  4.58   17.85   \\
				      900000  4.72   36.19   \\
				     1000000  4.83   30.62   \\
				}; 
			\addlegendentry{\scriptsize Average Execution Time}
			\end{axis}
		\end{tikzpicture}
	}
	\caption{Results of Test~1: average execution times ($\mu$) and relative standard deviations ($\sigma$) 
over a range of different \TN{} orders $n$. Times are in seconds. Each data set comprised of $1600$ \TN instances of unknown consistency
state.}\label{FIG:Test1}
\end{figure}

The instances in Test~1 come from the \texttt{randomgame} generator, 
except those for the first row of the table in \figref{Table:Test1} which have been built by the ATAPIS workflow random generator.
In particular, for each $n\in \{1\cdot10^5, 2\cdot 10^5, \ldots, 10\cdot 10^5\}$,
$1600$ \TN instances with maximum weight $W:=1000$ and unknown consistency state were generated by \texttt{randomgame},
whereas $1600$ \TN{s} of unknown consistency state and order $n$ around $400$ were generated by ATAPIS.
The results of the test are summarized in \figref{FIG:Test1}, where each execution mean time is depicted as a point with a vertical
bar representing its confidence interval determined according to its standard deviation.

The depicted function interpolating the mean values shows that the practical performance of the algorithms is definitely better than the theoretical worst-case
bound of $O(( n + m' ) m W)$; in our case this last is $O(n^2)$ since in the generated data sets
$W$ is constant and $m,m' \in \Theta(n)$.
\figref{SubFig:Test1Comparison} depicts the interpolating function of experimental execution times and, 
in red, the function $n^2/10^{10}$ as a reasonable surrogate for the worst-case execution time.
The comparison shows that the algorithm performs very well in real case executions.

However, since the standard deviation observed in the experiment is not negligible, we further investigated the behavior of the algorithm and we discovered that
there is a correlation between the execution time of the algorithm and the consistency state of the input \TN.
Therefore, $\mu$ and $\sigma$ were recalculated considering two kind of \TN sets: one having all consistent \TN{s}, and the other having all not consistent
\TN{s}.

\begin{figure}[t!b]
\centering
\captionsetup[subfigure]{width=.42\textwidth}
	\subfloat[Average execution times for consistent \TN{s} obtained from workflow graphs with $N$
	activities.\label{Table:Test1WFConsistent}]{\small
		\begin{tabular}[b]{| r | r | r |}
			\hline
			 \multicolumn{1}{|c|}{$N$}  & \multicolumn{1}{c|}{$\mu$ (sec)} & \multicolumn{1}{c|}{$\sigma$} \\
			\hline
			   10 & $6.42\cdot 10^{-5}$ & $1.22\cdot 10^{-5}$ \\
			   20 & $1.05\cdot 10^{-4}$ & $4.85\cdot 10^{-5}$ \\
			   30 & $1.50\cdot 10^{-4}$ & $5.7\cdot 10^{-5}$ \\
			   40 & $2.43\cdot 10^{-4}$ & $1.04\cdot 10^{-4}$ \\
			   50 & $3.20\cdot 10^{-4}$ & $1.78\cdot 10^{-4}$ \\
			   60 & $3.77\cdot 10^{-4}$ & $1.38\cdot 10^{-4}$ \\
			   70 & $4.77\cdot 10^{-4}$ & $1.28\cdot 10^{-4}$ \\
			   80 & $5.73\cdot 10^{-4}$ & $1.80\cdot 10^{-4}$ \\
			   90 & $6.82\cdot 10^{-4}$ & $2.79\cdot 10^{-4}$ \\
			  100 & $8.89\cdot 10^{-4}$ & $4.10\cdot 10^{-4}$  \\
			\hline
			\end{tabular}
	}
	\qquad   
 	\subfloat[Interpolation of average execution times of Table~\ref{Table:Test1WFConsistent}.\label{Fig:Test1WFclassifiedConsistent}]{\small
		\begin{tikzpicture}[scale=0.62, domain=0:4]
		\begin{axis}[
				legend pos=north west
				, ymin=0, ymax=0.00138
				, xlabel={$N$}
				, ylabel={Time}
				, y unit=s
				, scaled x ticks=false
				, /pgfplots/ylabel near ticks,
				, /pgfplots/xlabel near ticks,
				, xlabel style={ at={(ticklabel cs:1)}, anchor=south west}
			]
			\addplot[mark=o, 
			error bars/.cd, 
			y dir=both, 
			y explicit,
			]
		      table[row sep=crcr,x=x,y=y,y error=yerr] {
		        x       y       yerr  
		      10  0.000064    0.00001 \\
		      20  0.000104    0.00004 \\
		      30  0.000149    0.00005 \\
		      40  0.000243    0.00010 \\
		      50  0.000319    0.00017 \\
		      60  0.000376    0.00013 \\
		      70  0.000476    0.00012 \\
		      80  0.000573    0.00017 \\
		      90  0.000682    0.00027 \\
		     100  0.000888    0.00040 \\
		}; 
		\end{axis}
	\end{tikzpicture}
	}
	\\
	\subfloat[Average execution times for not consistent \TN{s} obtained from random workflow graphs with $N$
	activities.\label{Table:Test1WFInconsistent}]{\small
	    \begin{tabular}[b]{| r | r | r |}
		\hline
		 \multicolumn{1}{|c|}{$N$}  & \multicolumn{1}{c|}{$\mu$ (sec)} & \multicolumn{1}{c|}{$\sigma$} \\
		\hline
	      10 & $4.45\cdot 10^{-4}$  & $1.38\cdot 10^{-3}$ \\
	      20 & $1.50\cdot 10^{-3}$  & $5.10\cdot 10^{-3}$ \\
	      30 & $4.04\cdot 10^{-3}$  & $1.48\cdot 10^{-2}$ \\  
	      40 & $1.10\cdot 10^{-2}$  & $3.62\cdot 10^{-2}$ \\
	      50 & $1.64\cdot 10^{-2}$  & $8.42\cdot 10^{-2}$ \\
	      60 & $4.36\cdot 10^{-2}$  & $1.20\cdot 10^{-1}$ \\
	      70 & $8.08\cdot 10^{-2}$  & $2.71\cdot 10^{-1}$ \\
	      80 & $1.31\cdot 10^{-1}$  & $4.20\cdot 10^{-1}$ \\
	      90 & $1.59\cdot 10^{-1}$  & $5.22\cdot 10^{-1}$ \\
	     100 & $2.59\cdot 10^{-1}$  & $8.46\cdot 10^{-1}$ \\
		\hline
		\end{tabular}
	}
	\qquad
	\subfloat[Interpolation of average execution times of Table~\ref{Table:Test1WFInconsistent}.\label{Fig:Test1WFclassifiedInconsistent}]{\small
		\begin{tikzpicture}[scale=0.62, domain=0:4]
		\begin{axis}[
				legend pos=north west
				, ymin=-0.67, ymax=1.2
				, xlabel={$N$}
				, ylabel={Time}
				, y unit=s
				, scaled x ticks=false
				, /pgfplots/ylabel near ticks,
				, /pgfplots/xlabel near ticks,
				, xlabel style={ at={(ticklabel cs:1)}, anchor=south west}
			]
		\addplot[mark=o,error bars/.cd,y dir=both,y explicit]
		      table[row sep=crcr, x=x,y=y,y error=yerr] {
		        x       y       yerr  \\
		      10 0.000445       0.001380583434   \\
		      20 0.001502992113   0.005095620104  \\
		      30 0.004048780273  0.01486809066  \\  
		      40 0.01103171587   0.03621212691  \\
		      50 0.01644273384   0.08424989611  \\
		      60 0.04368451081   0.1193160592  \\
		      70 0.08077450854   0.271221609  \\
		      80 0.1318617099   0.4200639859  \\
		      90 0.1583977008   0.5228110804  \\
		     100 0.2595240291   0.8460078121  \\
		}; 
		\end{axis}
	\end{tikzpicture}
	}
	\caption{Average execution times obtained in Test~1 calculated considering samples of either all consistent or all not consistent
	\TN{s} obtained from workflow graphs.}\label{Fig:Test1-WFclassified}
\end{figure}
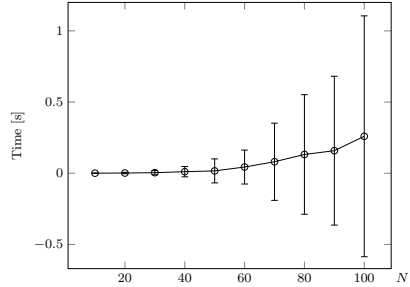

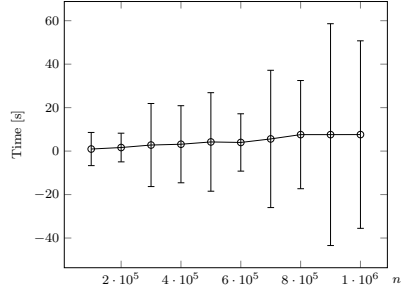
\begin{figure}[t!b]
\centering
	\begingroup
	\centering
	\captionsetup[subfigure]{width=.4\textwidth}
	\subfloat[Average execution times for consistent \TN{s} obtained from 
		MPGs.]{\label{Table:Test1MPGConsistent}\small
		\begin{tabular}[b]{| r | r | r |}
			\hline
			 \multicolumn{1}{|c|}{$n$}  & \multicolumn{1}{c|}{$\mu$ (sec)} & \multicolumn{1}{c|}{$\sigma$} \\
			\hline
			  $1\cdot 10^5$ & 0.16 & 0.04   \\
			  $2\cdot 10^5$ & 0.35 & 0.07   \\
			  $3\cdot 10^5$ & 0.56 & 0.01   \\
			  $4\cdot 10^5$ & 0.75 & 0.02   \\
			  $5\cdot 10^5$ & 0.96 & 0.02   \\
			  $6\cdot 10^5$ & 1.18 & 0.03   \\
			  $7\cdot 10^5$ & 1.38 & 0.03   \\
			  $8\cdot 10^5$ & 1.59 & 0.04   \\
			  $9\cdot 10^5$ & 1.86 & 0.06   \\
			  $10\cdot 10^5$ & 2.07 &  0.08 \\
			\hline
			\end{tabular}
	}
	\qquad\qquad
 	\subfloat[Interpolation of average execution times of Table~\ref{Table:Test1MPGConsistent}.\label{Fig:Test1MPGclassifiedConsistent}]{\small
	 	\begin{tikzpicture}[scale=0.62, domain=0:4]
		\begin{axis}[
				legend pos=north west
				, xlabel={$n$}
				, ylabel={Time}
				, y unit=s
				, scaled x ticks=false
				, /pgfplots/ylabel near ticks,
				, /pgfplots/xlabel near ticks,
				, xlabel style={ at={(ticklabel cs:1)}, anchor=south west}
			]
			\addplot[mark=o, 
			error bars/.cd, 
			y dir=both, 
			y explicit,
			]
		      table[row sep=crcr,x=x,y=y,y error=yerr] {
		        x       y       yerr  
		      100000  0.156  0.041   \\
		      200000  0.349   0.069  \\
		      300000  0.557  0.0141  \\
		      400000  0.752  0.0169  \\
		      500000  0.957  0.0184  \\
		      600000  1.185   0.0326 \\
		      700000  1.379  0.0293  \\
		      800000  1.586   0.0405 \\
		      900000  1.860   0.0603 \\
		     1000000  2.068  0.0851  \\
		}; 
		\end{axis}
	\end{tikzpicture}
	}
	\\
	\subfloat[Average execution times for not consistent \TN{s} obtained from MPGs.\label{Table:Test1MPGInconsistent}]{\small
	    \begin{tabular}[b]{| r | r | r |}
		\hline
		 \multicolumn{1}{|c|}{$n$}  & \multicolumn{1}{c|}{$\mu$ (sec)} & \multicolumn{1}{c|}{$\sigma$} \\
		\hline
	      $1\cdot 10^5$ & 0.95  & 7.63  \\
	      $2\cdot 10^5$ & 1.64  & 6.60  \\
	      $3\cdot 10^5$ & 2.79  & 19.11  \\  
	      $4\cdot 10^5$ & 3.15  & 17.73  \\
	      $5\cdot 10^5$ & 4.21  & 22.67  \\
	      $6\cdot 10^5$ & 3.98  & 13.19  \\
	      $7\cdot 10^5$ & 5.60  & 31.58  \\
	      $8\cdot 10^5$ & 7.58  & 24.89  \\
	      $9\cdot 10^5$ & 7.58  & 51.03  \\
	     $10\cdot 10^5$ & 7.60  & 43.14  \\
		\hline
		\end{tabular}
	}
	\qquad
	\subfloat[Interpolation of average execution times of Table~\ref{Table:Test1MPGInconsistent}.\label{Fig:Test1MPGclassifiedInconsistent}]{\small
		\begin{tikzpicture}[scale=0.62, domain=0:4]
		\begin{axis}[
				legend pos=north west
				, xlabel={$n$}
				, ylabel={Time}
				, y unit=s
				, scaled x ticks=false
				, /pgfplots/ylabel near ticks,
				, /pgfplots/xlabel near ticks,
				, xlabel style={ at={(ticklabel cs:1)}, anchor=south west}
			]
		\addplot[mark=o,error bars/.cd,y dir=both,y explicit]
		      table[row sep=crcr, x=x,y=y,y error=yerr] {
		        x       y       yerr  \\
		      100000  0.950   7.632 \\
		      200000  1.643   6.598 \\
		      300000  2.791   19.107\\  
		      400000  3.154   17.734\\
		      500000  4.211   22.667\\
		      600000  3.977   13.193\\
		      700000  5.600   31.585\\
		      800000  7.581   24.890\\
		      900000  7.580   51.033\\
		     1000000  7.601   43.136\\
		}; 
		\end{axis}
	\end{tikzpicture}
	}
	\endgroup
	\caption{Average execution times obtained in Test~1 calculated 
	for samples of either all consistent or all not consistent
	\TN{s} obtained from MPG graphs.}%
	\label{Fig:Test1-classified}
\end{figure}

\figref{Fig:Test1-WFclassified} depicts average execution times obtained 
in Test~1 calculated considering samples of either all consistent or all not consistent
\TN{s} obtained from workflow graphs. 
\figref{Fig:Test1-classified} offers the same view but for \TN{s} obtained from MPG graphs. 
In general, the mean execution times for consistent \TN{s} are 
smaller than the corresponding ones for not consistent \TN{s}; 
furthermore, they also exhibit a negligible standard deviation. 
However, for samples of consistent \TN{s} obtained from workflows, the standard deviation is not 
negligible even for samples with size $N=20$.
Part of the reasons for this behavior is given by the structure of the data sets: 
in each data set \TN{s} can differ a lot with respect to their order and, therefore,
they may require very different execution times. 
For example, the data set relating to workflow graphs with 20 activities contains \TN{s} with order in
range $[48, 94]$. Since the number of activities is usually considered as main parameter in workflow community, 
we wanted to maintain such structure of data set and experiment results to emphasize 
the dependency of execution time with respect to such number.

On the other side, for consistent \TN{s} determined from MPGs, 
the observed standard deviation $\sigma$ is always less that the 10\% of the average
execution time $\mu$ with 99\% level of confidence, while for not consistent \TN{s} 
it has not been possible to determine any confidence level because the
observed standard deviation $\sigma$ resulted to be always high due to some hard instances.

Even though procedure \texttt{solveMPG-Threshold()}
could require up to $\Theta(W)$ updates according to the theoretical worst-case bound,
our experiments suggest that, in practice, some kind of dependency of the running time on $W$
is appreciable only for a few \MPG games, all associated to not consistent \TN instances.
\medskip

The goal of Test~2 was to determine the average computation time of 
Algorithm~\ref{ALG:PseudocodeDecisionTNC} for different 
values of $W$, in order to understand how much the practical computation time is dependent on $W$.
Therefore, we considered three possible edge weight ranges, 
$[10^2, 10^3]$,  $[10^5, 10^6]$, and $[10^8, 10^9]$, 
and for each of them two data sets have been built using the \texttt{randomgame} generator, 
one comprising only consistent \TN{s}, and the other only not consistent ones. 
Each data set comprised of $800$ \TN{s} instances having $|V|=10^5$ nodes, $m,m'\in \Theta(n)$ and 
edge weights in the corresponding weight range.
\figref{Fig:Test2} depicts the results on these six data sets.
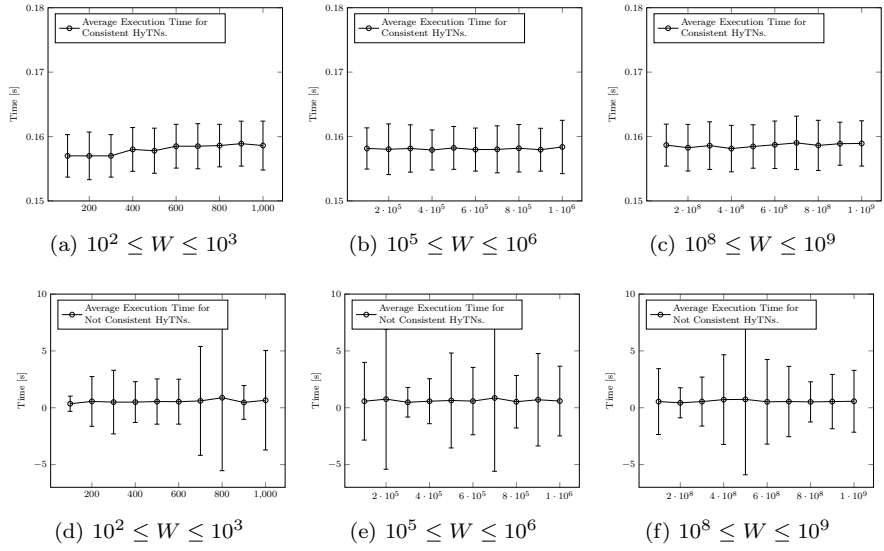
\begin{figure}[tb]
	\centering
	\subfloat[$10^2\leq W\leq 10^3$]{
		\begin{tikzpicture}[scale=0.45]
			\begin{axis}[
				ymin=0.15, ymax=0.18 , 
				 legend pos=north west 
				, xlabel={}
				, ylabel={Time}
				, y unit=s  
				, scaled x ticks=false
				, ylabel near ticks
				, xlabel near ticks
			]
			\addplot[
				mark=o 
				, error bars/.cd 
				, y dir=both, 
				, y explicit
				]
			      table[row sep=crcr, x=x,y=y,y error=yerr] {
			        x          y     yerr  \\   
			        100   0.157    0.0033  \\
			        200   0.157    0.0037  \\    
			        300   0.157    0.0033 \\ 
			        400   0.158    0.0034  \\   
			        500   0.1578   0.0035  \\      
			        600   0.1585   0.0034  \\    
			        700   0.1585   0.0035  \\   
			        800   0.1586   0.0033  \\    
			        900   0.1589   0.00349  \\     
			       1000   0.1586   0.00380 \\     
				}; 
			\addlegendentry[text width=4cm,text depth=]{Average Execution Time for Consistent \TN{s}.}
			\end{axis}
		\end{tikzpicture}
	}
	\ 
	\subfloat[$10^5\leq W\leq 10^6$]{
		\begin{tikzpicture}[scale=0.45]
			\begin{axis}[
				ymin=0.15, ymax=0.18, 
				legend pos=north west 
				, xlabel={}
				, ylabel={Time}
				, y unit=s
				, scaled x ticks=false
				, ylabel near ticks
				, xlabel near ticks
				]
			\addplot[mark=o 
				, error bars/.cd 
				, y dir=both, 
				, y explicit
				]
			      table[row sep=crcr, x=x,y=y,y error=yerr] {
			        x          y       yerr                 \\  
			       100000     0.15814654      0.0032        \\
			       200000     0.1580240287    0.003936462   \\ 
			       300000     0.158143965     0.003672156   \\    
			       400000     0.157910473     0.003111354   \\       
			       500000     0.1582324944    0.003327139   \\    
			       600000     0.1579715081    0.003340000   \\      
			       700000     0.1580064657    0.003649407   \\       
			       800000     0.1581810774    0.003689283   \\       
			       900000     0.1579512434    0.003327979   \\        
			      1000000     0.1583781773    0.004140331   \\        
			}; 
			\addlegendentry[text width=4cm,text depth=]{Average Execution Time for Consistent \TN{s}.}
			\end{axis}
		\end{tikzpicture}
	}
	\ 
	\subfloat[$10^8\leq W\leq 10^9$]{
		\begin{tikzpicture}[scale=0.45]
			\begin{axis}[
				ymin=0.15, ymax=0.18, 
				 legend pos=north west 
				, xlabel={} 
				, ylabel={Time}
				, y unit=s 
				, scaled x ticks=false
				, ylabel near ticks
				, xlabel near ticks
				]
				\addplot[mark=o 
					, error bars/.cd 
					, y dir=both, 
					, y explicit
					]
				      table[row sep=crcr, x=x,y=y,y error=yerr] {
				        x          y      yerr     \\  
				      100000000   0.1586706438  0.003260788918  \\
				      200000000   0.1582644341  0.003622459372  \\ 
				      300000000   0.1585910405  0.003695868666  \\
				      400000000   0.1581284895  0.003611277689  \\
				      500000000   0.1584529272  0.003372738464  \\
				      600000000   0.1587233398  0.003681266929  \\ 
				      700000000   0.1590139164  0.004149306268  \\
				      800000000   0.1586251216  0.003890676756  \\
				      900000000   0.1588834019  0.003342266923  \\
				     1000000000   0.1589280664  0.003514581468  \\
				}; 
				\addlegendentry[text width=4cm,text depth=]{Average Execution Time for Consistent \TN{s}.}
			\end{axis}
		\end{tikzpicture}
	}
	\\
	\subfloat[$10^2\leq W\leq 10^3$]{
		\begin{tikzpicture}[scale=0.45]
			\begin{axis}[
				ymin=-7, ymax=10,
				 legend pos=north west 
				, xlabel={} 
				, ylabel={Time}
				, y unit=s 
				, scaled x ticks=false
				, ylabel near ticks
				, xlabel near ticks
				]
				\addplot[mark=o 
					, error bars/.cd 
					, y dir=both, 
					, y explicit
					]
				      table[row sep=crcr, x=x,y=y,y error=yerr] {
				        x        y        yerr   \\   
				      100      0.3548427975  0.6685780348   \\ 
				      200      0.5580607241  2.188622344   \\  
				      300      0.4968850115  2.798099369   \\
				      400      0.4995424185  1.792345753   \\
				      500      0.5453294182  1.991119645   \\
				      600      0.5333053331  1.976063337   \\
				      700      0.6035481124  4.790754101   \\
				      800      0.8821451701  6.41887248   \\
				      900      0.4669560195  1.484196662   \\
				     1000      0.6597385124  4.371938417   \\
					}; 
				\addlegendentry[text width=4cm,text depth=]{Average Execution Time for Not Consistent \TN{s}.}
			\end{axis}
		\end{tikzpicture} 
	}
	\ 
	\subfloat[$10^5\leq W\leq 10^6$]{
		\begin{tikzpicture}[scale=0.45]
			\begin{axis}[
				 ymin=-7, ymax=10,
				legend pos=north west 
				, xlabel={} 
				, ylabel={Time}
				, y unit=s 
				, scaled x ticks=false
				, ylabel near ticks
				, xlabel near ticks
				]
				\addplot[mark=o 
					, error bars/.cd 
					, y dir=both, 
					, y explicit
					]
				      table[row sep=crcr, x=x,y=y,y error=yerr] {
				        x          y      yerr    \\  
				      100000    0.5716545053   3.41846512   \\
				      200000    0.7536045993   6.163441069  \\ 
				      300000    0.4801901184   1.303251235  \\
				      400000    0.5732116381   1.977771118  \\
				      500000    0.6378956768   4.175609409  \\
				      600000    0.584690086    2.962003938  \\ 
				      700000    0.8604524406   6.451991559  \\
				      800000    0.5295288557   2.310899484  \\
				      900000    0.7028723913   4.067331303  \\
				     1000000    0.5893957813   3.064948818  \\
				}; 
				\addlegendentry[text width=4cm,text depth=]{Average Execution Time for Not Consistent \TN{s}.}
			\end{axis}
		\end{tikzpicture}
	}
	\ 
	\subfloat[$10^8\leq W\leq 10^9$]{
		\begin{tikzpicture}[scale=0.45, domain=0:4]
			\begin{axis}[
				 ymin=-7, ymax=10,
				 legend pos=north west 
				, xlabel={} 
				, ylabel={Time}
				, y unit=s 
				, scaled x ticks=false
				, ylabel near ticks
				, xlabel near ticks
				]
				\addplot[mark=o 
					, error bars/.cd 
					, y dir=both, 
					, y explicit
					]
				      table[row sep=crcr, x=x,y=y,y error=yerr] {
				        x          y      yerr        \\  
				      100000000  0.5421311039 2.898779502     \\
				      200000000  0.4333790206 1.324697828     \\ 
				      300000000  0.5409006207 2.154921304     \\
				      400000000  0.7153354511 3.944879421    \\
				      500000000  0.7347573893 6.636128285     \\
				      600000000  0.5208934125 3.721867528     \\ 
				      700000000  0.545149139  3.094114389     \\
				      800000000  0.5131736499 1.767497354     \\
				      900000000  0.5417591056 2.391126746     \\
				     1000000000  0.5674673492 2.72155958     \\
				}; 
				\addlegendentry[text width=4cm,text depth=]{Average Execution Time for Not Consistent \TN{s}.}
			\end{axis}
		\end{tikzpicture}
	}
	\caption{Average execution times in Test~1 calculated considering samples of either all consistent or all not consistent \TN{s}.}\label{Fig:Test2}
\end{figure}
Applying the worst-case analysis for these data sets, 
it results that the time complexity should be $O(W)$ since $n$, $m$ and $m'$ are constants.
On the contrary, the determined interpolation functions representing the experimental 
execution times do not show any clear dependence on $W$. This result suggests that, in practice, 
uniform random weighted instances are decided very quickly with respect to the magnitude of their weights and that the
algorithm does not seem to exhibit the worst-case pseudo-polynomial behavior predicted in the asymptotic analysis.
Moreover, the average execution times for each data set comprising only consistent \TN{s} 
are less than those for the corresponding data of only not consistent \TN{s}. 
Only for consistent \TN{s} data sets the standard deviation was below 7\% than the average execution time with a confidence of 99\%.
\medskip

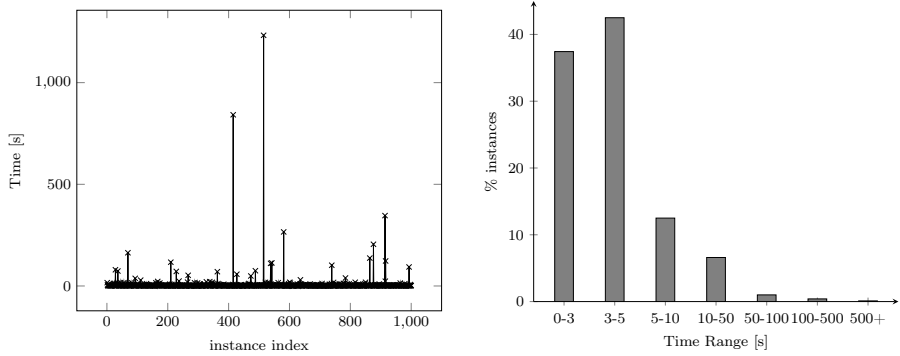
\begin{figure}[tb]
	\subfloat[Execution times of $10^3$ not consistent \TN{s} instances with $n=10^6$ and $W=10^3$.]{\label{Fig:BadInstanceDistribution}
		\begin{tikzpicture}[scale=0.70]
			\begin{axis}[legend pos=north west
				, 
				, xlabel={instance index}
				, scaled x ticks=false 
				, xlabel near ticks
				, ylabel={Time}
				, y unit=s
				, ylabel near ticks
				]
				\addplot[mark=x] table[x=x,y=y] {sun_distro.txt}; 
			\end{axis}
		\end{tikzpicture}
	}
	\quad
	\subfloat[Instance classification with respect to \texttt{solveMPG-Threshold()} execution time.]{\label{Fig:BadInstanceHistogram}
	\begin{tikzpicture}[scale=0.70]
	    \begin{axis}[
	      ymin=0,
	      ymax=45,
	      ybar=10pt,
	      ylabel={\% instances},
	      ylabel near ticks,
	      axis y line=left,
	      xlabel={Time Range},
	      x unit=s,
	      xtick=data,
	      symbolic x coords={0-3,3-5,5-10,10-50,50-100,100-500,500+},
	      axis x line=bottom,
	      xlabel near ticks,
	      xticklabel style={anchor=base,yshift=-\baselineskip},
	      bar width=10pt,
	      enlarge x limits=0.1
	    ]
	      \addplot[fill=gray] coordinates {
	        (0-3,37.43)
			(3-5,42.5)
			(5-10,12.5)
	        (10-50,6.6)
	        (50-100,1)
	        (100-500,0.4)
		(500+,0.1)
	      };
	    \end{axis}
	\end{tikzpicture}
	}
	\caption{\texttt{solveMPG-Threshold()} execution times obtained in Test~3
	determined considering samples of all not consistent \TN instances.}\label{Fig:Test3}
\end{figure}

In order to better understand how some execution times affect the value of the standard deviation, 
we conducted a third experiment, Test~3, with the goal to visualize the distribution of the instances with  
computation times significantly above the average.
Procedure \texttt{solveMPG-Threshold()} has been executed on 
$10^3$ randomly generated not consistent \TN{s}, each having order $n=10^6$ and $W\approx 10^3$.
The determined running times are depicted in \figref{Fig:BadInstanceDistribution}: 
most of the instances are decided very quickly, \ie in a time between $0$ and $10$ seconds, 
while a smaller portion of the \TN{s} required a time between $10$ and $500$ seconds. 
In more details, in repeated tests we verified that, approximately, $1\%$ of the \TN instances required 
a time between $50$ and $100$ seconds to be decided, $0.4\%$ required a time
between $100$ and $500$ seconds, and, finally, only $0.1\%$ required more than $500$ seconds. 
These results are shown in \figref{Fig:BadInstanceHistogram}.

Such behavior has been confirmed in other tests with different graph orders and maximum edge weight values.
In several experiments we conducted, we observed that the maximum execution time of the algorithm keeps growing as we enlarge the size of the dataset. 
This explains why the standard deviation can't be reduced. If we could characterize 
such hard instances in general, we would be making a major progress in understanding the computational complexity of \MPG{s}.
We didn't find any pattern or property that characterizes the found hard instance.
Here we can only show a simple family of \TN{s} instances in which the execution time grows linearly with $W$. 
The family is given by just one single \TN graph where only $W$ changes, as depicted in \figref{ex:HyTN_slow}.
The corresponding \MPG is shown in \figref{ex:MPG_slow} and provides a clear example where Brim's Value Iteration algorithm\cite{brim2011faster} 
performs poorly. It is worth noting that in the context of \MPG{s} this example can be reduced down to $6$ nodes.
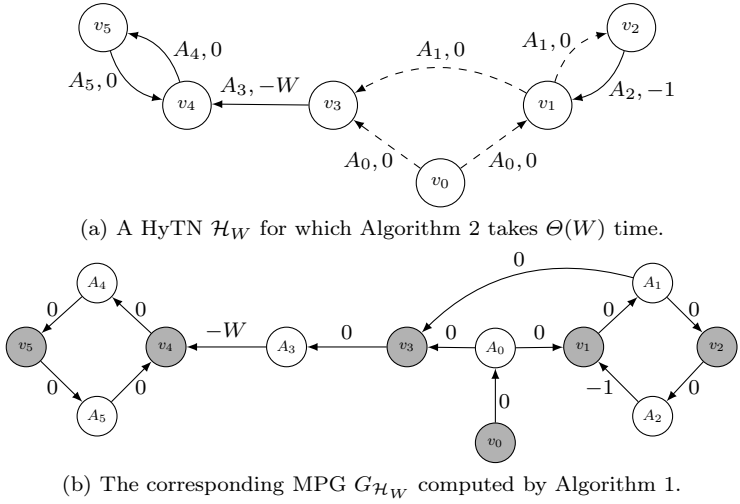
\begin{figure}[tb]
\centering
\subfloat[A \TN{} $\H_W$ for which Algorithm~\ref{ALG:PseudocodeDecisionTNC} takes $\Theta(W)$ time.]{\label{ex:HyTN_slow}
\begin{tikzpicture}[arrows=->, scale=0.85, node distance=2 and 1.5]
	\node[node] (A) {$v_5$};
	\node[node, right=of A, yshift=-10ex, xshift=-8ex] (B) {$v_4$};
	\node[node, right=of B] (C) {$v_3$};
	\node[node, right=of C, xshift=-5ex, yshift=-10ex] (D) {$v_0$};
	\node[node, right=of D, xshift=-5ex, yshift=10ex] (E) {$v_1$};
	\node[node, right=of E, yshift=10ex, xshift=-8ex] (F) {$v_2$};

	\draw[dashed] (D) to [bend right=0] node[below, xshift=2ex] {$A_0, 0$} (E);
	\draw[dashed] (D) to [bend right=0] node[below, xshift=-2ex] {$A_0, 0$} (C);
	\draw[dashed] (E) to [bend left=30] node[above, xshift=-3ex, yshift=-1ex] {$A_1, 0$} (F);
	\draw[dashed] (E) to [bend right=30] node[above] {$A_1, 0$} (C);
	\draw[] (F) to [bend left=30, xshift=2ex, yshift=1ex] node[below, xshift=4ex, yshift=1ex] {$A_2, -1$} (E);
	\draw[] (C) to [xshift=2ex, yshift=1ex] node[above] {$A_3,-W$} (B);
	\draw[] (B) to [bend right=30, xshift=0ex, yshift=0ex] node[above, xshift=4ex, yshift=-2ex] {$A_4, 0$} (A);
	\draw[] (A) to [bend right=30, xshift=-1ex, yshift=1ex] node[below, xshift=-4ex, yshift=2ex] {$A_5, 0$} (B);

\end{tikzpicture}\label{FIG:Slow_HyTN}
}\qquad
\subfloat[The corresponding \MPG $G_{\H_W}$ computed by Algorithm~\ref{ALGO:PseudocodeReduction-htn-mpg}.\label{ex:MPG_slow}]{
\begin{tikzpicture}[arrows=->, scale=0.7, node distance=2 and 1.5]
	\node[node] (A5) {$A_4$};
	\node[node, blackNode, right=of A5, yshift=-10ex, xshift=-8ex] (B) {$v_4$};
	\node[node, blackNode, left=of B, yshift=0ex, xshift=-3ex] (G) {$v_5$};
	\node[node, below=of A5, yshift=2ex, xshift=0ex] (A6) {$A_5$};
	\node[node, right=of B] (A4) {$A_3$};
	\node[node, blackNode, right=of A4] (C) {$v_3$};
	\node[node, blackNode, right=of C, xshift=-5ex, yshift=-15ex] (D) {$v_0$};
	\node[node, above=of D, yshift=-8ex] (A1) {$A_0$};
	\node[node, blackNode, right=of D, xshift=-5ex, yshift=15ex] (E) {$v_1$};
	\node[node, right=of E, yshift=10ex, xshift=-8ex] (A2) {$A_1$};
	\node[node, blackNode, right=of E, yshift=0ex, xshift=2ex] (F) {$v_2$};
	\node[node, below=of A2, yshift=2ex, xshift=0ex] (A3) {$A_2$};
	\draw[] (D) to [bend right=0] node[below, xshift=1ex, yshift=1ex] {$0$} (A1);
	\draw[] (A1) to [bend right=0] node[above] {$0$} (E);
	\draw[] (A1) to [bend right=0] node[above] {$0$} (C);
	\draw[] (E) to [bend right=0] node[above, xshift=-1ex, yshift=-1ex] {$0$} (A2);
	\draw[] (A2) to [bend right=0] node[above, xshift=1ex, yshift=-1ex] {$0$} (F);
	\draw[] (A2) to [bend right=30] node[above] {$0$} (C);
	\draw[] (F) to [bend right=0] node[below, yshift=1ex, xshift=1ex] {$0$} (A3);	
	\draw[] (A3) to [bend right=0] node[below, yshift=1ex, xshift=-2ex] {$-1$} (E);
	\draw[] (C) to [bend right=0] node[above] {$0$} (A4);	
	\draw[] (A4) to [bend right=0] node[above] {$-W$} (B);	
	\draw[] (B) to [bend right=0] node[above, xshift=1ex, yshift=-1ex] {$0$} (A5);	
	\draw[] (A5) to [bend right=0] node[above, xshift=-1ex, yshift=-1ex] {$0$} (G);	
	\draw[] (G) to [bend right=0] node[below, xshift=-1ex, yshift=1ex] {$0$} (A6);	
	\draw[] (A6) to [bend right=0] node[below, xshift=1ex, yshift=1ex] {$0$} (B);	
\end{tikzpicture}\label{FIG:Slow_MPG}
}
\caption{A \TN which requires 
$\Theta(W)$ computation time by Algorithm~\ref{ALG:PseudocodeDecisionTNC}.}\label{FIG:SlowHyTN}
\end{figure}
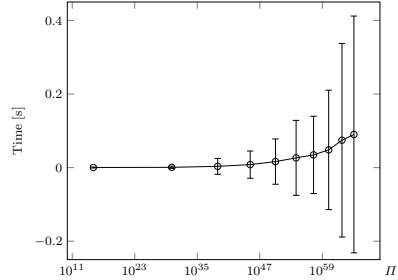
\begin{figure}[t!b]
\centering
	\subfloat[Average execution times obtained for different values of the product of the heads of hyperarcs $\Pi$.]{\label{Table:Test1Consistent}
		\begin{tabular}[b]{| r | r | r |}
			\hline
			 \multicolumn{1}{|c|}{$\Pi$}  & \multicolumn{1}{c|}{$\mu$ (sec)} & \multicolumn{1}{c|}{$\sigma$} \\
			\hline
			  $1.13\cdot 10^{15}$ &  $3.02\cdot 10^{-4}$  &  $1.36\cdot 10^{-3}$  \\
			  $1.27\cdot 10^{30}$ &  $7.78\cdot 10^{-4}$  &  $3.34\cdot 10^{-3}$  \\
			  $8.08\cdot 10^{38}$ &  $3.45\cdot 10^{-3}$  &  $2.15\cdot 10^{-2}$ \\
			  $1.43\cdot 10^{45}$ &  $8.13\cdot 10^{-3}$  &  $3.70\cdot 10^{-2}$  \\
			  $1.00\cdot 10^{50}$ &  $1.63 \cdot 10^{-2}$  &  $6.15\cdot 10^{-2}$  \\
                          $9.10\cdot 10^{53}$ &  $2.65\cdot 10^{-2}$  &  $1.02\cdot 10^{-1}$   \\
                          $2.02\cdot 10^{57}$ &  $3.46 \cdot 10^{-2}$  &  $1.05\cdot 10^{-1}$  \\ 
                          $1.61\cdot 10^{60}$ &  $4.82 \cdot 10^{-2}$  &  $1.62\cdot 10^{-1}$  \\
                          $5.80\cdot 10^{62}$ &  $7.44\cdot 10^{-2}$  &  $2.63\cdot 10^{-1}$   \\
                          $1.13\cdot 10^{65}$ &  $9.01\cdot 10^{-2}$  &  $3.21\cdot 10^{-1}$   \\
			\hline
			\end{tabular}
	}
	\qquad  
 	\subfloat[Interpolation of average execution times in \tabref{Table:Test1Consistent}.]{\label{Fig:Test1-classifiedConsistent}
		\begin{tikzpicture}[scale=0.6, domain=0:4]
		\begin{semilogxaxis}[
				legend pos=north west
				, ymin=-0.25, ymax=0.45
				, xlabel={$\Pi$}
				, xmode=log
				, ylabel={Time}
				, y unit=s
				, scaled x ticks=false
				, ylabel near ticks,
				, xlabel near ticks,
				, xlabel style={ at={(ticklabel cs:1)}, anchor=south west}
			]
			\addplot[mark=o, 
			error bars/.cd, 
			y dir=both, 
			y explicit,
			]
		      table[row sep=crcr, x=x, y=y, y error=yerr] {
		        x                y          yerr              \\ 
		      1.13E+15     0.000302         0.001365082805  \\
		      1.27E+30     0.000778         0.003338263814  \\
		      8.08E+38     0.003448915276   0.02158598964   \\
		      1.43E+45     0.008128432462   0.03702355521   \\
		      1.00E+50     0.01633870517    0.06150968792   \\
		      9.10E+53     0.02649881917    0.1018356509    \\
		      2.02E+57     0.03455488466    0.1050459339    \\
                      1.61E+60     0.04815697745    0.1622394937    \\
		      5.80E+62     0.07442923076    0.263059134     \\
                      1.13E+65     0.09007910558    0.321730001     \\ 
		}; 
		\end{semilogxaxis}
	\end{tikzpicture}
	}
\caption{Average Execution Times obtained in Test~4.}\label{Fig:TestProduct}
\end{figure}

Finally, in order to show how much the running time is 
dependent on the number of different positional strategies of one player, 
in Test~4 the average computation time has been calculated with respect to different values of the product of the heads
of hyperarcs (\ie $\prod_{A\in\A} |H_A|$) in a \TN. In particular, for each 
$\Pi\in \{10^{15}, 10^{30}, \ldots, 10^{65}\}$, $2500$ \TN{s} instances $(V,\A)$ each having $|V|=50$ nodes, $\prod_{A\in\A} |H_A|\approx \Pi$, and
$W=10^3$ have been generated by means of \texttt{randomgame} generator.
The results of the evaluation are depicted in \figref{Fig:TestProduct}, where $\Pi$ values are drawn in logarithmic scale. 
Analyzing the diagram in the figure it is possible to say that, experimentally, the average execution time increases only logarithmically
with respect to the number of different positional strategies of one player.
This results is quite interesting because, considering the \TN{} in \figref{ex:HyTN_slow}, it is evident that the time for checking a \TN{} is more dependent 
on the edge weight magnitude than on the number of different positional strategies of one player.

%

\section{Related Work}\label{sect:relatedwork}
In the literature there are some extension proposals of the \STN model to augment the capability to represent temporal constraints. 


In the \STN seminal paper~\cite{DechterMP91}, Dechter \etal firstly proposed to consider the Temporal Constraint Satisfaction Problem (TCSP). A binary
constraint in a TCSP is represented using a set of intervals rather than a single interval as in an \STN. In particular, a binary constraint $C_{ij}
=\{[a_1,b_1], [a_2,b_2],\ldots, [a_l,b_l]\}$ between time points $x_i$ and $x_j$ represents the disjunction $a_1 \leq x_j - x_i \leq b_1 \vee a_2 \leq x_j - x_i
\leq b_2 \vee a_l \leq x_j - x_i \leq b_l$.
The problem of verifying consistency of a TCSP is \NP-complete as the 
same authors showed in the paper; hence, they finally propose to consider \STN{s} as a
tractable simplified model.

A similar kind of generalization considering disjunction of temporal distance constraints was proposed by Stergiou and Koubarakis~\cite{StergiouK00} defining
the Disjunctive Temporal Problem (DTP).
A DTP consists of a set of variables $X=\{x_1, x_2,\ldots,x_n\}$ having continuous domains and representing time points and a set of disjunctive difference
constraints between the time points in the form: $a_1 \leq x_{i_1} -x_{j_1} \leq b_1 \vee  a_2 \leq x_{i_2} -x_{j_2} \leq b_2 \vee \ldots \vee a_{k} \leq
x_{i_k} -x_{j_k} \leq b_k$; where $x_{i_1},x_{j_1},\ldots, x_{i_k},x_{j_k}$ are time points from $X$ and $a_1,b_1,\ldots, a_k,b_k$ are real numbers.
A DTP is consistent if there exists an instantiation of variables X to real numbers satisfying all the constraints. Since DTPs are a generalization of TCSPs,
also for DTPs the consistency check problem is \NP-complete. 
In~\cite{StergiouK00} the authors presented some of the theoretical results characterizing the possible
backtracking algorithms that solve the consistency problem in terms of search nodes visited and consistency checks performed.

In 2005, Kumar proposed to consider a restricted class of DTP in order to maintain some of the expressive power of DTPs but, at the same time, allowing an
efficient consistency check. In particular, in~\cite{Kumar05}, RDTPs (restricted DTPs) is defined as a disjunctive temporal problem where a constraint is one of
the following types: (Type 1) $(l \leq x_i-x_j \leq u)$, (Type 2) $(l_1\leq x_i \leq u_1)\vee(l_2 \leq x_i \leq u_2)\ldots(l_{j} \leq x_i \leq  u_{j})$, (Type3)
$(l_1 \leq x_i \leq u_1)\vee(l_2 \leq x_j \leq u_2)$, where $x_i$ and $x_j$ represent a timepoint variable, and $l_i, u_i$ real values. An RDTP instance can be
solved in strongly polynomial-time deterministic algorithm transforming it into a binary Constraint Satisfiability Problem (CSP) over meta variables representing
constraints of Type 2 or Type 3 and, then, showing that such binary constraints are also \textit{connected row-convex (CRC)} constraints, and, then, exploiting
the properties of CRC constraints. An instantiation of a consistency check algorithm for RDTPs that further exploits the structure of CRC constraints has a
running time complexity of $O((|TP_2| + |TP_3|)^3d^2_{\max} + (|TP_2| + |TP_3|)^2(N M+ d^2_{\max}))$, where $TP_2$ is the set of Type 2 constraints, $TP_3$ is
the set of Type 3 ones, $d_{\max}$ is the maximum number of disjuncts in any constraint, and $N\slash M$ is the number of the nodes\slash arcs of the instance,
respectively.
In the same paper, Kumar presented also a simpler and faster, but \textit{randomized}, algorithm for the same class RDTP.

An attempt to model some aspects of \STN{s} similar to those addressed by \TN{s} was lead in~\cite{conf/flairs/BartakC07}, where fun-in and fun-out subgraphs
much resembling our multi-tail and multi-head hyperarcs were considered.
However, since the problem $1$-in-$3$-SAT is \NP-complete even when all the literals comprising the clauses are positive, it readily follows that their models
lead to \NP-complete problems even when fun-out subgraphs (or fun-in subgraphs) are banned. As such, the opportunity for tractability spotlighted in this paper
is missed in those models.

Another approach to extend \STN is represented by the proposal of Khatib \etal~\cite{KhatibMMR01,KhatibMMRSV07}.
They introduced the characterization of \textit{hard} and \textit{soft} constraints.
\STN{s} are able to model just hard temporal constraints, \ie they can represent instances where all constraints have to be satisfied, and that the solutions of
a constraint are all equally satisfying. However, such assumption can be too much restrictive in some real-life scenarios.
For example, it may be that some solutions are preferred with respect to others and, hence, the main problem is to find a way to satisfy them optimally,
according to the preferences specified.
To address these kind of problems, in~\cite{KhatibMMR01} the authors introduced a framework in which each temporal constraint is associated with a preference
function specifying the preference for each distance or duration; a \textit{soft} simple temporal constraint is a 4-tuple $\langle (X, Y), I, A, f\rangle$
consisting of (1) an ordered pair of variables $(X, Y)$ over the integers, called the scope of the constraint; (2) an interval $I =[a,b]$, where $a$ and $b$ are
integers such that $a \leq b$; (3) a set of preferences $A$; (4) a preference function $f$, where $f: [a,b] \mapsto A$ is a mapping of the elements belonging to
interval $I$ into preference values, taken from set $A$.
An assignment $v_x$ and $v_y$ to the variables $X$ and $Y$ is said to satisfy the constraint $\langle (X,Y), I, A, f\rangle$ if and only if $a\leq v_y -v_x$.
In such a case, the preference associated to the assignment by the constraint is $f(v_y-v_x)$. Using soft simple temporal constraint, a new model of temporal
constraint network has been introduce: the \textit{Simple Temporal Problem with Preferences (STPP)}. In general, each solution of a STPP has a global preference
value, obtained by combining in a suitable way  the preference levels at which the solution satisfies the constraints. The optimal solutions of an STPP are
those solutions which are not dominated by any other solution in terms of global preference.  It was shown in~\cite{KhatibMMR01} that, in general, STPPs belongs
to the class of \NP-hard problems. When the preference functions are semi-convex and some other side conditions are observed, then the problem to find an optimal
solutions of an STPP is tractable~\cite{KhatibMMRSV07}.

Finally, another kind of possible extension is represented by the use of $\neq$ operator instead of $\leq$ in the binary constrains of \STN{s}.
Koubarakis\cite{Koubarakis97} showed that if in a \STN temporal constraints are used together with disequations in the form $x - y \neq r$, where r is a real
constant, then the problem of deciding consistency is still tractable.
This extension does not allow the specification of alternative constraints but it is interesting because it allows to exclude some solutions maintaining the
consistency problem tractable.

\section{Conclusions and Future Work}%
\label{sect:conclusions}

In the literature, there are different frameworks and approaches aimed to extend the \STN model 
allowing the representation of disjunctive temporal constraints~\cite{DechterMP91,StergiouK00}, 
but at cost of an exponential-time consistency check procedure.  
The only extension with a polynomial time consistency check procedure
we are aware of is the one of Kumar~\cite{Kumar05} mentioned in Section~\ref{sect:relatedwork}.

In this paper, we proposed a novel extension, called Hyper Temporal Network (\TN), 
where it is possible to represent a new kind of disjunctive constraint, hyper constraint, 
and to check the consistency of a network in pseudo-polynomial time.
A hyper constraint is a suitable set of \STN distance constraints and it is satisfied if at least one distance constraint is satisfied.
There could be two kinds of hyperarc: multi-head and multi-tail.
In a multi-head hyperarc, its distance constraints are between a common source timepoint and different destination timepoints.
In a multi-tail hyperarc, its distance constraints are between different source timepoints and a common destination timepoint. 

A \TN is said consistent if it is possible to determine an assignment for all its timepoints such that all hyperarcs are satisfied.
The computational complexity of the consistency problem of a \TN is \NP-complete when instances contain both kinds of hyperarc.

On instances containing either only multi-tail hyperarcs, or only multi-head hyperarcs, 
the consistency problem can be solved by reducing it,
in a very efficient way, to the search of a winning strategy in an equivalent Mean Payoff Game (\MPG), 
and exploiting the known winning-strategy search algorithms for \MPG{s}.

Moreover, we presented an empirical analysis of the efficiency of the resulting consistency check algorithm.
The empirical analysis shows that the proposed algorithm can be effectively 
used in real cases and confirms the general robustness of our approach.

As future work we are investigating the frontier 
of practical efficient consistency checking for possible generalizations of the \TN model as, 
for example, those including contingent constraints~\cite{VidalF99} or conditional ones~\cite{TVP03}.


\bibliographystyle{spmpsci}


\bibliography{biblio}

\end{document}